	\definecolor{quantumviolet}{HTML}{53257F} 
\providecommand{\pra}{Phys.\ Rev.\ A}
\providecommand{\prl}{Phys.\ Rev.\ Lett.}
\definecolor{newblue}{RGB}{40,210,251}
\definecolor{lightgray}{RGB}{170,170,170}
\definecolor{darkyellow}{RGB}{255,210,70}
\definecolor{darkyellow2}{RGB}{251,184,38}
\definecolor{metalblue}{RGB}{78,156,219}
\definecolor{metalblue2}{RGB}{34,52,103}
\definecolor{pink}{RGB}{237,16,118}
\definecolor{pink2}{RGB}{131,28,71}
\definecolor{violet}{HTML}{53257F} 
\definecolor{violet2}{RGB}{61,18,100}
\definecolor{applegreen}{rgb}{0.55, 0.71, 0.0}
\definecolor{applegreen2}{rgb}{0.4, 0.8, 0.0}
\definecolor{DarkGray}{gray}{0.25} 
\definecolor{MidGray}{gray}{0.38} 
\definecolor{NeutralGray}{gray}{0.5}
\definecolor{LightGray}{gray}{0.8}
\definecolor{DarkRed}{rgb}{0.7,0,0}
\definecolor{DarkBlue}{rgb}{0,0,0.5}
\definecolor{SteelBlue}{rgb}{0,0.4,0.6}
\definecolor{Orange}{rgb}{0.7,0.5,0}
\definecolor{Violette}{rgb}{0.5,0,0.5}
\definecolor{Sand}{rgb}{0.84,0.8,0.55}
\definecolor{niceblue}{rgb}{0.33,0.5,0.8}
\definecolor{OliveGreen}{RGB}{0,102,102}
\definecolor{NiceGreen}{RGB}{0,153,72}
\colorlet{tensorcol}{LightGray}
\colorlet{tensorcolborder}{DarkGray}
\definecolor{changecol}{rgb}{0.7,0,0}
\colorlet{tensorcol1}{applegreen}
\colorlet{tensorcol1border}{DarkGray}
\colorlet{tensorcol2}{metalblue}
\colorlet{tensorcol2border}{metalblue2}
\newtheorem{theorem}{Theorem}
\newtheorem{lemma}[theorem]{Lemma}
\newtheorem{proposition}[theorem]{Proposition}
\newtheorem{definition}[theorem]{Definition}
\newtheorem{problem}[theorem]{Problem}
\newtheorem*{problem*}{Problem}
\newcommand{\myleft}{\mathopen{}\mathclose\bgroup\left}
\newcommand{\myright}{\aftergroup\egroup\right}
\newcommand{\e}{\ensuremath\mathrm{e}}
\renewcommand{\i}{\ensuremath\mathrm{i}}
\DeclareMathOperator{\Tr}{Tr}
\renewcommand{\Re}{\operatorname{Re}}
\DeclareMathOperator{\rank}{rank}
\DeclareMathOperator{\Id}{Id}
\DeclareMathOperator*{\argmin}{arg\,min}
\DeclareMathOperator{\supp}{supp}
\DeclareMathOperator{\diag}{diag}
\newcommand{\fro}{\mathrm{F}}
\renewcommand{\fro}{F}
\DeclareMathOperator\tr{Tr}
\newcommand\complexity@possiblymakesmaller[1]{#1} 
\newcommand\complexity@fontcommand{\mathsf} 
\newcommand{\ComplexityFont}[1]{%
{\ensuremath{\complexity@possiblymakesmaller{\complexity@fontcommand{#1}}}}
}
\newcommand{\NP}{\ComplexityFont{NP}}
\newcommand{\Poly}{\ComplexityFont{P}}
\newcommand{\CC}{\mathbb{C}}
\newcommand{\RR}{\mathbb{R}}
\newcommand{\NN}{\mathbb{N}}
\newcommand{\EE}{\mathbb{E}}
\newcommand{\PP}{\mathbb{P}}
\newcommand{\mc}[1]{\mathcal{#1}}
\renewcommand{\vec}[1]{\mathbf{#1}}
\newcommand{\ad}{^\dagger}
\newcommand{\norm}[1]{\left\Vert #1 \right\Vert} 
\newcommand{\snorm}[1]{\norm{#1}_\infty} 
\newcommand{\fnorm}[1]{\norm{#1}_\fro} 
\newcommand{\lTwoNorm}[1]{\norm{#1}_{\ell_2}} 
\newcommand{\ket}[1]{\left.\left|{#1}\right.\right\rangle}
\newcommand{\bra}[1]{\left.\left\langle{#1}\right.\right|}
\newcommand{\ketbra}[2]{\ket{#1} \!\! \bra{#2}}
\renewcommand{\Pr}{\operatorname{\PP}} 
\newcommand{\maximize}{\mathrm{maximize }}
\newcommand{\mat}[1]{#1}
\newcommand{\scrsym}{{\Yboxdim{4pt}\,\yng(2)}}
\newcommand{\hzb}{Helmholtz-Zentrum Berlin f\"ur Materialien und Energie, 
Germany}
\newcommand{\fu}{Dahlem Center for Complex Quantum Systems, Freie Universit\"{a}t Berlin, Germany}
\newcommand{\abu}{Quantum Research Center, Technology Innovation Institute (TII), Abu Dhabi, UAE}
\newcommand{\quics}{Joint Center for Quantum Information and Computer Science (QuICS), University of Maryland/NIST,
USA}
\begin{document}

\title{Semi-device-dependent blind quantum tomography}

\author{Ingo Roth}
\email[Corresponding author: ]{ingo.roth@tii.ae}
\affiliation{\abu}
\affiliation{\fu}

\author{Jadwiga Wilkens}
\affiliation{\abu}
\affiliation{\fu}

\author{Dominik Hangleiter}
\affiliation{\quics}
\affiliation{\fu}

\author{Jens Eisert}
\affiliation{\fu}
\affiliation{\hzb}

\hypersetup{bookmarksdepth=-2} 

\begin{abstract}
Extracting tomographic information about quantum states is a crucial task in the quest towards devising high-precision quantum devices. Current schemes typically require measurement devices for tomography that are a priori calibrated to 
high precision. Ironically, the accuracy of the measurement calibration is fundamentally limited by the accuracy of state preparation, establishing a vicious cycle. Here, we prove that this cycle can be broken and the dependence on the measurement device's calibration significantly relaxed. We show that exploiting the natural low-rank structure of quantum states of interest suffices to arrive at a highly scalable `blind' tomography scheme with a classically efficient post-processing algorithm. We further improve the efficiency of our scheme by making use of the sparse structure of the calibrations. 
This is achieved by relaxing the blind quantum tomography problem to 
the de-mixing of a sparse sum of low-rank matrices. 
We prove that the proposed algorithm recovers a low-rank quantum state and the calibration provided that the measurement model exhibits a restricted isometry property.
For generic measurements, %
we show that 
it 
requires a close-to-optimal number of measurement settings.
Complementing these 
conceptual and mathematical insights, we numerically demonstrate that robust blind quantum tomography is possible 
in a practical setting inspired by an implementation of trapped ions. 
\end{abstract}


\maketitle

\setcounter{page}{1}
\hypersetup{bookmarksdepth}
\tableofcontents

\section{Introduction}
The development of quantum technologies is arguably one of the most vivid scientific endeavours of current times. 
This development is faced with a daunting challenge: To achieve the promising advantages of those technologies one must engineer individual quantum components with an enormous precision. The main limiting factors 
in implementing the many existing theoretical proposals for exciting applications in quantum computing today are the achievable noise levels and scalability of the components. 

From an engineering perspective, improving such \emph{noisy intermediate scale quantum (NISQ) devices} 
\cite{Preskill:2018,Roadmap} requires flexible \emph{diagnostic techniques} 
to extract actionable advice on how to improve the device in the engineering cycle. 
Such diagnostic schemes must meet tight practical constraints in terms of their complexity as well as the required accuracy of the used devices.
One of the most basic diagnostic tasks is the extraction of tomographic information about quantum states from experimentally measured data. 
Indeed, at the heart of every quantum computation is the preparation of a quantum state. 
Quantum state tomography can therefore provide valuable information for improving quantum devices beyond a mere benchmarking of their correct functioning \cite{EisertEtAl:2019}. 

However, in any such endeavour one encounters the following fundamental challenge: In order to arrive at an accurate state estimate, most tomography schemes rely on measurement devices that are calibrated to a very high precision. 
At the same time, a precise and detailed characterization of a measurement device requires an accurate state preparation. 
But improving the accuracy of the state preparation using tomographic information was our goal to begin with.
We are trapped in a vicious cycle.
This vicious cycle, depicted in Figure~\ref{fig:viciouscycle}, constitutes a fundamental obstacle to the improvement of quantum devices. 

{
Using various assumptions and models that are motivated by the specific underlying physical platform, quantum devices are routinely calibrated in a bottom-up fashion, building trust in the individual components such as steps of ground state preparation, read-out, and individual gate pulses one at a time. However, such methods are ultimately limited by the vicious cycle. 
State-of-the-art quantum computing experiments in addition employ the feedback from self-consistent
characterization methods that are robust to \emph{state-preparation and measurement} (SPAM) errors such as (linear) cross-entropy benchmarking \cite{BoiIsaSme16,SupremacyReview}, other variants of randomized benchmarking \cite{EmersonAlickiZyczkowski:2005,KniLeiRei08,MagesanGambettaEmerson:2011,HelsenEtAl:2020:General} or gate-set tomography {\cite{MerGamSmo13,BluKinNie13}} to refine the device calibration heuristically.  
Notably, these approaches are closely tied to modelling a quantum \emph{computing} device in terms of structured gate sets in addition to state preparation and measurement, and moreover {require performing sequences of multiple gates}. 
Such models are distinct from our abstraction in terms of an unknown state-preparation and an uncalibrated device. 
}

\begin{figure}
\centering
	\includegraphics[width=.5\textwidth]{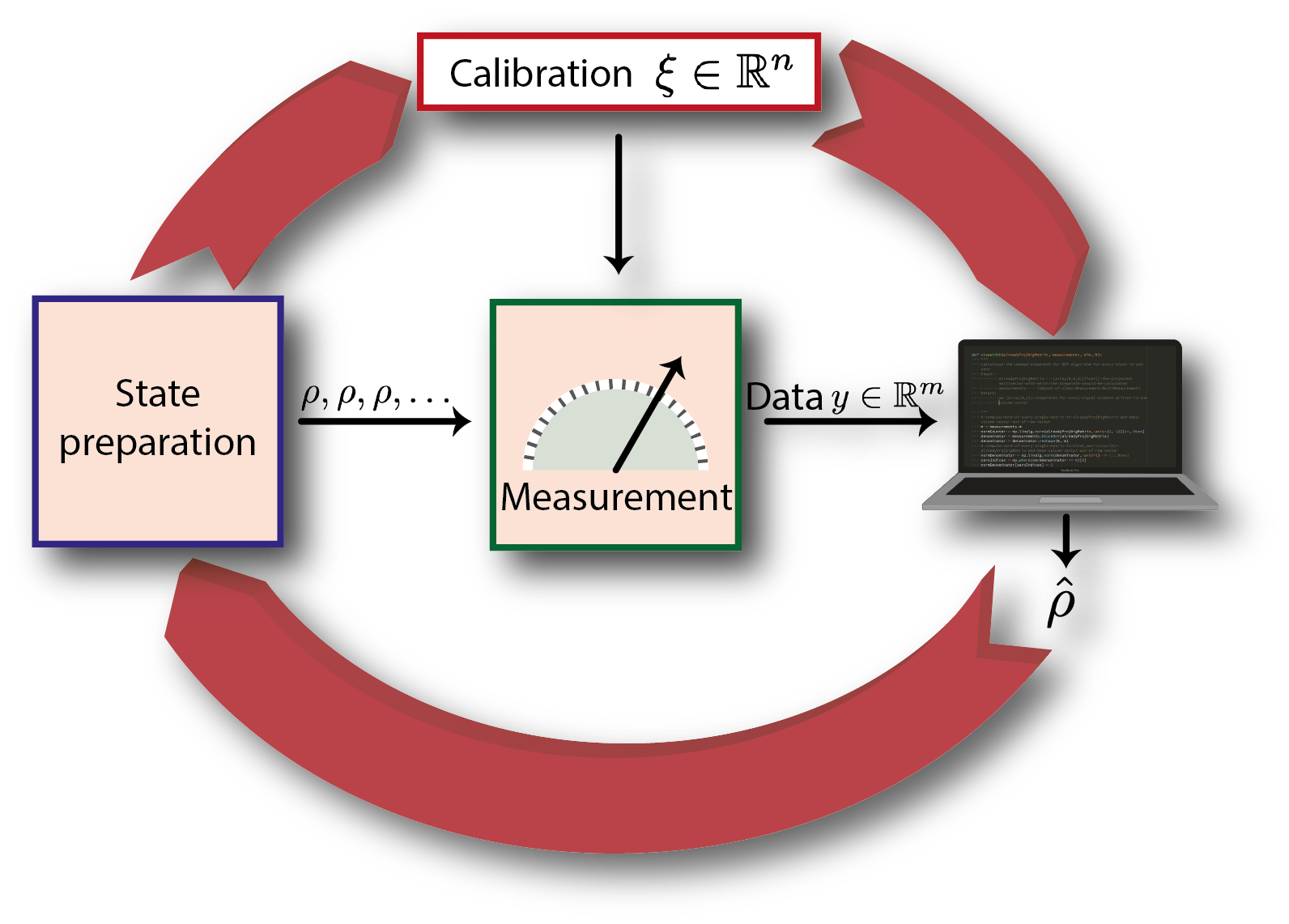}
	\vspace{-1cm}
	\caption{
	\label{fig:viciouscycle} 
	In the quest to engineer high fidelity quantum technologies one encounters a vicious cycle: 
	Extracting \emph{actionable advice} to correct for error in the state preparation requires accurate quantum state estimation. 
	The accuracy of a state estimate crucially relies on the precise calibration of the measurement device. 
	But the calibration can ultimately only be tested and improved if high fidelity quantum states are provided. 
	 }
\end{figure}

{
An important conceptual question with immediate practical relevance is therefore:}
Is there any hope to {directly} break the vicious cycle? 
In other words, can one perform quantum state tomography \emph{blindly}, that is, without full knowledge of the measurement to begin with? 
More specifically, can one simultaneously infer a quantum state and learn certain unknown calibration parameters of the measurement device in a \emph{self-calibrating tomography scheme}~\cite{BranczykEtAl:2012}? 
A simple parameter count indicates that this is typically impossible by just measuring a set of mutually orthogonal observables: 
While an arbitrary quantum state in a $d$-dimensional Hilbert space is characterized by $d^2 -1$ many real parameters, at the same time, 
the number of linearly independent measurements in this space implies that we can learn at most $d^2$ independent parameters. 
This leaves room for a single additional (calibration) parameter only and
prohibits even slightly relaxing the requirement of a complete and accurate characterization of the measurement device towards a partially uncalibrated device.
Tomography of an arbitrary quantum state is therefore typically intrinsically measurement \emph{device-dependent} in this sense. 

In this work, we break this vicious cycle. 
We observe that the above naive parameter count misses the 
point that in reasonably controlled quantum devices,
commonly encountered quantum states exhibit a natural structure: 
they are close to being pure. 
We leverage this natural property to prove that one can simultaneously learn an unknown 
calibration and a low-rank quantum state. We thus arrive at what we coin a \emph{semi-device-dependent} scheme in which the dependence on the measurement apparatus is significantly softened.

In order to achieve this goal, we formulate the \emph{blind tomography problem} as the recovery task of a highly structured signal.
This allows us to exploit and further develop a powerful formal machinery from modern signal processing 
to propose 
a scalable self-calibrating state tomography scheme that comes with mathematically guarantees. 
We use a general model of the measurement device which applies in a variety of relevant experiments:  
we model the measurements as depending linearly on the unknown parameters of the possible calibration errors. 
Indeed, in many situations the daunting uncertainty about the device calibration is small and can be approximated as a linear deviation from an empirically known calibration baseline. 

Our scheme makes a trade-off between the dependence on the measurement device and the state preparation device explicit and allow to optimally exploit this dependence in a practical scheme. 
It is an intriguing feature of our results that while structural assumptions on the quantum state to be learned typically allow for \emph{more efficient} solutions~\cite{GrossLiuFlammia:2010,FlammiaGrossLiu:2012,KalevKosutDeutsch:2015,RiofrioEtAl:2017,SteffensEtAl:2017}, here, structural assumptions allows one to solve a task in settings where it \emph{could not be solved at all} in the absence of this assumption.

Going further, we exploit yet another structure to significantly extend the realm of applicability and efficiency of our scheme, namely the \emph{sparsity} of the calibration. 
Physically, this structural property amounts to the assumption that only a small number out of the many possible calibration errors has occurred in the specific experiment. 
In our scheme we therefore simultaneously exploit the low-rank structure of the quantum state and sparsity of the calibration coefficients to 
overcome the vicious tomography cycle and provide rigorous guarantees with a favourable scaling in terms of both the system dimension and the number of calibration errors. 

\subsection{Provable blind tomography via sparse de-mixing}

Let us be slightly more formal in order to give an overview over 
{methods used, and technical contributions made in this work.}
In mathematical terms, the blind tomography task that we solve is to infer a 
 vector $\xi$ of $n$ 
calibration parameters and a rank-$r$ quantum state $\rho$ from data of the form  
\begin{equation}\label{eq:intro_data_blindtomography}
	y = \mc B_\xi(\rho) = \mc A(\xi \otimes \rho) 
\end{equation}
where $\mc B: \xi, \rho \mapsto \mc B_\xi(\rho)$ is a bi-linear map describing the measurement model. 
The measured data $y$ might for example be estimates for the expectation values of observables or probabilities of POVM elements. 
For the time being, we ignore the error of the estimates induced by finite statistics. 
It is convenient to regard the data as associated to a structured linear estimation problem: 
we can equivalently model the measurement map as a linear map $\mathcal A$ acting on $\xi \otimes \rho$. 

Such structured linear inverse problems are studied in the mathematical discipline of model-based 
\emph{compressed sensing} \cite{BaraniukCevherDuarteHegde:2010, FoucartRauhut:2013}, where efficient algorithms with analytical performance guarantees have been developed.
A work horse of compressed sensing that most rapidly solve the relevant inverse problems are 
so-called \emph{iterative hard-thresholding (IHT)} 
algorithms \cite{BlumensathDavies:2008}. 
{In this work, we will use this general algorithmic paradigm, study the novel thresholding operations that arise in our context and prove new recovery guarantees.}

As a first result of this work, we establish that the key step of an IHT algorithm that solves the 
blind tomography problem is \NP{}-hard. 
To overcome this obstacle, we propose an IHT algorithm that solves a slightly relaxed version of the blind tomography problem: 
the task of de-mixing a
 sum of $n$ different low-rank quantum states $\rho_i$, i.e.,\ data of the form
\begin{equation}\label{eq:intro_data_sparsede-mixing}
	y = \mathcal{A}\left(\sum_{i=1}^n \xi_i e_i \otimes \rho_i\right),
\end{equation}
where $\{e_i\}_{i=1}^n$ denotes the standard orthonormal basis. 
An efficient IHT algorithm for the de-mixing problem of low-rank matrices was developed and analysed in Ref.~\cite{StrohmerWei:2017}. 
This algorithm can be readily adapted to our problem. 

But relaxing the blind tomography problem to the de-mixing problem artificially introduces an overhead in the number of unknown degrees of freedom of the problem scaling as $2drn$, and in particular linearly with the number of calibration parameters in the model. 
This leads to an unfavourable situation in a two-fold manner:
First, determining many calibration parameters also requires many measurement settings as the cost per calibration parameter scales with the dimension $d$ of the quantum system. 
Second, a necessary condition for a well-posed blind de-mixing problem of rank-$r$ with a maximal number of $d^2$ linearly independent measurements of the form \eqref{eq:intro_data_sparsede-mixing} is that there are more linearly independent measurements than real parameters, i.e., $2 r n d \leq d^2$. 
This means that the simultaneous determination of a certain number of calibration parameters $n$ can in principle only work for sufficiently large system dimension $d$ in many situations.
This causes severe constraints in the achievable self-calibration for small system sizes. 

We argue that an additional well-motivated structural assumption can render the blind tomography much more broadly applicable. 
This structure is exploited in our new {\emph{sparse-demixing thresholding} (SDT) algorithm.} 
Our argument is based on the observation that the problem of determining an accurate estimate of the quantum state in the blind setting involves solving two distinct sub-problems: 
first, one needs to determine which ones of many potential error models of the measurement contribute. 
Second, one needs to estimate the calibration parameters of these models.
Generically, there are many potential models that parametrize, for instance, the deviation of every imperfect implementation of a fixed measurement setting from its ideal implementation. 

In this case, the first problem becomes combinatorially costly since many distinct measurement settings need to be simultaneously calibrated.
In contrast, in our approach, it is straightforward to solve both tasks simultaneously and even avoid a combinatorial overhead using the built-in relaxations of compressed sensing. 
To this end, we observe that allowing for many potential errors with associated calibration parameters only a small number $s$  of which contribute amounts to assuming that the calibration vector $\xi$ is $s$-sparse, i.e.,\ it has only $s$ non-vanishing entries. 
Of course, we do not assume that we know the support of the vector $\xi$. 
This falls naturally into the framework of structured signal recovery. 
To summarize: we observe data generated by linear measurements acting on $\xi \otimes \rho$ where $\xi$ is an $s$-sparse vector and $\rho$ is a rank $r$ quantum state. 

We are now faced with the recovery problem of de-mixing a \emph{sparse sum} of different \emph{low-rank quantum states}. 
We show that the projection onto this structure can be efficiently calculated using hierarchical thresholding \cite{RothEtAl:2016} and therefore circumvents our \NP{}-hardness result. 
We derive the corresponding iterative hard-thresholding algorithm and prove that it successfully recovers the states $\rho_i$ and the sparse vector $\xi$ provided that the measurement map $\mc A$ acts isometrically on sparse sums of low-rank states. 
We further show that generic measurement ensembles with $m$ different measurement settings exhibit this restricted isometry property provided that $m$ scales at least as $srd+s\log n$.
Thus, we find that our algorithm solves the blind tomography problem with an overhead in the required number of measurements that scales linearly in $s$  as compared to the number of degrees of freedom in the problem given by $rd + s$. 
In particular, the number of potential calibration models $n$ enters only logarithmically in the measurement complexity of the scheme. 
This renders the scheme highly scalable in $n$ providing flexibility in the modelling of systematic measurement errors or calibration corrections.
Furthermore, it leaves sufficiently many linearly independent parameters to allow one to infer a couple of calibration parameters already for comparably small system sizes.
We demonstrate the performance of the algorithm for the physically relevant case of measuring Pauli operators that are locally mixed with the unknown calibration parameters. 
{Our results do not only answer a practically-inspired, conceptional question at hand in the context of blind quantum tomography, but at the same time contribute to the mathematical framework of compressed sensing as such and have potential application in other engineering disciplines.}

\subsection{Practical blind tomography}

Going beyond working out the theoretical guarantees, we numerically demonstrate the functioning of the scheme and the mindset behind it. 
Specifically, we show that the iterative hard-thresholding algorithm solves the blind tomography problem from much fewer samples than competing methods from generic (Gaussian) measurements as well as sub-sampled random Pauli measurements. 
We then take the theoretical model to the practical testbed and turn to a realistic model of measurement errors given by a coherent over-rotation along some axis. 
Those measurements have significantly more structure. 
We observe that the measurement structure together with the sparsity constraints causes the SDT algorithm to frequently 
get stuck at objective variables with an incorrect support. 
For this reason, we also study the performance of a more pragmatically minded optimization strategy, namely,
constrained alternating minimization that does not require the relaxation to the de-mixing problem. 
We numerically demonstrate that the blind tomography problem in a realistic setting can be solved using this adapted algorithmic approach. 
Thereby, we show that exploiting the low-rank structures of quantum states allows for performing tomography blindly in realistic calibration and measurement models. 
{These findings may serve as a strong motivation and invitation to 
translate our approach to 
a variety of concrete experimental settings that are practically relevant in the quantum technologies. 
The main theoretical prerequisite is to identify plausible linear (or simple non-linear) calibration models for concrete measurement implementations.  
However, it is a formidable experimental task in itself to identify an application and demonstrate a concrete  advantage of performing state tomography blindly compared to using standard bottom-up protocols, other semi-device-dependent robust calibration approaches or error mitigation techniques. 
}

\subsection{Related work and applications in signal processing}
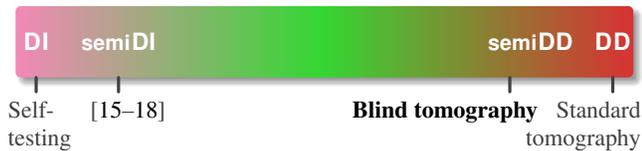
\begin{figure}
\centering
	\begin{tikzpicture}[scale = .95]
		\path [rounded corners=2pt, 
			draw =  none,
			left color = pink!50,
			right color = Red3!80,
			middle color = Green3!80,
			blur shadow={shadow blur steps=20}
			] (0,0) rectangle  (1.01\linewidth,1);
		\def\DDlabel{D\kern .5pt D}
		\def\DIlabel{D\kern .5pt I}
		\def\sDDlabel{{\footnotesize s\kern .1pt e\kern .1pt m\kern .1pt i}\kern .5pt D\kern .1pt D} 
		\def\sDIlabel{{\footnotesize s\kern .1pt e\kern .1pt m\kern .1pt i}\kern .5pt D\kern .1pt I}
		
		\def\di{\linewidth/29};
		\def\sdi{\linewidth/5};
		\def\sdd{4 * \linewidth/5};
		\def\dd{28 * \linewidth/29};
		\def\bt{3.74 * \linewidth/5}
		\def\sdim{\linewidth/5}
		\node[color=white, font=\bfseries\sffamily] at (\di,.5) {\DIlabel}; 
		\node[color=white, font=\bfseries\sffamily] at (\dd,.5) {\DDlabel}; 
		\node[color=white, font=\bfseries\sffamily] at (\sdi,.5) {\sDIlabel}; 
		\node[color=white, font=\bfseries\sffamily] at (\sdd,.5) {\sDDlabel}; 

		\draw [DarkGray, very thick, line cap = round] (\di,0) -- node [at end,below, anchor = north west, xshift = -.5cm] {\begin{minipage}{.2\linewidth}
		\begin{flushleft}
			\footnotesize \sf Self-\\testing
		\end{flushleft}
		\end{minipage}
		}(\di, -.2) ;

		\draw [DarkGray, very thick, line cap = round] (\sdim,0) -- node [at end,below, anchor = north west,  xshift = -.4cm] {
		\begin{minipage}{.25\linewidth}
		\begin{flushleft}
		\footnotesize \sf \cite{pawlowski_semi-device-independent_2011, liang_semi-device-independent_2011, li_semi-device-independent_2011, li_semi-device-independent_2012}
		\end{flushleft}
		\end{minipage}
		}(\sdim, -.2) ;
		\draw [DarkGray, very thick, line cap = round] (\bt,0) -- node [at end,below,black,anchor = north east, xshift = .4cm] {
		\begin{minipage}{.3\linewidth}
		\begin{flushright}
		\footnotesize \sf	 \textbf{Blind tomography}
		\end{flushright}
		\end{minipage}}(\bt, -.2) ;

		\draw [DarkGray, very thick, line cap = round] (\dd,0) -- node [at end,below,anchor = north east,xshift = .5cm] {\begin{minipage}{.2\linewidth}
		\begin{flushright}
			\footnotesize \sf Standard \\tomography
		\end{flushright}
		\end{minipage}}(\dd, -.2) ;

	\end{tikzpicture}
	\caption{
	\label{fig:semiDD} 
	Illustration of the spectrum between fully \emph{device-independent} (DI) and fully \emph{device-dependent} 
	(DD) quantum system characterization methods such as self-testing and standard tomography, respectively. 
	\emph{Semi-device-independent} (semi-DI) methods relax the stringent requirements of full device-independence. 
	Self-calibrating tomography relaxes the assumptions on the calibration of the measurement 
	device and therefore exemplifies a \emph{semi-device-dependent} (semi-DD) scheme. 
	The blind tomography scheme presented here is an example of such a semi-device-dependent scheme.
	 }
\end{figure}

In our semi-device-dependent, self-calibrating scheme we aim at softening the requirements of fully device-dependent schemes that crucially rely on perfect measurement apparata. 
Coming from the opposite end, in quantum communications introducing mild assumptions such as bounds on the system dimension \cite{gallego_device-independent_2010}, one can weaken the impractical stringency of full device independence to semi-device independence~\cite{pawlowski_semi-device-independent_2011, liang_semi-device-independent_2011, li_semi-device-independent_2011, li_semi-device-independent_2012}.
Device-independent and device-dependent approaches can, thus, be seen as the extreme ends of an axis that quantifies the amount of assumptions on the measurement device, 
see Figure \ref{fig:semiDD} for an illustration. Hand in hand with reducing the amount of assumptions and gaining robustness to imperfections, the amount of novel information that can be gained is dramatically reduced. 
Semi-device-independent schemes move away from the requirements of full device-independence towards more practical settings but are still extremely demanding in terms of the required resources and acceptance criteria.
Our semi-device-dependent tomography scheme lies in the opposite regime. It slightly relaxes the assumptions on the precision of the measurement apparatus but still extracts tomographic information.

Self-calibrating tomography schemes have been previously proposed in specific contexts using different methods and assumptions as a leverage to break the vicious blind tomography cycle. 
In Ref.~\cite{Mogilevtsev:2010} it has been argued that single photon detectors can be simultaneously calibrated during state tomography under the assumptions that the state is squeezed extending the mindset of Ref.~\cite{MogilevtsevEtAl:2009}; see also Ref.~\cite{MogilevtsevEtAl:2012} for a more extensive discussion of potential classes of states and the recent Ref.~\cite{sim_proper_2019} for error bars in this context. 
Ref.~\cite{BranczykEtAl:2012} has reported experimental demonstration of simultaneously reconstructing a quantum state together with certain unknown unitary rotations associated to the measurement device via maximum likelihood estimation in a linear optics setting. Here, the term `self-calibrating tomography' was coined. 
From a practical perspective, this work is perhaps closest in mindset to the current work. 
Complementing and going significantly beyond this work, here we prove that such a self-calibrating approach works under very mild and natural structural assumptions and give rigorous guarantees. 

Another general-purpose framework is the \emph{Gram matrix completion} 
proposed in Refs.~\cite{Stark:2012, Stark:2014}. 
Here, a correlation matrix encoding information about the measurement, the state and the measured data is completed from a subset of known indices. 
So-called data-pattern tomography \cite{rehacek_operational_2010} avoids the calibration of the measurement device by comparing the data to previously determined signatures of well-controlled reference states such as coherent states \cite{motka_efficient_2014}.
More conceptually speaking, schemes incorporating model selection for quantum state tomography can be also viewed as self-calibrating~\cite{ferrie_quantum_2014}.

Another set of approaches focuses on characterizing entire gate sets or their respective unitary errors self-consistently from the observed statistics when applying different sequences \cite{MerGamSmo13, BluKinNie13, Gre15, BluGamNie17, cerfontaine_self-consistent_2019,BriegerRothKliesch:2023:cGST}.
These methods typically rely on a certain design of the measurement sequences and cost-intensive classical post-processing. 
In contrast, our model of the measurement devices as linearly depending on a set of calibration parameters 
is much simpler and requires far less resources.

Our work builds-on and further develops compressed-sensing techniques for the tomography of quantum devices. 
Previous compressed-sensing schemes for quantum tomography reduce the effort in the data acquisition while still ensuring an efficient classical post-processing 
\cite{GrossLiuFlammia:2010,Gross:2011,Liu:2011,KalevKosutDeutsch:2015,Kueng:2015,KabanavaEtAl:2016}. 
These schemes come with theoretical guarantees and have as well been successfully employed in experiments \cite{SteffensEtAl:2017,RiofrioEtAl:2017,ShabaniEtAl:2011}. 
The practical applicability of compressed sensing tomography schemes rests on their robustness and stability against various imperfections of the experimental setup. 
Small deviations from the compressive model assumption and additive errors to the measurement outcomes, 
e.g.\ induced by finite statistics, 
are reflected in a proportional and only slightly enhanced estimation error.
Still, the schemes rely on measurement devices that are calibrated to very high precision with the notable exception of compressive tomography schemes for quantum processes that use randomized benchmarking data \cite{KimLiu15, RothEtAl:2018}. 
Here, we relax this requirement using a semi-device-dependent approach. 
In distinction, in the previous schemes low-rank assumptions were considered to reduce the complexity of a tomography scheme, giving 
rise to an important \emph{quantitative improvement}. 
Here, those assumptions are expected to often make blind tomography possible in the first place and therefore permit even a \emph{qualitative improvement} over the known schemes.

Recovery problems of the form \eqref{eq:intro_data_blindtomography} or the related de-mixing problem \eqref{eq:intro_data_sparsede-mixing} also arise in other disciplines. 
For example, these problems appear in future mobile communication scenarios with the promise to yield much more scalable protocols with respect to the number of served devices \cite{WunderBocheStrohmerJung:2015}. 
More specifically, our work can be applied in order to extend the \emph{internet-of-things setup} described in Ref.~\cite{StrohmerWei:2017} in case one additionally wants to exploit the sporadic (sparse) user activity of machine-type messaging.
Furthermore, our work identifies yet another set of hierarchical signal structures that allow for an efficient projection: 
It extends the work on compressed sensing with hierarchically sparse signals of a subset of the authors to low-rank matrices \cite{RothEtAl:iTwist:2016,RothEtAl:2016}.

The remainder of this work is organized as follows. 
In the subsequent Section~\ref{sec:application}, we give a detailed description of a concrete experimental setup 
that motivates our mathematical formulation of the blind tomography problem. 
In Section~\ref{sec:setup}, we provide the formal definitions of the blind tomography problem and introduce the notation used in the subsequent parts of the work. The details of the sparse demixing algorithm and its variant based on alternating optimization are derived in Section~\ref{sec:algorithm}. 
On the way, we establish the \NP{}-hardness of the projection associated to the original blind tomography problem. 
The theorems guaranteeing the performance of the sparse demixing algorithm are explained in Section~\ref{sec:guarantees}. 
The corresponding proofs are given in the appendix. 
Finally, numerical simulations of the algorithms performance 
and its application to practical use cases
are shown in Section~\ref{sec:numerics} before we conclude with an outlook in Section~\ref{sec:outlook}.  

\section{Quantum state tomography with imperfect Pauli correlation measurements}\label{sec:application}

So far, our description of the measurement scheme has been fairly abstract. 
In the following, we describe a concrete scenario in which our formalism applies. 
Consider an ion trap experiment preparing a multi-qubit quantum state $\rho$. 
We perform Pauli correlation measurements, i.e., we estimate $m$ expectation values of the form 
\begin{equation}\label{eq:targetSO}
	\mathcal A_0(\rho)^{(k)} = \tr\left[\rho \left(W^{(k)}_1 \otimes W^{(k)}_2 \otimes \cdots \otimes W^{(k)}_l\right)\right],
\end{equation} 
where $W^{(k)}_j \in \{X, Y, Z, \Id\}$ is a Pauli matrix acting on the $j$th qubit and $k \in [m] \coloneqq \{1, 2, \ldots, m\}$. 
We refer to $\mathcal A_0: \CC^{d\times d} \to \RR^m$ as the measurement map or sampling operator~\footnote{
Note that one might actually implement projective measurements in the multi-qubit Pauli basis as done, e.g., in Refs.~\cite{RiofrioEtAl:2017,SteffensEtAl:2017}. 
While such projective measurements contain more information than the Pauli correlation measurement, we restrict ourselves to Pauli expectation values both for the sake of simplicity and to remain in a setting for which theoretical guarantees can be proven \cite{GrossLiuFlammia:2010,Liu:2011}.}. 

In many experimental setups, it is natural to implement measurements of a certain Pauli observable -- in the case of ion traps Pauli $Z$ -- while the other Pauli observables require more effort. 
A measurement of any other Pauli observable -- in the case of ion traps Pauli $X$ and Pauli $Y$ -- can then be implemented by applying a suitable sequence of unitary gates prior to the measurement. 
For example, using addressed laser pulses of different duration one can implement rotations around different axes 
and thus implement the Hadamard gate $H$ as well as the phase gate $S$.
In this way, one can realize measurements in the $X = H Z H$ and $Y = S H Z H S\ad$ basis.

But each application of an additional gate may come with a coherent error \emph{in addition} to the native error associated with the measurement itself. 
In this way, we end up with different systematic errors for different Pauli observables parametrized by the angles $\theta, \varphi$ of a coherent error given by $\e^{\i \theta X}\e^{\i \varphi Z}$. 
This gives rise to some probability of actually measuring the expectation value of another local Pauli matrix than the targeted one. 
For example, consider a coherent error given by a (small) rotation around the $Z$-axis as given by $\e^{i\varphi Z}$. 
The faulty implementation of the Hadamard gate is then given by $\tilde H =\e^{i\varphi Z}  H$. 
Of course, the native $Z$-measurement is untouched by this coherent error, since no unitary rotation precedes this measurement. 
However, instead of $Y$ one now actually measures $\tilde Y = S \tilde H Z \tilde H\ad S\ad = \cos(2 \varphi) Y +  \sin (2 \varphi) X $. 
At the same time $X$ remains undisturbed.

More generally, we can introduce calibration parameters $\xi_{W \to \tilde W}$ measuring the strength of the error that replaces a certain target Pauli matrix $W$ by $\tilde W$. 
For instance, in the above example those parameters are given by $\xi_{Y \to Y} = \cos(2 \varphi)$, $\xi_{Y \to X} = \sin(2 \varphi)$ and $\xi_{Z \to Z} = \xi_{X \to X} = 1$. 
For simplicity, we assume that these calibration parameters are identical for different qubit registers.
Assuming that errors are not too large, the calibration parameters for the target measurement fulfil $\xi_{W \to W} \approx 1$ 
or all $W \in \{X, Y, Z\}$.
This leaves us with six independent calibration parameters corresponding to the cross-contributions. 
To construct the measurement map $\mathcal A{}$, we start from the definition of the target measurement $\mathcal A_0$ in \eqref{eq:targetSO}.
From $\mathcal A_0$ we can derive calibration measurement components $\mathcal A_{W\to \tilde W}$ appearing with the coefficient $\xi_{W \to \tilde W}$ by replacing all appearances of the Pauli matrix $W$ in the definition of $\mathcal A_0$ with $\tilde W$. 
If $W$ appears in a multi-qubit Pauli observable several times the resulting observable is the sum of all Pauli observables generated by replacing only one of the $W$ by $\tilde W$, assuming that the coherent errors are small so that the higher-order terms can be neglected.
For example, a faulty realization of the observable $ZYZZY$ is now given by $ \xi_{Y \rightarrow Y} ZYZZY + \xi_{Y \to X} (ZXZZY + ZYZZX)$. 

Altogether, to linear order in the calibration parameters $\xi_{W \to \tilde W}$ with $W \neq \tilde W$ we end up constructing a description of the effective faulty measurement by 
\begin{equation}
	\label{eq:paulireplacement}
	y = \xi_0\mathcal{A}_0(\rho) + \sum_{W \neq \tilde W \in \{X, Y, Z\}} \xi_{W \to \tilde W} \mathcal{A}_{W\to \tilde W}(\rho),  
\end{equation}
which can be written as linear map $\mathcal A$ action on $\xi \otimes \rho$ with $\xi = [\xi_0, \xi_{X\to Y}, \xi_{X \to Z}, \ldots, \xi_{Z \to Y}]^T$. By assumption, we set $\xi_0 = 1$. 

In this measurement model the sparsity assumption is justified if unitary errors in a certain coordinate plane are dominant compared to others thus singling out certain types of calibration measurement components. 
Importantly, we do not assume that we know which corrections are dominant (i.e.,  the support of $\xi$) \emph{a priori}. 
The measurement model also exemplifies a setting in which one is ultimately limited to measuring 
a maximal set of $d^2$ observables. Thus, blind tomography becomes only possible exploiting structure assumptions 
if one does not allow for different ways of implementing the same measurement that yield different calibration corrections without introducing too many new calibration parameters. 

\section{Formal problem definition}\label{sec:setup}
Motivated by this example, we set out to provide a formal definition of the blind tomography problem and the related sparse-de-mixing problem. 
The notation and terminology introduced in this section allows us to formulate a general signal-processing framework using which the blind tomography can be provably solved.
Both, the blind tomography and the sparse de-mixing, problems are linear inverse problems that feature a combination of different compressive structures. 
These are smaller sets of linear vector spaces, and it will be convenient to introduce some notation to refer to these sets. 
The prototypical example is the \emph{set of $s$-sparse real vectors}
\begin{equation*}
	\Sigma_s^n \coloneqq \{\xi \in \RR^n \mid |\supp\xi| \leq s \} \subset \RR^n,
\end{equation*}
which is defined by the support $\supp\xi$ of a vector $\xi$, i.e.,\ the index set of the non-vanishing entries of $\xi$, having cardinality smaller or equal than $s$. 
The set of $s$-sparse vectors is not a vector space itself but the union of $\binom n s$ $s$-dimensional subspaces.

In the realm of quantum mechanics, the non-commutative analogue of sparse vectors, namely low-rank matrices, is 
important. 
We denote the \emph{set of complex rank $r$ matrices} by
\begin{equation*}
	\CC^{d\times d}_r \coloneqq \{ x \in \CC^{d\times d} \mid \rank x \leq r\}.
\end{equation*}
Since we are dealing with quantum states we will restrict our attention to the set $\mathcal{D}^d \subset \CC^{d\times d}$ of trace-normalized, positive semidefinite matrices, i.e., $\rho \geq 0$ and $\tr \rho = 1$ for all $\rho \in \mathcal D^d$.
Our results can be straightforwardly generalized to general matrices without these constraints. We denote the set of rank-$r$ quantum states as $\mathcal D^d_r = \mathcal D^d \cap \CC^{d\times d}_r$. In particular, $\mathcal D^d_1$ is the set of pure quantum states.
In order to solve the blind tomography problem we need to simultaneously recover an $s$-sparse real vector $\xi$ and a rank-$r$ quantum state $\rho$. 
It is convenient to regard  both $\xi$ and $\rho$ as a combined signal $X = \xi \otimes \rho$ 
and model the measurement including its dependence on the calibration parameter as a linear map $\mathcal A$ acting on $X$. 
Considering such linear maps instead of bi-linear maps is sometimes referred to as `lifting' in the compressed sensing literature \cite{AhmedRechtRomberg:2014}. 
For a physicist, `lifting' is also the natural isomorphism at the heart of the density matrix formulation of quantum mechanics.
The signal $X$ is highly structured as it is a tensor product of a sparse vector and a low-rank quantum state. We denote the set of all potential signals as 
\begin{equation*}
	\Omega^{n,d}_{s,r} \coloneqq \{\xi \otimes x \mid \xi \in \Sigma^n_s,\ x \in \mathcal D^d_r \} \subset \CC^{nd \times d}.
\end{equation*}
One can regard a signal $X \in \Omega^{n,d}_{s,r}$ as an $nd \times d$ matrix consisting of $n$ blocks of size $d\times d$ stacked on top of each other as depicted in 
Figure~\ref{fig:signalstructures}, where each $d\times d$ block is proportional to the same quantum state $\rho$ and only $s$ of the blocks are non-vanishing.

\begin{figure}[tb]
\begin{minipage}{.08\textwidth}
\centering
\begin{tikzpicture}[x=.25cm,y=.25cm]
	\coordinate(A) at (0,0);
      \coordinate(B) at (0,1);
      \coordinate(C) at (0,2);
      \coordinate(D) at (0,3);
      \coordinate(E) at (0,4);
      \coordinate(F) at (0,5);

      \coordinate(AA) at (1,0);
      \coordinate(BA) at (1,1);
      \coordinate(CA) at (1,2);
      \coordinate(DA) at (1,3);
      \coordinate(EA) at (1,4);
      \coordinate(FA) at (1,5);

      \draw[thick,fill=red] (A)--(B)--(BA)--(AA)--(A);
      \draw[thick, fill=blue] (B)--(C)--(CA)--(BA)--(B);
      \draw[thick,fill=orange] (C)--(D)--(DA)--(CA)--(C);
      \draw[thick,fill=violet] (D)--(E)--(EA)--(DA)--(D);
      \draw[thick, fill=magenta] (E)--(F)--(FA)--(EA)--(E);

      \coordinate(A) at (0,5);
      \coordinate(B) at (0,6);
      \coordinate(C) at (0,7);
      \coordinate(D) at (0,8);
      \coordinate(E) at (0,9);
      \coordinate(F) at (0,10);

      \coordinate(AA) at (1,5);
      \coordinate(BA) at (1,6);
      \coordinate(CA) at (1,7);
      \coordinate(DA) at (1,8);
      \coordinate(EA) at (1,9);
      \coordinate(FA) at (1,10);

      \draw[thick,fill=yellow] (A)--(B)--(BA)--(AA)--(A);
      \draw[thick, fill=Maroon] (B)--(C)--(CA)--(BA)--(B);
      \draw[thick, fill=Turquoise] (C)--(D)--(DA)--(CA)--(C);
      \draw[thick,fill=green] (D)--(E)--(EA)--(DA)--(D);
      \draw[thick, fill=purple] (E)--(F)--(FA)--(EA)--(E);

      \draw[draw=white](.5,-1.5) rectangle ++(1,1)  node[pos=.5]{$\CC^{nd \times d }$};

\end{tikzpicture}

\end{minipage}
\begin{minipage}{.15\textwidth}
\centering
  \begin{tikzpicture}[x=.25cm,y=.25cm]
  
      \coordinate(A) at (0,2.5);
      \coordinate(B) at (0,3);
      \coordinate(C) at (0,3.5);
      \coordinate(D) at (0,4);
      \coordinate(E) at (0,4.5);
      \coordinate(F) at (0,5);

      \coordinate(AA) at (0.5,2.5);
      \coordinate(BA) at (0.5,3);
      \coordinate(CA) at (0.5,3.5);
      \coordinate(DA) at (0.5,4);
      \coordinate(EA) at (0.5,4.5);
      \coordinate(FA) at (0.5,5);

      \draw[thick,fill=black!50] (A)--(B)--(BA)--(AA)--(A);
      \draw[thick] (B)--(C)--(CA)--(BA)--(B);
      \draw[thick,fill=black!50] (C)--(D)--(DA)--(CA)--(C);
      \draw[thick,fill=black!50] (D)--(E)--(EA)--(DA)--(D);
      \draw[thick] (E)--(F)--(FA)--(EA)--(E);

      \coordinate(A) at (0.0,5);
      \coordinate(B) at (0.0,5.5);
      \coordinate(C) at (0.0,6);
      \coordinate(D) at (0.0,6.5);
      \coordinate(E) at (0.0,7);
      \coordinate(F) at (0.0,7.5);

      \coordinate(AA) at (0.5,5);
      \coordinate(BA) at (0.5,5.5);
      \coordinate(CA) at (0.5,6);
      \coordinate(DA) at (0.5,6.5);
      \coordinate(EA) at (0.5,7);
      \coordinate(FA) at (0.5,7.5);

      \draw[thick,fill=black!50] (A)--(B)--(BA)--(AA)--(A);
      \draw[thick] (B)--(C)--(CA)--(BA)--(B);
      \draw[thick] (C)--(D)--(DA)--(CA)--(C);
      \draw[thick,fill=black!50] (D)--(E)--(EA)--(DA)--(D);
      \draw[thick] (E)--(F)--(FA)--(EA)--(E);
      \draw [draw=black, thick] (1.6,5) arc (0:360:0.3);
      \draw [thick] (1.1,4.79)--(1.5,5.2);
      \draw [thick] (1.1,5.2)--(1.5,4.79);
      \draw[fill=red,thick] (2,4.5)--(3,4.5)--(3,5.5)--(2,5.5)--(2,4.5);
      \node[text width=.01cm, thick] at (3.2,5) {\tiny=};

      \coordinate(A) at (4.2,0);
      \coordinate(B) at (4.2,1);
      \coordinate(C) at (4.2,2);
      \coordinate(D) at (4.2,3);
      \coordinate(E) at (4.2,4);
      \coordinate(F) at (4.2,5);

      \coordinate(AA) at (5.2,0);
      \coordinate(BA) at (5.2,1);
      \coordinate(CA) at (5.2,2);
      \coordinate(DA) at (5.2,3);
      \coordinate(EA) at (5.2,4);
      \coordinate(FA) at (5.2,5);

      \draw[thick,fill=red] (A)--(B)--(BA)--(AA)--(A);
      \draw[thick] (B)--(C)--(CA)--(BA)--(B);
      \draw[thick,fill=red] (C)--(D)--(DA)--(CA)--(C);
      \draw[thick,fill=red] (D)--(E)--(EA)--(DA)--(D);
      \draw[thick] (E)--(F)--(FA)--(EA)--(E);

      \coordinate(A) at (4.2,5);
      \coordinate(B) at (4.2,6);
      \coordinate(C) at (4.2,7);
      \coordinate(D) at (4.2,8);
      \coordinate(E) at (4.2,9);
      \coordinate(F) at (4.2,10);

      \coordinate(AA) at (5.2,5);
      \coordinate(BA) at (5.2,6);
      \coordinate(CA) at (5.2,7);
      \coordinate(DA) at (5.2,8);
      \coordinate(EA) at (5.2,9);
      \coordinate(FA) at (5.2,10);

      \draw[thick,fill=red] (A)--(B)--(BA)--(AA)--(A);
      \draw[thick,] (B)--(C)--(CA)--(BA)--(B);
      \draw[thick] (C)--(D)--(DA)--(CA)--(C);
      \draw[thick,fill=red] (D)--(E)--(EA)--(DA)--(D);
      \draw[thick] (E)--(F)--(FA)--(EA)--(E);

      \draw[draw=white](1.5,-1.5) rectangle ++(1,1)  node[pos=.5]{$\Omega^{n,d}_{s,r}$};

\end{tikzpicture}
\end{minipage}
\begin{minipage}{.22\textwidth}
\flushright
\centering
\begin{tikzpicture}[x=.25cm,y=.25cm]
    
      \coordinate(A) at (4,2.5);
      \coordinate(B) at (4,3);
      \coordinate(C) at (4,3.5);
      \coordinate(D) at (4,4);
      \coordinate(E) at (4,4.5);
      \coordinate(F) at (4,5);

      \coordinate(AA) at (4.5,2.5);
      \coordinate(BA) at (4.5,3);
      \coordinate(CA) at (4.5,3.5);
      \coordinate(DA) at (4.5,4);
      \coordinate(EA) at (4.5,4.5);
      \coordinate(FA) at (4.5,5);

      \draw[thick] (A)--(B)--(BA)--(AA)--(A);
      \draw[thick] (B)--(C)--(CA)--(BA)--(B);
      \draw[thick] (C)--(D)--(DA)--(CA)--(C);
      \draw[thick] (D)--(E)--(EA)--(DA)--(D);
      \draw[thick] (E)--(F)--(FA)--(EA)--(E);

      \coordinate(A) at (4,5);
      \coordinate(B) at (4,5.5);
      \coordinate(C) at (4,6);
      \coordinate(D) at (4,6.5);
      \coordinate(E) at (4,7);
      \coordinate(F) at (4,7.5);

      \coordinate(AA) at (4.5,5);
      \coordinate(BA) at (4.5,5.5);
      \coordinate(CA) at (4.5,6);
      \coordinate(DA) at (4.5,6.5);
      \coordinate(EA) at (4.5,7);
      \coordinate(FA) at (4.5,7.5);

      \draw[thick] (A)--(B)--(BA)--(AA)--(A);
      \draw[thick] (B)--(C)--(CA)--(BA)--(B);
      \draw[thick] (C)--(D)--(DA)--(CA)--(C);
      \draw[thick,fill=black!50] (D)--(E)--(EA)--(DA)--(D);
      \draw[thick] (E)--(F)--(FA)--(EA)--(E);

      \draw [draw=black, thick] (5.6,5) arc (0:360:0.3);
      \draw [thick] (5.1,4.79)--(5.5,5.2);
      \draw [thick] (5.1,5.2)--(5.5,4.79);
      \draw[fill=green,thick] (6,4.5)--(7,4.5)--(7,5.5)--(6,5.5)--(6,4.5);

      \node[thick] at (7.6,5) {\footnotesize+};
      \node[thick] at (9,5) {\scriptsize$\cdots$};
      \node[thick] at (10.1,5) {\footnotesize+};

      \coordinate(A) at (11,2.5);
      \coordinate(B) at (11,3);
      \coordinate(C) at (11,3.5);
      \coordinate(D) at (11,4);
      \coordinate(E) at (11,4.5);
      \coordinate(F) at (11,5);

      \coordinate(AA) at (11.5,2.5);
      \coordinate(BA) at (11.5,3);
      \coordinate(CA) at (11.5,3.5);
      \coordinate(DA) at (11.5,4);
      \coordinate(EA) at (11.5,4.5);
      \coordinate(FA) at (11.5,5);

      \draw[thick,fill=black!50] (A)--(B)--(BA)--(AA)--(A);
      \draw[thick] (B)--(C)--(CA)--(BA)--(B);
      \draw[thick] (C)--(D)--(DA)--(CA)--(C);
      \draw[thick] (D)--(E)--(EA)--(DA)--(D);
      \draw[thick] (E)--(F)--(FA)--(EA)--(E);

      \coordinate(A) at (11,5);
      \coordinate(B) at (11,5.5);
      \coordinate(C) at (11,6);
      \coordinate(D) at (11,6.5);
      \coordinate(E) at (11,7);
      \coordinate(F) at (11,7.5);

      \coordinate(AA) at (11.5,5);
      \coordinate(BA) at (11.5,5.5);
      \coordinate(CA) at (11.5,6);
      \coordinate(DA) at (11.5,6.5);
      \coordinate(EA) at (11.5,7);
      \coordinate(FA) at (11.5,7.5);

      \draw[thick] (A)--(B)--(BA)--(AA)--(A);
      \draw[thick] (B)--(C)--(CA)--(BA)--(B);
      \draw[thick] (C)--(D)--(DA)--(CA)--(C);
      \draw[thick] (D)--(E)--(EA)--(DA)--(D);
      \draw[thick] (E)--(F)--(FA)--(EA)--(E);
      \draw [draw=black, thick] (12.6,5) arc (0:360:0.3);
      \draw [thick] (12.1,4.79)--(12.5,5.2);
      \draw [thick] (12.1,5.2)--(12.5,4.79);

      \draw[fill=red,thick] (13,4.5)--(14,4.5)--(14,5.5)--(13,5.5)--(13,4.5);
      \node[thick] at (14.8,5) {\footnotesize=};

      \coordinate(A) at (15.5,0);
      \coordinate(B) at (15.5,1);
      \coordinate(C) at (15.5,2);
      \coordinate(D) at (15.5,3);
      \coordinate(E) at (15.5,4);
      \coordinate(F) at (15.5,5);

      \coordinate(AA) at (16.5,0);
      \coordinate(BA) at (16.5,1);
      \coordinate(CA) at (16.5,2);
      \coordinate(DA) at (16.5,3);
      \coordinate(EA) at (16.5,4);
      \coordinate(FA) at (16.5,5);

      \draw[thick,fill=red] (A)--(B)--(BA)--(AA)--(A);
      \draw[thick] (B)--(C)--(CA)--(BA)--(B);
      \draw[thick,fill=orange] (C)--(D)--(DA)--(CA)--(C);
      \draw[thick,fill=violet] (D)--(E)--(EA)--(DA)--(D);
      \draw[thick] (E)--(F)--(FA)--(EA)--(E);

      \coordinate(A) at (15.5,5);
      \coordinate(B) at (15.5,6);
      \coordinate(C) at (15.5,7);
      \coordinate(D) at (15.5,8);
      \coordinate(E) at (15.5,9);
      \coordinate(F) at (15.5,10);

      \coordinate(AA) at (16.5,5);
      \coordinate(BA) at (16.5,6);
      \coordinate(CA) at (16.5,7);
      \coordinate(DA) at (16.5,8);
      \coordinate(EA) at (16.5,9);
      \coordinate(FA) at (16.5,10);

      \draw[thick,fill=blue] (A)--(B)--(BA)--(AA)--(A);
      \draw[thick,] (B)--(C)--(CA)--(BA)--(B);
      \draw[thick] (C)--(D)--(DA)--(CA)--(C);
      \draw[thick,fill=green] (D)--(E)--(EA)--(DA)--(D);
      \draw[thick] (E)--(F)--(FA)--(EA)--(E);

      \draw[draw=white](10,-1.5) rectangle ++(1,1)  node[pos=.5]{$\hat\Omega^{n,d}_{s,r}$};
    \end{tikzpicture}
    \end{minipage}

	\caption{
	\label{fig:signalstructures}
	The signal sets of the blind tomography and sparse de-mixing problem can be regarded as subsets of $\CC^{nd \times d}$, i.e.,\ matrices consisting of $n$ blocks of $d\times d$. For a blind tomography signal in $\Omega^{n,d}_{s,r}$, 
	only $s$ out of the $n$ blocks are non-zero and are proportional to the same rank $r$ matrix. In contrast, a signal of the sparse de-mixing problem in $\hat\Omega^{n,d}_{s,r}$ comprises $s$ non-vanishing blocks with potentially different rank $r$ matrices. 
	 }
\end{figure}
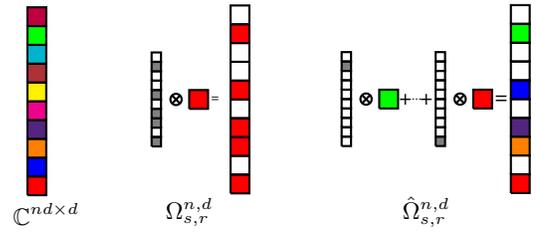

We are now equipped to concisely state the problem we would like to study. 
\begin{problem}[Blind tomography]\label{prob:blindqst} Let $\mathcal A: \CC^{nd\times d} \to \RR^m$ be a linear map. 
Given data $y = \mathcal A(X) \in \RR^m$ and the linear map $\mathcal A$, recover $X$ under the assumption that 
\begin{equation*}
	X \in \Omega^{n,d}_{s,r}.
\end{equation*}
\end{problem}
Our approach to algorithmically solving the blind tomography problem makes use of a proxy problem: 
we relax it to signals that are a bit less restrictively structured
\begin{equation*}
	\hat\Omega^{n,d}_{s,r} \coloneqq \left\{
			\sum_{i=1}^n \xi_i e_i \otimes x_i 
			\ \middle| \ 
			\xi \in \Sigma^n_s,\ 
			x_i \in \mathcal D^d_r
		\right\}.
\end{equation*}
Both sets $\hat\Omega^{n,d}_{s,r}$ and $\Omega^{n,d}_{s,r}$ are subsets of $\CC^{nd\times d}$. 
The difference between them as illustrated in Figure~\ref{fig:signalstructures} is the following: 
While for $X \in \Omega^{n,d}_{s,r}$ all $d\times d$ blocks are proportional to the same quantum state $x$, we allow the $d\times d$ blocks of $\hat X \in \Omega^{n,d}_{s,r}$ to be proportional to \emph{different quantum states} $x_i$.
Again only $s$ out of the $n$ blocks of $\hat X$ are non-vanishing. 
Analogously to Problem~\ref{prob:blindqst}, we define the linear inverse problem associated with $\hat\Omega$. 
\begin{problem}[Sparse de-mixing]\label{prob:sparsede-mixing} Let $\mathcal A: \CC^{nd\times d} \to \RR^m$ be a linear map. 
Given data $y = \mathcal A(X) \in \RR^m$ and the linear map $\mathcal A$, recover $X$ under the assumption that 
\begin{equation*}
	X \in \hat\Omega^{n,d}_{s,r}.
\end{equation*}
\end{problem}%
The observed data of the sparse de-mixing problem can be equivalently described as 
\begin{equation}\label{eq:observed_data}
	y = \sum_{k = 1}^n \xi_k \mathcal A_k(x_k),
\end{equation}
where we have split up $X$ into trace-normalized $d\times d$ blocks $x_k$ and their norm $\xi_k$ according to the definition of $\hat\Omega^{n,d}_{s,r}$. Correspondingly, we can decompose the linear map $\mathcal A$ into the set of linear maps $\{\mathcal A_k\}_{k=1}^n$ where each $\mathcal A_k$ acts only on the $k$th $d\times d$ block of $X$. 
From this reformulation it becomes clear that the problem amounts to reconstructing a set of low-rank signals $\{x_k\}_k$ from observing its sparse mixture under linear maps, hence the name sparse de-mixing.

For both the blind-tomography and the sparse-de-mixing problem, 
we alternatively write each of the $n$ linear maps $\mathcal A_k$ in terms of 
$m$ observables 
\begin{equation*}
\{A_k^{(i)} \in \CC^{d\times d} \mid (A_k^{(i)})\ad = A_k^{(i)}\}_{i=1}^m
\end{equation*}
via
\begin{equation}\label{eq:measurement_observable}
	\mathcal{A}_k(x_k)^{(i)} = \langle A_k^{(i)}, x_k \rangle 
\end{equation}
with the Hilbert-Schmidt inner product $\langle X, Y\rangle = \Tr(X\ad Y)$. 

Note that as long as we consider Hermitian matrices for the measurement $A_k^{(i)}$ and signals $x_i$, we end up with a real data vector $y \in \RR^m$. 
For applications other than quantum tomography it is straightforward to adopt our proofs and results to real signals or complex-valued measurement maps. 
Furthermore, for the sake of simplicity we have formulated both recovery problems without noise. More generally, the data can be assumed to be of the form $y = \mathcal A(X) + \epsilon$ where $\epsilon$ denotes additive, e.g.\ statistical, noise. 

In the following, we will also make use of the inner product of vectors $x, y \in \RR^n$ defined as $\langle x, y \rangle = \sum_i x_i y_i$, their $\ell_2$-norm $\|x\|_{\ell_2} = \sqrt{\langle x, x\rangle}$ and the Frobenius norm a matrix $X \in \CC^{d_1, d_2}$ induced by the Hilbert-Schmidt inner product $\|X\|_F = \sqrt{\langle X, X \rangle}$. 

\section{Algorithm}\label{sec:algorithm}

We now turn to the technical derivation of our algorithm for the blind quantum tomography and the sparse de-mixing problem. 
Our algorithm builds on primitives developed in the field of compressed sensing. 
In particular, we generalize the hard thresholding algorithm to accommodate the structural assumptions of both problems. 
As a first step, we establish the hardness of direct thresholding approaches to the blind tomography problem before stating a tractable algorithm for the sparse de-mixing problem.

Let us be more precise: the blind quantum tomography problem requires different assumptions on two levels. 
First, we want the signal to be a tensor product $\xi \otimes \rho$, i.e., of rank one. 
Second, both tensor factors are assumed to be structured. 
Concretely, we assume $\xi$ to be $s$-sparse and $\rho$ to be of rank $r$. 
We are therefore faced with low-rank structures on two separate levels: 
first, the block-structured signal as given by the tensor product of calibration vector and quantum state has unit rank. 
Second, by assumption the target quantum states, i.e.,\ the individual blocks of the signal, have low rank.

It has been observed in the compressed sensing literature that multi-level structures with structured tensor components can be notoriously hard to reconstruct. 
One prototypical example of this is combined sparsity and low-rankness in the sense that the signal is the tensor product of two sparse vectors, i.e., $X = \xi \otimes \tau$ with $\xi, \tau \in \Sigma^n_s$. 
This problem is already very similar to the blind tomography problem where one of the sparse vectors is replaced by a low-rank matrix, the quantum state.

The obstacle arising from such structures can be understood from a different perspective present in the compressed sensing literature that is related to different algorithmic approaches. 
The perhaps most prominent approach in compressed sensing is the convex relaxation of structure-promoting regularisers yielding efficient convex optimization programs. 
Minimizing the $\ell_1$-norm or the Schatten-$1$-norm is known to solve linear inversion problems involving sparse or low-rank vectors efficiently and sampling optimal, respectively. 
However, simply combining both regularisers in a convex fashion does not yield a sampling-optimal reconstruction of problems that feature both structures~\cite{OymakFazelEldarHassibi:2015}. 

\subsection{Hard-thresholding algorithms: Ease and hardness of the projection}

Another algorithmic approach used in compressed sensing are so-called \emph{hard thresholding algorithms} such as CoSAMP, IHT or HTP \cite{NeedellTropp:2008, BlumensathDavies:2008, Foucart:2011}; see also the textbook \cite{FoucartRauhut:2013} for an introduction. 
These are typically iterative procedures that minimize the deviation from the linear constraints in some way or other, e.g.\ by gradient descent, and in each iteration project onto the structure of the signal. 
For many compressed sensing problems this is possible because even though recovery problems, such as 
\begin{equation*}
	\operatorname*{minimize}_\xi \| \mathcal{A}(\xi) - y  \|_{\ell_2}^2 \quad \text{subject to $\xi \in \Sigma^n_s$}, 
\end{equation*}
are \NP{}-hard \cite{Magdon-Ismail:2017}, the related projection
\begin{equation*}
	P_{\Sigma_s^n}(\tau) \coloneqq \argmin_{\xi \in \Sigma_s^n} \| \xi - \tau \|_{\ell_2}
\end{equation*}
can be computed efficiently.
For the given example of projecting onto $s$-sparse vectors, this solution is given by the hard-thresholding operation defined as follows: 
Let $\Sigma_\text{max}$ be the set of indices of the $s$ absolutely largest entries of $\tau$. 
Then, 
\begin{equation*}
	(P_{\Sigma_s^n}(\tau))_i = \begin{cases}
		\tau_i & \text{for $i \in \Sigma_\text{max}$} \\
		0 & \text{otherwise}.
	\end{cases}
\end{equation*}
In words, one keeps the largest entries of $\tau$ and replaces the other entries by zero. 
Analogously, the projection of Hermitian matrices onto low-rank matrices can be efficiently calculated by calculating the eigenvalue decomposition and applying $P_{\Sigma_r^d}$ to the eigenvalue vector.
Let $X \in \CC^{d\times d}$ be a Hermitian matrix with eigenvalue decomposition $X = U\operatorname{diag}(\lambda) U\ad$. 
We define the projection onto positive semi-definite low-rank matrices as 
\begin{equation*}
	P_{\mathcal D_r^d}(X) = U\operatorname{diag}(P_{\Sigma_r^d}(\lambda|_{\geq 0})) U\ad,
\end{equation*}
where $\lambda|_{\geq 0}$ denotes the restriction of $\lambda$ to its non-negative entries. 

In hard-thresholding algorithms, the problems associated with simultaneously exploiting sparse and low-rank structures are manifest in the computational hardness of computing the respective projections. 
For the case of unit rank matrix with sparse singular vectors, calculating the projection is the so-called \emph{sparse PCA} problem, i.e., given a matrix $A \in \RR^{n\times n}$
\begin{equation*}
	\operatorname*{minimize}_{\xi, \tau \in \Sigma^n_s}\ \| A - \xi \otimes \tau \|_F.
\end{equation*}

Indeed, exactly solving this problem in the worst case is \NP{}-hard by a trivial 
reduction to the CLIQUE problem~\cite{Magdon-Ismail:2017}. 
But it turns out that the hardness is much worse: 
one can even make average-case hardness statements based 
on conjectures regarding the hardness of the \emph{planted clique problem}~\cite{berthet_complexity_2013,berthet_optimal_2013,brennan_optimal_2019}. 
Moreover, the SparsePCA problem remains just as hard even when one merely 
asks for an approximation up to a 
\emph{constant relative error}~\cite{chan_approximability_2016,Magdon-Ismail:2017}. 

As the first technical result of this work, 
we show that also the projection onto $\Omega^{n,d}_{s,r}$ is an \NP{}-hard problem by reducing it to the sparse PCA problem. 

\begin{theorem}[Hardness of constrained minimization]
	There exists no polynomial time algorithm that calculates
	\begin{equation*}
		\operatorname*{minimize} \quad \| A - X \|_F \quad\text{\upshape subject to $X \in \Omega^{n,d}_{s,r}$},
	\end{equation*}
	for all $A \in \CC^{nd\times d}$ unless $\Poly = \NP$. This still holds for $s=n$. 
\end{theorem}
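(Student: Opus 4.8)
The plan is to prove hardness by a polynomial-time reduction from sparse PCA, which was recalled above to be \NP{}-hard (and hard even to approximate to constant relative error). Given a sparse PCA instance I would construct, in polynomial time, a target matrix $A \in \CC^{nd\times d}$ organised into $n$ blocks $A_1,\dots,A_n \in \CC^{d\times d}$, so that any exact solver of the projection $\operatorname{minimize}_{X\in\Omega^{n,d}_{s,r}}\|A-X\|_F$ — returning either the optimal value or a minimiser — reveals the optimal sparse principal component. The starting point is the block identity $\|A-X\|_F^2=\sum_{k=1}^n\|A_k-\xi_k x\|_F^2$ for $X=\xi\otimes x$, together with the fact that for fixed $x$ the optimal coefficients are available in closed form, $\xi_k=\langle A_k,x\rangle/\|x\|_F^2$ on the chosen support. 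Substituting this back and using that the objective only involves the $s$ largest contributions turns the projection into the maximisation of a Rayleigh-type quartic $\sum_{k}\langle A_k,x\rangle^2/\|x\|_F^2$ over rank-$r$ density matrices $x\in\mathcal D^d_r$. This is exactly the shape of a sparse PCA objective, so the remaining work is to choose the blocks $A_k$ so that this quartic reproduces a prescribed sparse PCA instance.

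The key step, and the step I expect to be the main obstacle, is to encode an arbitrary sparse PCA instance into the blocks while respecting the constraints that $x$ be positive semidefinite ($x\ge0$), trace-normalised and of rank $r$ — constraints that the generic factor of a sparse PCA instance does not meet, since it is typically indefinite and unnormalised. One route is to push the relevant degree of freedom into the diagonal of $x$: choosing each block $A_k=\diag(\tilde A_{k,1},\dots,\tilde A_{k,d})$ diagonal makes the numerator $\sum_k\langle A_k,x\rangle^2$ depend only on $\operatorname{diag}(x)=:z\ge0$, so that $z$ plays the role of the (non-negative) principal component and $\|\tilde A z\|_2^2=z^{\!\top}\tilde A^{\top}\tilde A z$ becomes the quadratic form to be maximised. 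The delicate point is that the rank constraint on $x$ must then enforce the sparsity $\|z\|_0\le r$ of the sought component; this requires a careful gadget, because a low-rank $x$ may still possess a dense diagonal, so the rank bound only penalises rather than forbids dense $z$, and one must verify that at the optimum the minimiser may indeed be taken diagonal and of the prescribed support. Reducing from the clique-based sparse PCA instances, whose optimum is attained at the non-negative indicator vector of a clique, is what keeps the inherited positivity constraint $z\ge0$ from trivialising the instance.

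Two features of this construction deliver the stated strengthening and the conclusion. First, the sparsity carrying the combinatorial hardness lives entirely in the quantum-state factor $x$, where it is governed by the rank parameter $r$, while the coefficient vector $\xi$ is left completely unconstrained; consequently the reduction produces instances with $s=n$, which is precisely the additional claim. Second, since the projection value is an affine function of the optimal sparse PCA value, a polynomial-time projection algorithm valid for all $A$ would compute the latter exactly, contradicting $\Poly\neq\NP$; the same correspondence, combined with the cited inapproximability of sparse PCA, would moreover preclude efficient constant-factor approximation. I therefore expect the technical heart of the write-up to be the faithful transfer of the sparsity constraint through the positivity, normalisation and rank constraints on $x$ — i.e.\ proving that the minimiser can be reduced to the diagonal, prescribed-support form — rather than the reduction skeleton itself.
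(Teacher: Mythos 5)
Your reduction skeleton coincides with the paper's own proof: diagonal blocks $A_k = \diag(A'_k)$ encoding the rows of a sparse-PCA-type instance $A'\in\RR^{n\times d}$, the combinatorial sparsity carried entirely by the eigenvalue vector of $\rho$ via the rank bound $r$ (which is exactly what makes the claim survive $s=n$), and the sign issue defused because clique-derived sparsePCA instances attain their optimum at nonnegative vectors. But the proposal has a genuine gap precisely where you flag it: you never prove that the minimiser can be taken diagonal with an $r$-sparse diagonal, and you speculate that this "requires a careful gadget." Without that lemma the reduction does not close. As you observe yourself, a rank-$r$ positive semidefinite matrix may have a dense diagonal, so the quotient $\sum_k \langle A_k, x\rangle^2/\fnorm{x}^2$ maximised over rank-$r$ states is not literally a sparse PCA objective, and nothing in your write-up excludes a non-diagonal optimiser with dense diagonal beating every diagonal, $r$-sparse candidate; the rank constraint only penalises such $x$ through an enlarged Frobenius norm, and you give no quantitative argument that this penalty suffices.

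The paper closes exactly this gap with a short structural argument, no gadget needed. Write $\rho = U\diag(\lambda)U^\dagger$; because the blocks of $A$ are diagonal, unitary invariance of the Frobenius norm yields $\|A-\xi\otimes\rho\|_F^2=\sum_{i,j}|a'_i-(\xi\otimes\lambda)_j|^2\,|(\Id_n\otimes U)_{i,j}|^2$, which depends on $U$ only through the doubly stochastic matrix $W_{k,l}=|U_{k,l}|^2$. Relaxing the minimisation to all doubly stochastic $W$ can only lower the value; the objective is linear in $W$, so by Birkhoff's theorem the relaxed minimum is attained at a permutation matrix, and every permutation matrix is realised as $|U_{k,l}|^2$ for a permutation unitary, so the relaxation is tight and the optimal $\rho$ may indeed be taken diagonal. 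The projection value therefore equals $\min_{\xi\in\Sigma^n_s,\,\lambda\in\Sigma^d_r}\|A'-\xi\lambda^T\|_F^2$, the rank-one projection with a single sparse factor, whose $\NP$-hardness the paper establishes separately from sparsePCA by eliminating the scale and applying Cauchy--Schwarz to reach $\max_{w\in\Sigma^d_r\cap B^d_{\ell_2}}\|A'w\|_{\ell_2}$. Your alternative ordering, eliminating $\xi$ first, can be repaired by the same device: for fixed spectrum the denominator $\fnorm{x}^2=\|\lambda\|_{\ell_2}^2$ is constant and $\diag(x)=W\lambda$, so the numerator is a convex function of $W$ whose maximum over the Birkhoff polytope again sits at a permutation. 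But as submitted, the decisive diagonality step is asserted as an expectation rather than proven, so the argument is incomplete.
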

The details of the proof are given in Appendix~\ref{app:hardnessproof}. 
This hardness result provides a strong indication that a straightforward adaptation of compressed sensing techniques is not feasible. In this work, our way out of this is to sacrifice sampling optimality of the algorithm for a lower runtime and being able to prove analytical performance guarantees. 
Alternating minimization approaches that make the factorization explicit is also a viable way forward. 
{We} provide a detailed description of such an algorithm in Section~\ref{sec:algorithm:als}. 
But proving global recovery guarantees for non-convex algorithms typically becomes much more involved. 

\subsection{Relaxing the blind tomography problem: sparse de-mixing}

In fact, the bi-sparse and low-rank structure can be relaxed to a simple hierarchical sparsity 
constraint~\cite{WunderEtAl:2018, FourcartEtAl:2019}. 
A vector $\xi \in \CC^{Nn}$ consisting of $N$ blocks of size $n$ is called $(s,\sigma)$-hierarchically sparse if it has at most $s$ blocks with non-vanishing entries, that themselves are $\sigma$-sparse \cite{SprechmannEtAl:2010,FriedmanEtAl:2010, SprechmannEtAl:2011, SimonEtAl:2013}. 
For this structure a hard-thresholding algorithm together with theoretical recovery guarantees has been 
derived in Refs.~\cite{RothEtAl:2016,RothEtAl:iTwist:2016, RothEtAl:2018:Kronecker,HierarchyIEEE}. 
It has been applied in different contexts \cite{WunderRothFritschekEisert:2017, WunderRothFritschekEisert:2018:MTC, WunderEtAl:2018:MIMO:WSA, WunderEtAl:2019:TWC} including 
\emph{sparse blind deconvolution} \cite{WunderEtAl:2018} which features the combined low-rank, sparse structure.

Here, we make use of this approach to solve the blind quantum tomography problem as formalized in Problem~\ref{prob:blindqst}. 
At the heart of our approach is the insight 
that the projection onto $\hat\Omega^{n,d}_{s,r}$ can be efficiently computed since the $n$ $d \times d$ blocks may be different. 
This allows one to combine the projection onto $\Sigma^n_s$ and the projection onto $\mathcal D^d_r$: 
First, the low-rank projection $P_{\mathcal D^d_r}$ is applied to each of the $d\times d$ blocks of the input matrix $X$. 
Subsequently, the sparse projection operator is applied by setting the $n-s$ smallest blocks in Frobenius norm to zero. 
The resulting algorithm is summarized as Algorithm~\ref{alg:projOmegaHat}, where $Y_{\overline{W}}$ denotes the subvector of $Y$ indexed by the entries in the complement of $W$. 
\begin{figure}[tb]
\begin{algorithm}[H]
	\caption{Projection onto $\hat\Omega^{n,d}_{s,r}$}\label{alg:projOmegaHat}
	\begin{algorithmic}[1]
		\Require $X \in \CC^{nd\times d}.$
		\State $Y = 0$
		\For{$k \in [n]$}
			\State $Y_k = P_{\mathcal D^d_r}(x_k)$
			\State $n_k = \|Y_k\|_{F}$
		\EndFor
		\State $W = \operatorname{supp}P_{\Sigma^n_s}(n)$
		\State $Y_{\overline W} = 0$.
		\Ensure $Y$ is projection of $X$ onto $\hat\Omega^{n,d}_{s,r}$.
	\end{algorithmic}
\end{algorithm}
\end{figure}
The computational cost of the projection onto $\hat \Omega^{n,d}_{s,r}$ is dominated by the eigenvalue decomposition required to compute the low-rank approximation $\mathcal D^d_{r}$ of each block. 
Computing the full eigenvalue decomposition of the $d\times d$ blocks requires computation time of $O(d^3)$ using,
e.g., Householder reflections \cite{Golub}. 
Since we are only interested in the dominant $r \ll d$ eigenvalues, the effort can be reduced to 
$O(rdw)$ using the Lanczos algorithm, where $w$ is the average number of non-zero elements in a row of
a block \cite{Golub}. 
Using randomized techniques one might be able to further reduce the computational costs 
\cite{HalkoEtAl:2011}.
The calculation of the Frobenius norms contributes $O(nd^2)$ flops. 
The largest blocks can be selected using the quick-select algorithm \cite{Hoare:1961} in $O(n)$. 
Note that the low-rank projections and Frobenius norms of all 
blocks can also be performed in parallel without any modification to the algorithm. 

Equipped with an efficient projection for $\hat\Omega_{s,r}^{n,d}$, 
we can construct a structured iterative gradient descent algorithm. 
This is a variant of the IHT algorithm, that was originally developed for sparse vectors 
\cite{BlumensathDavies:2008}.
The IHT algorithm is a projective gradient descent algorithm that iteratively alternates 
gradient steps to optimize the $\ell_2$-norm deviation between the data and a projection onto the constraint set. 
The resulting recovery algorithm for the \emph{sparse de-mixing (SDT)} 
problem is stated as 
Algorithm~\ref{alg:SDTalgCompact}.
\begin{figure}[tb]
\begin{algorithm}[H]
	\caption{SDT algorithm}\label{alg:SDTalgCompact}
	\begin{algorithmic}[1]
		\Require Data $y$, measurement $\mc A$, sparsity $s$ and rank $r$ of signal
		\State Initialize $X^0=0$.
		\Repeat
			\State Calculate step-widths $\mu^l$
			\State $G^l = \mathcal{A}^\dagger
							\left(
								y - \mathcal{A}(X^l)
							\right) $
			\State $X^{l+1} 
					= P_{\hat\Omega^{n,d}_{s,r}}
					\left(
						X^l + \diag(\mu^l) P_{\mathcal{T}_{X^l}}
						\left(G^l%
						\right)
					\right)$
		\Until stopping criterion is met at $l=l^\ast$
		\Ensure Recovered signal $X^{l^\ast}$
	\end{algorithmic}
\end{algorithm}
\end{figure}

The SDT algorithm is closely related to the IHT algorithm for de-mixing low-rank matrices that was developed in Ref.~\cite{StrohmerWei:2017}. 
We will refer to this algorithm as the \emph{DT algorithm}.
The main difference between our SDT and the DT algorithm of Ref.~\cite{StrohmerWei:2017} is that the latter does not make the additional sparsity assumptions on the signal. 
For this reason, the SDT algorithm differs in the projection $P_{\hat\Omega^{n,d}_{s,r}}$ that additionally applies the projection $P_{\Sigma^n_{s}}$ selecting the $s$ dominant blocks. 
In fact, in the special case of considering non-sparse signals in $\hat\Omega^{n,d}_{n,r}$ the SDT algorithm coincides with the DT algorithm. 

\subsection{Details of the SDT algorithm}

To be fully self-contained, let us now go through the individual steps of the SDT algorithm and specify the relevant details. 
Every iteration of the algorithm starts with the computation of $G^l = \mathcal A^\dagger(y - \mathcal A(X^l))$, the gradient for the $\ell_2$-norm deviation $f(X) = \frac12 \|y - \mathcal A ( X)\|_{\ell_2}^2$ evaluated at $X^l$. 
The algorithm subsequently employs a modification from Ref. \cite{WeiEtAl:2016} in calculating the steepest gradient inspired by geometrical optimization techniques which leads to a faster convergence \cite{AbsilSepulchre:2009,Vandereycken:2013}: 
The set of rank $r$ matrices is an embedded differential manifold in the linear vector space of all matrices. Thus, a direction on this embedded manifold is characterized by a tangent vector on the manifold.
While this geometry straight-forwardly generalizes to the set of $nd\times d$ matrices with rank $r$ blocks, due to sparsity constraint $\hat\Omega^{n,d}_{s,r}$ fails to be a differential manifold. 
Nonetheless, we can make use of tangent vectors as `natural' search directions in our optimization problem for the non-vanishing blocks of $X^l$ that are conforming with the fixed-rank constraint. 

The tangent space of rank $r$ matrices at point $x$ is given by the set of matrices that share the same column or row space $x$ \cite{AbsilSepulchre:2009}. 
Correspondingly, the tangent space projection of a non-vanishing block of $X$ can be defined as follows:
Let $x_k = U_k \Lambda_k U^\dagger_k$ be the eigenvalue decomposition of the $k$th block of $X$ with $\Lambda_k$ the diagonal matrix with eigenvalues in decreasing order. 
Further, let $U^{(r)}_k$ denote the restriction of $U_k$ to its first $r$ columns corresponding to the range of $x_k$. 
Then, the tangent space projection acting on $g_k$ the $k$th block of $G$ is given by 
\begin{equation}\label{eq:tangentspace_proj_block}
	P_{\mathcal{T}_{X}}(G)_k = g_k - (\Id - P_U) g_k (1 - P_U), 
\end{equation}
with $(P_U)_k = U^{(r)}_k (U^{(r)}_k)^\dagger$. The entire tangent-space projection $P_{\mathcal{T}_{X}}(G)$ is defined by acting trivially on the blocks of $G$ corresponding to vanishing blocks of $X$ and as the projection \eqref{eq:tangentspace_proj_block} otherwise.

As we prove below in generic situations the SDT algorithm converges for a constant step-width set to $\mu_l = 1$ and even without using the tangent space projection. 
Empirically, a faster convergence is achieved with the tangent space projection and using the following prescription for the step-width calculation:
From the projected gradient $G^l_P = P_{\mathcal{T}_{X}}(G^l)$ in the $l$th iteration we then calculate the algorithm's step width for each block individually as 
\begin{equation*}
	\mu_k^l = \frac
		{\|(G_P^l)_k\|_F^2}
		{\|\mathcal A ((G_P^l)_k)\|_{\ell_2}^2}
\end{equation*}
and multiply each block by the corresponding $\mu_k^l$. 
In order to have a compact notation, we introduce the diagonal matrix $\diag(\mu^l) = \diag(\mu^1_l, \ldots , \mu_1^l, \mu_2^l, \ldots, \mu_2^l, \ldots, \mu_n^l)$ where each step width is repeated $d$ times.
The new state of the algorithm, $X^{l+1}$, is given by the projection of the result of a gradient step with step width $\mu^l$ onto the set $\hat \Omega_{s,r}^{n,d}$.

Finally, we have to specify a stopping criterion at which the loop of the algorithm is exited. 
We terminate the algorithms if the objective function is below a specified threshold, i.e., 
\begin{equation}\label{eq:stoppingcriterion}
	\frac{
		\|y - \mathcal A(X^l)\|_{\ell_2} 
	}{
		\|y\|_{\ell_2}
	}
	\leq \gamma_\text{break}
\end{equation}
or a maximal number of iteration is reached. 
If the data vector $y$ has additive noise, $\gamma_\text{break}$ has to be chosen to be larger than the expected norm of the noise. 
To be less relying on expectations on the noise levels, one can alternatively make use of criteria on the gradient and step width or 
test for oscillating patterns in the identified support. 

\subsection{Blind tomography via alternating least-square optimization}\label{sec:algorithm:als}
A more direct algorithmic approach to the blind tomography problem is to use a constrained \textit{alternating least square} (ALS) optimization. 
In ALS optimization, one performs a constrained optimization of the objective function 
\begin{align*}
	f_\text{ALS}(\xi, \rho) = \frac12 \| y - \mathcal A(\xi, \rho)\|_{\ell_2}^2,
\end{align*}
 with respect to one of the two variables while regarding the respective other variable as constant in an alternating fashion, see Algorithm~\ref{alg:ALSBT}.

\begin{figure}[tb]
\begin{algorithm}[H]
	\caption{ALS-BT algorithm }\label{alg:ALSBT}
	\begin{algorithmic}[1]
		\Require Data $y$, measurement $\mc A$, sparsity $s$ and rank $r$ of signal
		\State Initialize $\rho^0$.
		\Repeat
			\State\label{alg:ALSBT:xi} %
			\vspace*{-\baselineskip}%
			\begin{equation*}
				\hspace{-7em}\quad \xi^l = 
				\argmin_{\xi \in \Sigma^n_s}\ 
				f_\text{ALS}(\xi, \rho^{l-1})
			\end{equation*}
			\State \label{alg:ALSBT:rho}%
			\vspace*{-\baselineskip}%
			\begin{equation*}
				\hspace{-7em}\rho^l = 
				\argmin_{\rho \in \CC^{d \times d}_r}\ 
				f_\text{ALS}(\xi^l, \rho)
			\end{equation*}
		\Until stopping criterion is met at $l=l^\ast$
		\Ensure Recovered signal $\rho^{l^\ast}$, $\xi^{l^\ast}$
	\end{algorithmic}
\end{algorithm}
\end{figure}

We perform the optimization over $\Sigma^n_s$, 
Algorithm~\ref{alg:ALSBT}~Step~\ref{alg:ALSBT:xi}, using the standard IHT algorithm for sparse vector recovery. 
Note that calculating the linear measurement map for $\xi$ given a fixed $\rho$ simply involves evaluating all calibration measurement blocks individually, 
i.e., calculating $\mathcal A_i(\rho)$ for all $i \in [N]$. 
Analogously, the low-rank optimization over $\CC^{d \times d}_r$, 
Algorithm~\ref{alg:ALSBT}~Step~\ref{alg:ALSBT:rho}, can be performed with iterative hard-thresholding on the manifold of low-rank matrices. A detailed description of a suitable algorithmic implementation is given by Algorithm~\ref{alg:SDTalgCompact} in the special case of a single matrix block, i.e., $N, s = 1$. 
Computing the corresponding linear map acting on $\rho$ for fixed $\xi$ amounts to summing up the individual 
measurement blocks weighted by their corresponding calibration coefficient.

The ALS optimization requires an initialization with a suitable $\rho^0$ in order to evaluate the first objective function for optimizing $\xi$. 
One method that we found viable is to  randomly draw a rank-$r$ state using Haar-random eigenvectors. 
Note that, in general, constrained ALS optimization can be highly sensitive to the chosen initialization. 
For this reason, depending on the measurement map and calibration model, alternative initialization strategies might become necessary. 
As break-off criteria we can again use a bound on the objective function as in \eqref{eq:stoppingcriterion} and an allowed maximal number of iterations.

\section{Recovery guarantees}\label{sec:guarantees}

We now prove that for certain simple measurement ensembles, the SDT algorithm converges to the optimal solution before we numerically demonstrate its performance in the following section. 
More precisely, following the outline of model-based compressed sensing \cite{BaraniukCevherDuarteHegde:2010, BluemsathDavies:2009}, the SDT algorithm can be accompanied by recovery guarantees based on a \emph{restricted isometry property} (RIP) of the measurement 
ensemble that is custom-tailored to the structure at hand. 
Intuitively, it seems clear that a measurement map should at least in principle allow for solving the associated linear inverse problem uniquely if it acts as an isometry on signals from the constraint set. 
So-called RIP constants formalize this intuition:
\begin{definition}[$\hat\Omega^{n,d}_{s,r}$-RIP]\label{def:RIP}
	Given a linear map $\mc A: \CC^{nd^2} \to \CC^{m}$, we denote by $\delta_{s,r}$ the smallest $\delta \geq 0$ such that 
	\begin{equation*}
	  (1-\delta)\|x\|_F^2 \leq \|\mc A(x)\|_{\ell_2}^2 \leq (1+\delta) \|x\|_F^2 
	\end{equation*}
	for all $x \in \hat\Omega^{n,d}_{s,r}$.
\end{definition}

The constant $\delta_{s,r}$ measures how much the action of $\mc A$ when restricted to elements of $\hat\Omega^{n,d}_{s,r}$ deviates from that of an isometry. 
Correspondingly, if $\delta_{s,r}$ is sufficiently small we expect this to be sufficient to ensure that the restricted action of $\mc A$ becomes invertible. 
In fact, if a measurement map has a sufficiently small RIP constant one can prove the convergence of projective gradient descent algorithms to the correct solution of the structured linear inverse problem. 
For the sake of simplicity, we analyze the SDT algorithm omitting the tangent space projection 
and also assuming a constant step widths $\mu^l = 1$. 
In numerically simulations we observe that making use of the tangent space projection and a more sophisticated heuristic for the step width yields faster convergence and better recovery performance. 
But the RIP assumption is in fact strong enough to already for this simpler algorithmic variant ensure that the following theorem holds:
\begin{theorem}[Recovery guarantee]\label{thm:recGarant}
	Let $\mathcal A: \CC^{nd\times d} \to \CC^m$ be a linear map and 
	suppose that the following RIP condition for $\mathcal A$ holds
	\begin{equation}\label{eq:recGarant:RIP}
		\delta_{3s,3r} < \frac{1}{2}.
	\end{equation}
	Then, for $X \in \hat\Omega^{n,d}_{s,r}$,  
	 the sequence $(X^l)$ defined by the SDT algorithm (Algorithm~\ref{alg:SDTalgCompact}) with $\mu^l=1$ and $P_{\mathcal T_{X^l}}=\Id$ with $y = \mc A(X)$ satisfies, for any $l\geq 0$, 
	\begin{equation*}
		\fnorm{X^l - X
		} \leq \gamma^l \fnorm{X^0 - X
		}, 
	\end{equation*}
	where 
	$
		\gamma =  2 \delta_{3s,3r} < 1
	$
\end{theorem}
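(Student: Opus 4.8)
The plan is to run the standard restricted-isometry analysis of iterative hard thresholding (as in \cite{BlumensathDavies:2008, FoucartRauhut:2013}), specialised to the model set $\hat\Omega^{n,d}_{s,r}$. Abbreviate $\delta := \delta_{3s,3r}$ and set $y = \mathcal A(X)$ with $X \in \hat\Omega^{n,d}_{s,r}$. Writing the gradient step as $Z^l := X^l + \mathcal A\ad(y - \mathcal A(X^l))$, the iteration with $\mu^l=1$ and $P_{\mathcal T_{X^l}}=\Id$ reduces to $X^{l+1} = P_{\hat\Omega^{n,d}_{s,r}}(Z^l)$. The first thing I would establish is that Algorithm~\ref{alg:projOmegaHat} really returns a Frobenius-optimal element of $\hat\Omega^{n,d}_{s,r}$: the objective $\fnorm{Z - Y}^2 = \sum_k \fnorm{z_k - y_k}^2$ decouples over the $d\times d$ blocks, the optimal active block is $y_k = P_{\mathcal D^d_r}(z_k)$, and by the blockwise Pythagorean identity $\fnorm{z_k}^2 = \fnorm{P_{\mathcal D^d_r}(z_k)}^2 + \fnorm{z_k - P_{\mathcal D^d_r}(z_k)}^2$ the optimal support collects the $s$ blocks with the largest $\fnorm{P_{\mathcal D^d_r}(z_k)}$ — exactly the selection rule of the algorithm. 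Since $X$ is itself a feasible competitor, optimality yields $\fnorm{Z^l - X^{l+1}} \leq \fnorm{Z^l - X}$.

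Next I would convert this optimality inequality into a one-step contraction. Expanding $\fnorm{Z^l - X^{l+1}}^2 = \fnorm{(Z^l - X) - (X^{l+1} - X)}^2$ and cancelling $\fnorm{Z^l - X}^2$ on both sides gives
\begin{equation*}
	\fnorm{X^{l+1} - X}^2 \leq 2\,\Re\langle Z^l - X,\, X^{l+1} - X\rangle .
\end{equation*}
Substituting $y = \mathcal A(X)$ into the definition of $Z^l$ produces the key algebraic identity $Z^l - X = (\Id - \mathcal A\ad\mathcal A)(X^l - X)$. With the shorthand $w := X^l - X$ and $v := X^{l+1} - X$ (both Hermitian, so the inner products are real) the bound becomes $\fnorm{v}^2 \leq 2\,\Re\langle(\Id - \mathcal A\ad\mathcal A)w,\, v\rangle$.

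The crux is then to control this cross term by $\delta$. Both $w$ and $v$ are differences of elements of $\hat\Omega^{n,d}_{s,r}$ sharing the common summand $X$, so $X^l, X^{l+1}, X$ jointly occupy at most $3s$ block positions and, on each such block, their ranges span a space of dimension at most $3r$. Hence there is a fixed \emph{linear} subspace $V \subseteq \CC^{nd\times d}$ with $w, v \in V$ and $V \subseteq \hat\Omega^{n,d}_{3s,3r}$ in the rank-constrained sense. The standard consequence of the $\hat\Omega_{3s,3r}$-RIP is that the orthogonal projector $P_V$ obeys $\snorm{P_V(\Id - \mathcal A\ad\mathcal A)P_V} \leq \delta$; applied to $w, v \in V$ this gives $\Re\langle(\Id - \mathcal A\ad\mathcal A)w, v\rangle \leq \delta\,\fnorm{w}\,\fnorm{v}$. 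Plugging in and dividing by $\fnorm{v}$ yields $\fnorm{X^{l+1} - X} \leq 2\delta\,\fnorm{X^l - X}$, and iterating over $l$ with $\gamma = 2\delta = 2\delta_{3s,3r} < 1$ (guaranteed by the hypothesis $\delta_{3s,3r} < 1/2$) gives the claimed geometric decay.

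I expect the third step to be the main obstacle: making the passage from the union-of-cones RIP of Definition~\ref{def:RIP} to an honest operator-norm bound on a genuine subspace requires care on two points. First, the subspace $V$ obtained by fixing the block support and the relevant column/row spaces consists of block-rank-$\leq 3r$ matrices that are generally \emph{not} positive semidefinite, whereas the RIP is phrased over states in $\mathcal D^d_r$; reconciling these calls for either the rank-constrained Hermitian version of the RIP noted after Definition~\ref{def:RIP} or an extra constant from a PSD-to-Hermitian splitting. Second, one must verify the folklore fact that a two-sided RIP with constant $\delta$ on $V$ controls $\snorm{P_V(\Id-\mathcal A\ad\mathcal A)P_V}$, which is a routine polarization argument once $V$ is a bona fide subspace on which the RIP applies. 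Everything else is the verbatim hard-thresholding estimate.
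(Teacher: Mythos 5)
Your proposal is correct and is essentially the paper's own argument: both rest on the optimality of the hard-thresholding step and on converting the $\hat\Omega_{3s,3r}$-RIP into a spectral-norm bound for $\Id - \mc A\ad \circ \mc A$ compressed to the joint block-support/range subspace of $X$, $X^l$, $X^{l+1}$ (your $P_V$ is the paper's $\mc P^l$, and your ``folklore fact'' is exactly the paper's Proposition~\ref{prop:spectralRIP}), yielding the same rate $\gamma = 2\delta_{3s,3r}$. The only deviations are cosmetic---you expand the square of the optimality inequality and bound the cross term $\Re\langle(\Id-\mc A\ad\mc A)w,\,v\rangle$, whereas the paper inserts $\mc P^l$ and applies the triangle inequality to get $\fnorm{X^{l+1}-X}\leq 2\snorm{\mc P^l\circ(\Id-\mc A\ad\circ\mc A)\circ\mc P^l}\,\fnorm{X^l-X}$---and the PSD-versus-Hermitian mismatch you flag at the end is a genuine subtlety, but one the paper's proof passes over in silence as well (``since the range of $\mathcal P^l$ is in $\Omega^{n,d}_{3s,3r}$'').
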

{The theorem's proof is presented in Appendix~\ref{sec:convergence:proof}.}
We establish that the SDT algorithm converges to the correct solution of the 
sparse de-mixing problem at a rate that is upper bounded by the RIP 
constant $\delta_{3s,3r}$ of the measurement map.
The right-hand side of the RIP condition \eqref{eq:recGarant:RIP} is not expected to be optimal. 
Typically, one can at least improve the bound to $\frac1{\sqrt{3}}$ with a slightly more 
complicated argument \cite{FoucartRauhut:2013}. Since we are interested in the parametric 
scaling here, we choose to present a simpler argument at the cost of slightly worse constants. 
Furthermore, the statement of Theorem~\ref{thm:recGarant} does not account for statistical 
noise or potential mild violation of the signal constraints. 
{When deployed in practice, one of course expects that the calibration parameter and the quantum state will only be approximately sparse and of approximately low-rank, respectively. For example, 
even a small amount of depolarizing noise causes a pure quantum state to be of full rank. 
Such a state will still be well-approximated by a rank-one matrix, however, and one would expect the recovery to be robust to such deviations. 
Theorem~\ref{thm:recGarant} can be generalized to a noise- and model-robust guarantee.
We provide a detailed discussion at the end of Appendix~\ref{sec:convergence:proof}.} 
For the current analysis focusing on the scaling behaviour, we are content with the significantly simpler version. 

The pressing next question is, of course, which measurement ensembles actually exhibit the required RIP. 
Interestingly, it is notoriously hard to give deterministic constructions of measurement maps that are sample optimal and feature the RIP. 
In fact, already for the RIP for $s$-sparse vectors there are no sample optimal deterministic measurement maps known to date \cite{FoucartRauhut:2013}.
To further complicate the state of affairs, it is also known to be \NP{}-hard to check whether a given measurement map exhibits the $s$-sparse RIP with RIP constant small than a given $\delta$~\cite{bandeira_certifying_2013}.

For this reason, the field of compressed sensing uses probabilistic constructions to arrive at provably sampling optimal measurement maps.
Using a random ensemble of measurement maps of sampling optimal dimension one establishes 
that with high probability a randomly drawn instance will exhibit the RIP property. 
In other words, one proves that the originally hard linear inverse problem typically becomes easy for a certain measurement ensemble. 
Arguably, the simplest measurement ensemble consists of observables given by i.i.d.\ chosen random Gaussian matrices.
In our setting a fully Gaussian measurement map can be constructed from a set of $\{A_i \in \RR^{nd\times d}\}_{i=1}^m$ of $m$ Gaussian matrices with entries draws i.i.d.\ from the normal distribution $\mathcal N (0,1)$ and defining 
$
	y^{(l)} = \Tr(A_i X)
$.

As a toy model for quantum tomography it is more natural to consider observables drawn from a random ensemble of Hermitian matrices such as the Gaussian unitary ensemble ($\operatorname{GUE}$). Operationally, we define the $\operatorname{GUE}$ by drawing a matrix $X$ with complex Gaussian entries, $X_{k,l} \sim \mathcal N(0,1) + \i \mathcal N(0,1)$, and subsequently projecting $X$ onto Hermitian matrices using $P_\scrsym{}: X \mapsto \frac12 (X + X\ad)$.
For measurement maps from $\operatorname{GUE}$ we prove the following statement:
\begin{theorem}[$\hat\Omega^{n,d}_{s,r}$-RIP for random Hermitian matrices.]\label{thm:OmegaRIPGauss}
Let $\{ {A}^{(k)}_{i}\}_{i=1, k=1}^{n,m}$ be a set of Hermitian matrices drawn i.i.d.\ from the $\operatorname{GUE}$. 
Let $\mathcal{A}$ be the measurement operator defined by $\{ {A}^{(k)}_{i}\}_{i=1, k=1}^{n,m}$ via Eqs.~\eqref{eq:observed_data} and \eqref{eq:measurement_observable}. 
Then $\frac{1}{\sqrt{m}}\mathcal{A}$ satisfies the $\hat\Omega^{n,d}_{s,r}$-RIP with parameter $\delta_{s,r}$ with probability at least $1-\tau$ provided that 
\begin{equation}
	\label{eq:rip measurements}
	m \geq \frac{C}{\delta_{s,r}^2}\left[ s \ln\frac{\e n}{s} + (2d+1)rs \ln\frac{c}{\delta} + \ln \frac2\tau \right]
\end{equation}
for sufficiently large numerical constants $C, c>0$. 
\end{theorem}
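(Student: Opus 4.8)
The plan is to follow the classical two-step recipe for RIP bounds of random ensembles: a point-wise concentration estimate for one fixed signal, promoted to a uniform bound over all of $\hat\Omega^{n,d}_{s,r}$ by a covering-net argument whose metric entropy supplies the three terms of \eqref{eq:rip measurements}.

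First I would fix $X \in \hat\Omega^{n,d}_{s,r}$ with $\fnorm{X}=1$, writing $X_k$ for its $k$-th $d\times d$ block. By \eqref{eq:observed_data} and \eqref{eq:measurement_observable} the $i$-th datum $y^{(i)} = \sum_{k=1}^n \langle A_k^{(i)}, X_k\rangle$ is a linear functional of independent GUE entries, hence a centered real Gaussian, and a short computation from the ensemble definition gives $\EE|y^{(i)}|^2 = \sum_k \fnorm{X_k}^2 = 1$. Because distinct data use independent matrices, $\frac1m\lTwoNorm{\mc A(X)}^2$ is an average of $m$ i.i.d.\ unit-variance squared Gaussians, i.e.\ a normalized chi-squared variable of mean $1$, so a Bernstein-type (sub-exponential) bound yields $\Pr[\,|\frac1m\lTwoNorm{\mc A(X)}^2 - 1| > t\,] \le 2\exp(-cmt^2)$ for $0<t\le 1$ and an absolute constant $c>0$.

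Next I would bound the covering number of $T \coloneqq \{X \in \hat\Omega^{n,d}_{s,r} : \fnorm{X}=1\}$, which factorizes into a discrete and a continuous part. The discrete part is the choice of the $s$ active blocks, contributing $\binom{n}{s}\le(\e n/s)^s$ and hence the $s\ln(\e n/s)$ term. For each fixed support the admissible blocks are rank-$r$ Hermitian matrices, and the unit sphere of rank-$r$ matrices in $\CC^{d\times d}$ carries an $\epsilon$-net of size $(c'/\epsilon)^{(2d+1)r}$; taking the product over the $s$ blocks, and absorbing the low-dimensional spread of the block norms, gives an $\epsilon$-net $\mathcal N_\epsilon$ of $T$ with $\ln|\mathcal N_\epsilon| \le s\ln(\e n/s) + (2d+1)rs\ln(c'/\epsilon)$. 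Applying the point-wise bound with $t=\delta/2$ to each net point and taking a union bound, the probability of any violation is at most $|\mathcal N_\epsilon|\cdot 2\exp(-cm\delta^2/4)$; demanding this be below $\tau$ reproduces \eqref{eq:rip measurements}, with the $\ln(2/\tau)$ term coming from the target failure probability. To pass from $\mathcal N_\epsilon$ to all of $T$ I would use the standard perturbation/telescoping argument, expanding any $X\in T$ around its nearest net point with increments in the difference set $T-T\subseteq\hat\Omega^{n,d}_{2s,2r}$, controlling $\frac1{\sqrt m}\mc A$ on this enlarged set by the same net estimate, and choosing $\epsilon$ a small constant multiple of $\delta$ to absorb the approximation error into the constants $C,c$.

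I expect the main obstacle to be the net-to-full-set extension for the low-rank factor rather than the concentration step, which is essentially classical. Since the rank-$r$ matrices form a manifold rather than a finite union of subspaces, I must both derive the $(2d+1)r$ covering exponent cleanly -- tracking the Hermitian and complex structure so as not to acquire spurious factors of two -- and verify that the telescoped increments genuinely stay rank-$2r$ and $2s$-block-sparse, which is precisely why an RIP stated on $\hat\Omega^{n,d}_{s,r}$ is effectively controlled on $\hat\Omega^{n,d}_{2s,2r}$. Keeping this bookkeeping of constants consistent, so that the final estimate collapses to the clean form \eqref{eq:rip measurements}, is the delicate part; the probabilistic content is otherwise routine.
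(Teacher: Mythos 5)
Your first two steps coincide with the paper's own proof: the pointwise estimate is exactly its Lemma~\ref{lem:concentration} (each datum $y^{(k)} = \sum_i \langle A_i^{(k)}, x_i\rangle$ is a centered Gaussian of variance $\fnorm{X}^2$, because the Hermitian blocks only couple to the Hermitian part of the complex Gaussian matrices, and Bernstein for the centered sub-exponential squares gives $2\exp(-m\delta^2/C_\delta)$), and your entropy count matches its Lemma~\ref{lem:covering}, which unions the Strohmer--Wei/Candes--Plan low-rank nets of size $(9/\epsilon)^{(2d+1)sr}$ over the $\binom{n}{s}$ block supports; the union bound at level $\delta/2$ then produces precisely the three terms of \eqref{eq:rip measurements}.

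The gap is in your net-to-set extension, which you yourself flag as the delicate point but do not actually resolve. ``Controlling $\frac{1}{\sqrt m}\mc A$ on the enlarged set $\hat\Omega^{n,d}_{2s,2r}$ by the same net estimate'' is circular: a net estimate controls only the net points, and the telescoping recursion that closes this loop for fixed-support sparse vectors (where residuals stay in the same finite union of subspaces) does not terminate here --- the residual of a rank-$\le 2r$ block against its own net point can have rank up to $4r$, and the ranks grow at every telescoping level, which is exactly the manifold obstruction you mention. The paper closes the loop with two devices absent from your sketch. First, its net is \emph{support-aligned} (the second statement of Lemma~\ref{lem:covering}): $\bar{\mat X}$ vanishes on every block where $\mat X$ does, so $\mat X - \bar{\mat X}$ has only $s$ active blocks of rank $\le 2r$, and grouping eigenvalues splits it as $\mat X - \bar{\mat X} = \mat B + \mat C$ with $\mat B, \mat C \in \hat\Omega^{n,d}_{s,r}$ and $\langle \mat B, \mat C\rangle = 0$, giving $\lTwoNorm{\mc A(\mat X - \bar{\mat X})} \le \sqrt{2}\, \kappa_{s,r} \fnorm{\mat X - \bar{\mat X}}$ with $\kappa_{s,r} \coloneqq \sup_{\mat X \in \bar\Omega^{n,d}_{s,r}} \lTwoNorm{\mc A(\mat X)}$ --- no enlarged set is needed at all. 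Second, and this is the missing mechanism, the resulting bound is made \emph{self-referential}: with $\epsilon = \delta/(4\sqrt 2)$ one obtains $\kappa_{s,r} \le 1 + \delta/2 + (\delta/4)\kappa_{s,r}$, which solves to $\kappa_{s,r} \le 1+\delta$, and the lower RIP bound follows by the reverse triangle inequality. Passing through $\hat\Omega^{n,d}_{2s,2r}$ instead would only cost constants, but without this self-bounding supremum argument (or an equivalent one) your extension step is incomplete; with it added, your outline becomes the paper's proof.
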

The proof of the theorem is provided in Appendix~\ref{app:RIPguarantees}.
Based on the result for random Hermitian measurement maps we now discuss the asymptotic scaling of the measurement complexity of our approach to the blind tomography problem and the sparse de-mixing problem.
First, the derived measurement complexity \eqref{eq:rip measurements} is in accordance with the degrees of freedom of signal $X \in \hat{\Omega}^{n,d}_{s,r}$. 
The second term of $O(drs)$ corresponds to the number of degrees of freedom specifying the $s$ rank-$r$ matrices of dimension $d$. 
The first term of $O(s\ln n)$ is the minimal 
sampling complexity in $s$ for learning the $s$ non-trivial entries and their support \cite{FoucartRauhut:2013}.
Second, in analogy, we expect the optimal number of measurements for the blind tomography problem, i.e.,\ reconstructing signals in $\Omega_{s,r}^{n,d}$ instead of $\hat\Omega_{s,r}^{n,d}$, to scale as $O(s\ln n + dr)$. 
Hence, having a provably efficient algorithms capable of solving the blind tomography as well as the sparse de-mixing problem comes at the cost of an increase in the sampling complexity by an additional factor of $s$ in the second term of the sampling complexity. 
Most importantly, invoking the sparsity assumption on the calibration vector $\xi$ allows us to get away without a linear increase $n$ of the number of calibration parameters. 
Thus, the overhead in measurement complexity of our approach to the blind tomography problem is relatively mild. 

In fact, the measurement complexity derived for Gaussian measurements can often be used as a guideline for the sampling complexity of other measurement ensembles that are also sufficiently unstructured. 
However, the proof techniques for model-based compressed sensing that exploit the combination of different structures are not easily translatable to other measurement ensembles. 
An exception are measurement ensembles that feature a structure that is sufficiently aligned with the signal structure such as the one exploited in Ref.~\cite{RothEtAl:2016} for hierarchically sparse signals. 
We leave the study of more involved measurement ensembles to future work. 

\section{Numerical results}\label{sec:numerics}
The analytical results of the previous section provide worst-case bounds on the asymptotic
scaling for a class of 
idealized, unstructured measurements.
In order to benchmark and assess the non-asymptotic performance of compressed sensing algorithms in practice, however, numerical simulations are indispensable. 
In a first step we therefore perform numerical simulations for the idealized measurement model as given by random $\operatorname{GUE}$ matrices, 
comparing the performance of our algorithm to related established algorithms that do not entirely exploit the structure of the problem.
In a second step, we compare the SDT algorithm~\ref{alg:SDTalgCompact} with standard CS tomography in a blind tomography setting involving measurements of Pauli correlators, cnf.~\eqref{eq:targetSO}. 
To do so we randomly draw subsets of the possible Pauli measurements as possible calibrations $\mc A_i$ of the measurement apparatus. 
Finally, we demonstrate the feasibility of blind tomography under structure assumptions in the realistic measurement and calibration setting involving single-qubit coherent errors described in Section~\ref{sec:application}. 
To this end, we employ the Algorithm~\ref{alg:ALSBT} that performs alternating constrained optimization.
The algorithms and the scripts producing the plots have been implemented in Python and will be made available under Ref.~\cite{numerics_repo}.

\begin{figure}[tb]
	\includegraphics[width=.5\textwidth]{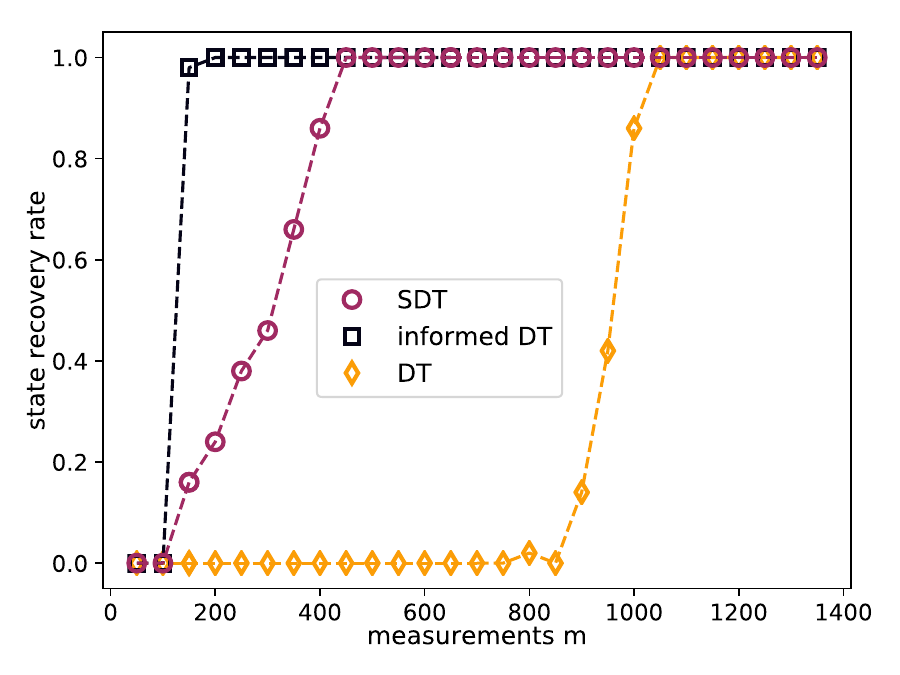}
	\caption{
	\label{fig:numericsGauss} 
	The recovery rate for the SDT, DT and informed DT algorithm for different number of observables $m$ for GUE measurements. Each point is averaged over $50$ random measurement and signal instances with $r=1$, $d=16$, $n=10$ and $s=3$.
	A signal is considered successfully recovered if its Frobenius norm deviation from the original signal is smaller than $10^{-3}$. 
	One observes nearly coinciding recovery performances for the informed DT and the SDT algorithm. In
	 comparison, the DT algorithm requires significantly more observables for recovery. 
	}
\end{figure}

\begin{figure*}[tb]
    \centering
\includegraphics[width=1\textwidth]{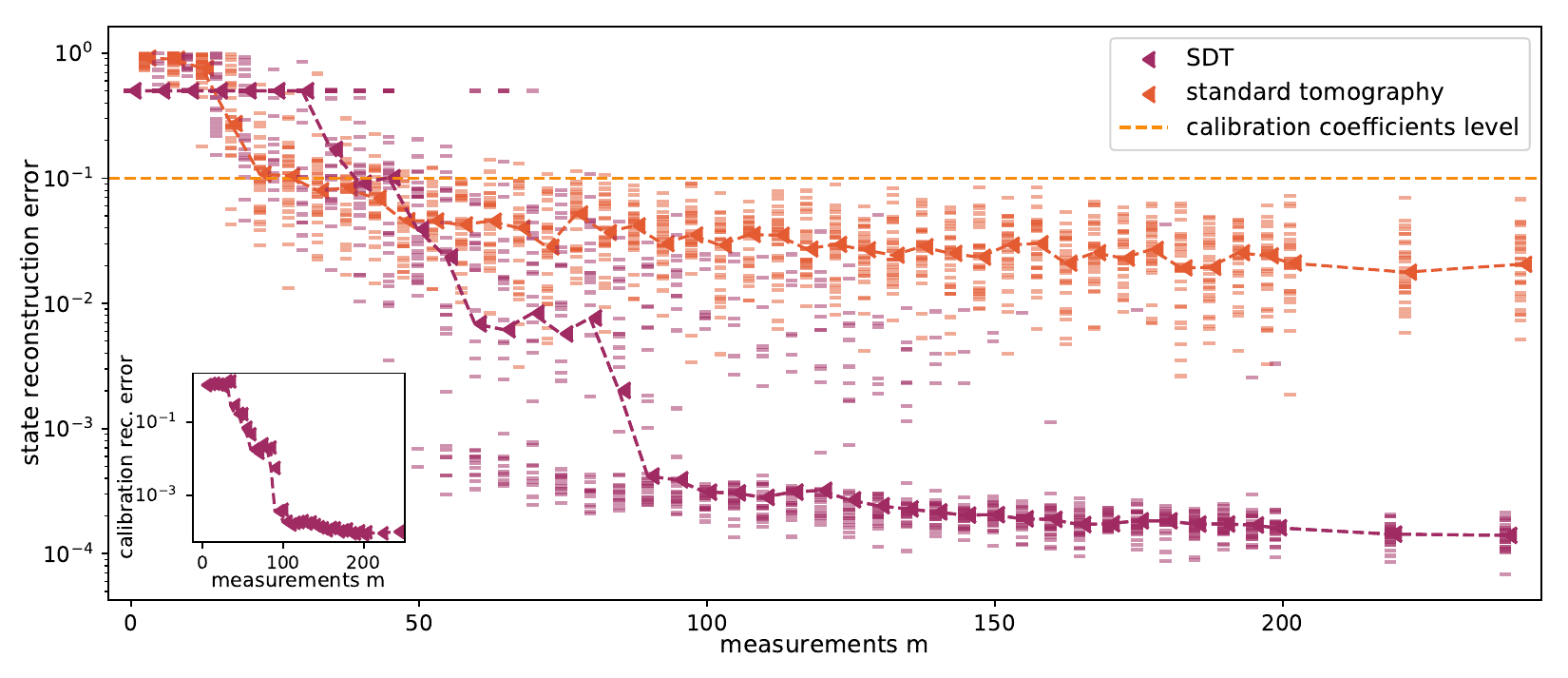}
	\vspace{-.7cm}
	\caption{ 
	\label{fig:numericsrandompauli}
	The trace norm reconstruction error for the SDT compared to the standard tomography algorithm for different number of observables $m$ for sub-sampled random Pauli measurements. Each point is averaged over $30$ random measurement and signal instances with $r=1$, $d=8$, $n=10$ and $s=3$. The inline figure shows the mean $\ell_2$-norm reconstruction error of the calibration coefficients for the SDT algorithm.
	}
\end{figure*}

\subsection{\texorpdfstring{$\operatorname{GUE}$}{GUE} measurements}

The SDT algorithm goes beyond existing IHT algorithms for the de-mixing problem 
of low-rank matrices in that it additionally allows one to exploit a sparse mixture. 
We demonstrate that this yields a drastic and practically important improvement 
in the number of measurement required for the reconstruction. 

To this end, we draw signal instances $X = \xi \otimes \rho$ at random from $\Omega^{n,d}_{s,r}$. 
We use four qubit pure states $\rho = \ketbra\psi\psi$ with $r=1$ and $d=16$, where $\ket\psi$ is drawn uniformly (Haar) random from the complex $\ell_2$-norm sphere. 
The calibration vector $\xi \in \RR^n$ with $n=10$ has a support of size $s=3$ drawn uniformly from the set of all $\binom{n}{s}$ possible supports. The non-vanishing entries of $\xi$ are normal distributed with unit variance.
The measurements are drawn at random from the $\operatorname{GUE}$ ensemble as defined above with a varying number of observables $m$. 

The closest competitor to the SDT algorithm is the related algorithm of Ref.~\cite{StrohmerWei:2017}. 
The algorithm of Ref.~\cite{StrohmerWei:2017} coincides with the special case of the SDT 
algorithm where we use the projection on to $\hat\Omega^{n,d}_{n,r}$ with $s=n$ ignoring 
the sparsity in the block structure. We will refer to this algorithm as the \emph{DT algorithm}.
We can also give the DT algorithm the `unfair' advantage of restricting the problem to the correct block support of the signal from the beginning. 
We will refer to this variant as the \emph{informed DT algorithm}. 

Figure~\ref{fig:numericsGauss} shows the recovery rate for the SDT algorithm, the DT algorithm and its informed variant for different $m$. 
Each point is average over $50$ random signal and measurement instances. 
We consider a signal as successfully recovered if the distance of the algorithm's output to the original signal is smaller than $10^{-3}$ in Frobenius norm. 
The algorithm terminated if either the stopping criterion~\eqref{eq:stoppingcriterion} with $\gamma_\text{break} = 10^{-5}$ is met or after a maximal number of 600 iteration.
We observe that if one of the algorithm successfully 
recovers a signal it typically meets the stopping criterion after less than 100 iterations.

The curves for all three algorithm in Figure~\ref{fig:numericsGauss} 
display a sharp phase transition
from a regime where the number of measurement is too 
small to recover any signal to a regime of reliable recovery. 
While the phase transition for the SDT algorithm appears in a similar regime to the informed DT algorithm, the DT algorithm requires considerably more samples in order to recover the signal instances. 

We conclude that the sparsity of the calibration parameters can be exploited by the SDT algorithm to considerably reduce the required number of measurements. 
Even more so, this does not require many more sampling points as compared to an algorithm which is given \emph{a priori} knowledge which errors were present, that is, the block support of the signal. 
This shows that the SDT algorithm solves the de-mixing and blind tomography task in a highly efficient way and scalable. 
Finally, the number of possible erroneous measurements $\mc A_i$ can be scaled up at a very low cost in terms of required measurement settings.

\begin{figure*}[tb]
    \centering
\includegraphics[width=1\textwidth]{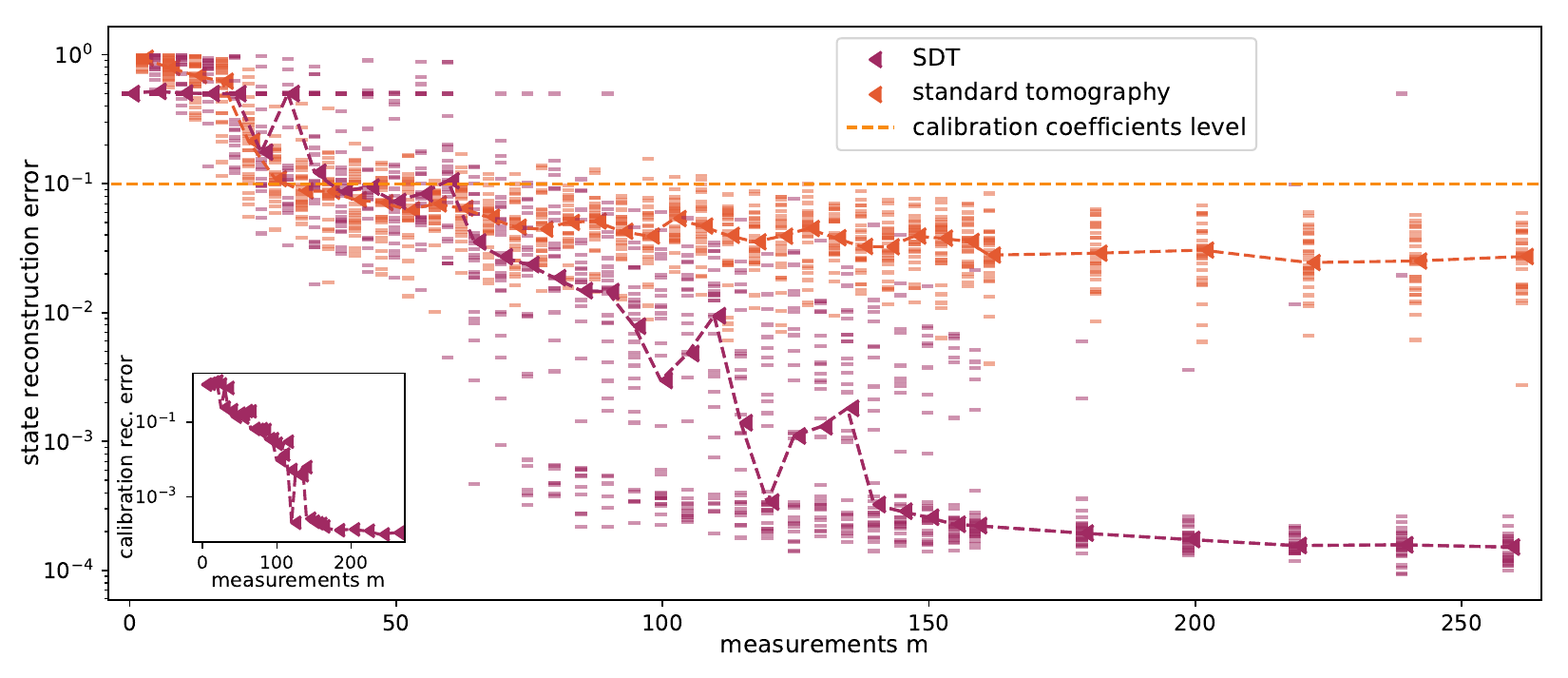}
	\vspace{-.7cm}
	\caption{ 
	\label{fig:numericsrandompauli2}
	The trace norm reconstruction error for the SDT compared to the standard tomography algorithm for different number of observables $m$ for sub-sampled random Pauli measurements. Each point is averaged over $30$ random measurement and signal instances with $r=1$, $d=8$, $n=10$ and $s=4$. The inline figure shows the mean $\ell_2$-norm reconstruction error of the calibration coefficients for the SDT algorithm.
	}
\end{figure*}

\subsection{Sub-sampled Pauli measurements}

For the application in characterizing quantum devices, it is key to compare the recovery performance of the SDT algorithm with standard low-rank quantum tomography algorithms. To this end note that the SDT algorithm restricted to $n, s=1$ is also a state-of-the-art algorithm for standard low-rank state tomography without the on-the-fly calibration. 
Thus, we will make use of this implementation of conventional low-rank state tomography in the following.

We draw signal instances as before but using three-qubit states, $s \in \{3,4\}$ and altering the model for the calibration parameter: 
We set the first entry of $\xi$ to $\xi_0 = 1$. The support of the remaining entries is drawn uniformly at random. 
The non-vanishing entries are then i.i.d.\ taken from the normal distribution rescaled by a factor of $1/10$. 
This mimics a setting where we have a dominant target measurement and a couple of small systematic deviation from a known set of candidates. The target measurements as well as the systematic deviations are uniformly sub-sampled Pauli observables. Thus, $\mathcal A_0$ till $\mathcal A_n$ have the form of \eqref{eq:targetSO} with differently i.i.d.\ selected Pauli observables uniformly selected from $\{\Id, X, Y, Z\}$.
We simulate statistical noise using $10^{8}$ samples per expectation value in order to realistically limit the resolution of the SDT algorithm.

We simultaneously perform recoveries with the SDT algorithm using the entire measurement matrix including the calibration measurement components and the SDT algorithm using only the target measurement $\mathcal A_0$ as in a conventional tomography setting.

The resulting trace distance of the state estimate, i.e.,\ the trace-normalized first block of $X$, from the original $\rho$ is shown for different number of measurements in Figure~\ref{fig:numericsrandompauli} and Figure~\ref{fig:numericsrandompauli2} for different sparsity $s = 3$ and $s= 4$, respectively. The curves indicate the median over the depicted 30 sample points per value of $m$.
The inline plot of both figures show the $\ell_2$-norm deviation of the reconstructed calibration parameters and the original $\xi$.

\begin{figure*}[tb]
	\includegraphics[width=1\textwidth]{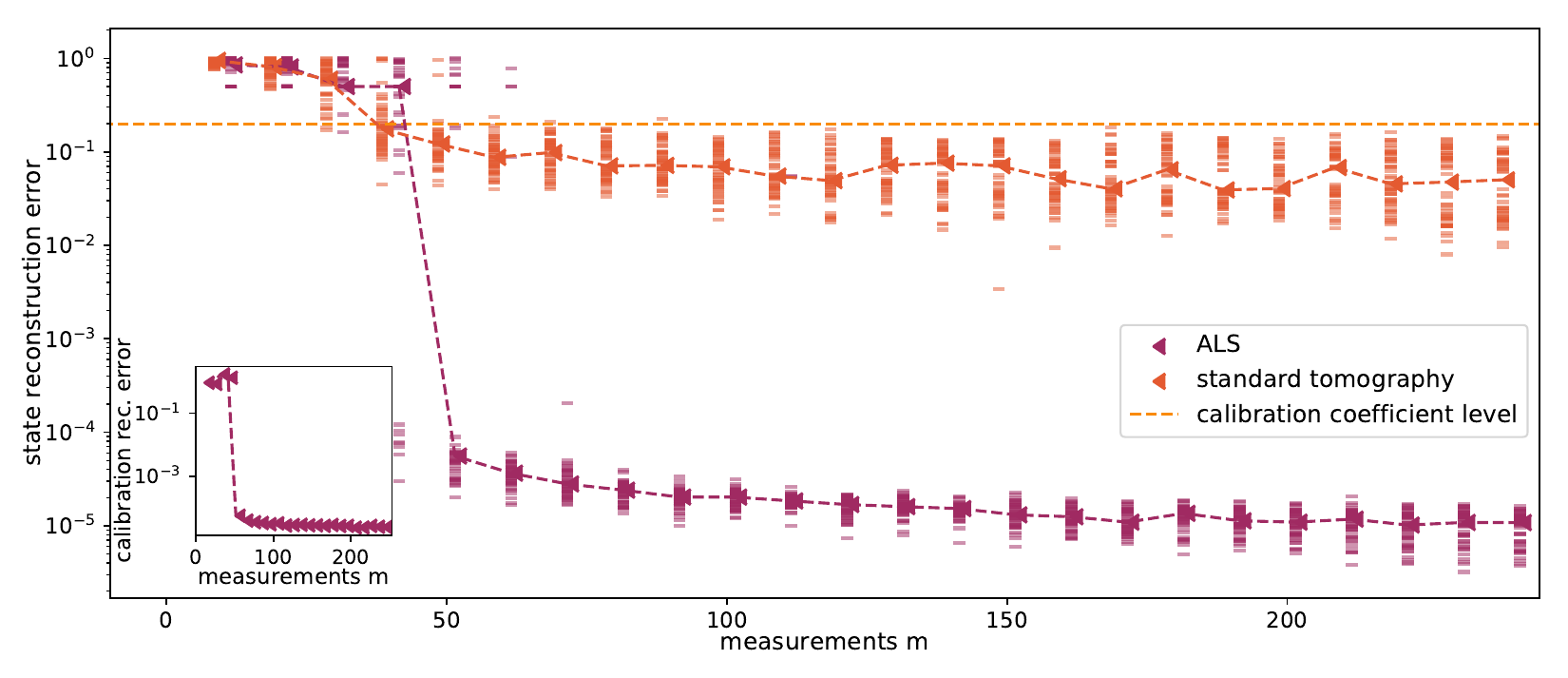}
	\vspace{-.7cm}
	\caption{\label{fig:ALSsparsity2}The trace norm reconstruction error for the ALS compared to the standard tomography algorithm for different number of observables $m$ for Pauli measurements with coherent single-qubit errors. Each point is averaged over $50$ random measurement and signal instances with $r=1$, $d=16$, $n=7$ and $s=2$. The inline figure shows the mean $\ell_2$-norm reconstruction error of the calibration coefficients for the {ALS} algorithm.}
\end{figure*}

One observes that the conventional low-rank tomography becomes more accurate with an increasing number of measurement but is asymptotically still bounded from below by the systematic error induced by the calibration on the order of $10^{-1}$. 
This agrees with the order of magnitude of variance of the calibration coefficients. 
In contrast, the SDT algorithm while performing slightly worse in a regime of insufficient measurements outperforms the conventional algorithm for a moderate number of samples and is ultimately only limited by the statistical noise.
However, in the parameter regime under investigation there are even for large number of samples $m > 150$ a small number (well below 10\%) of instance where SDT only reaches an accuracy comparable to standard tomography. 
In these instances we find that the support for the calibration measurement components was incorrectly identified. For $s=4$ we furthermore observe one pathological instance of SDT for $m=240$ that is worse in recovery than standard tomography is in this regime. 
For $s=4$ the phase transition of SDT appears for a slightly larger values of $m$ compared to $s=3$. 
The curves for the reconstruction error of the quantum state approximately coincide with the curves for the error in the calibration parameter. 
We conclude that for a sufficient number of measurement settings, the SDT algorithm almost always performs a significantly more accurate state reconstruction and simultaneously extracts the calibration parameters. 
The precision is ultimately only limited by the statistical error in the estimation of the expectation values. 
\begin{figure*}[tb]
	\includegraphics[width=1\textwidth]{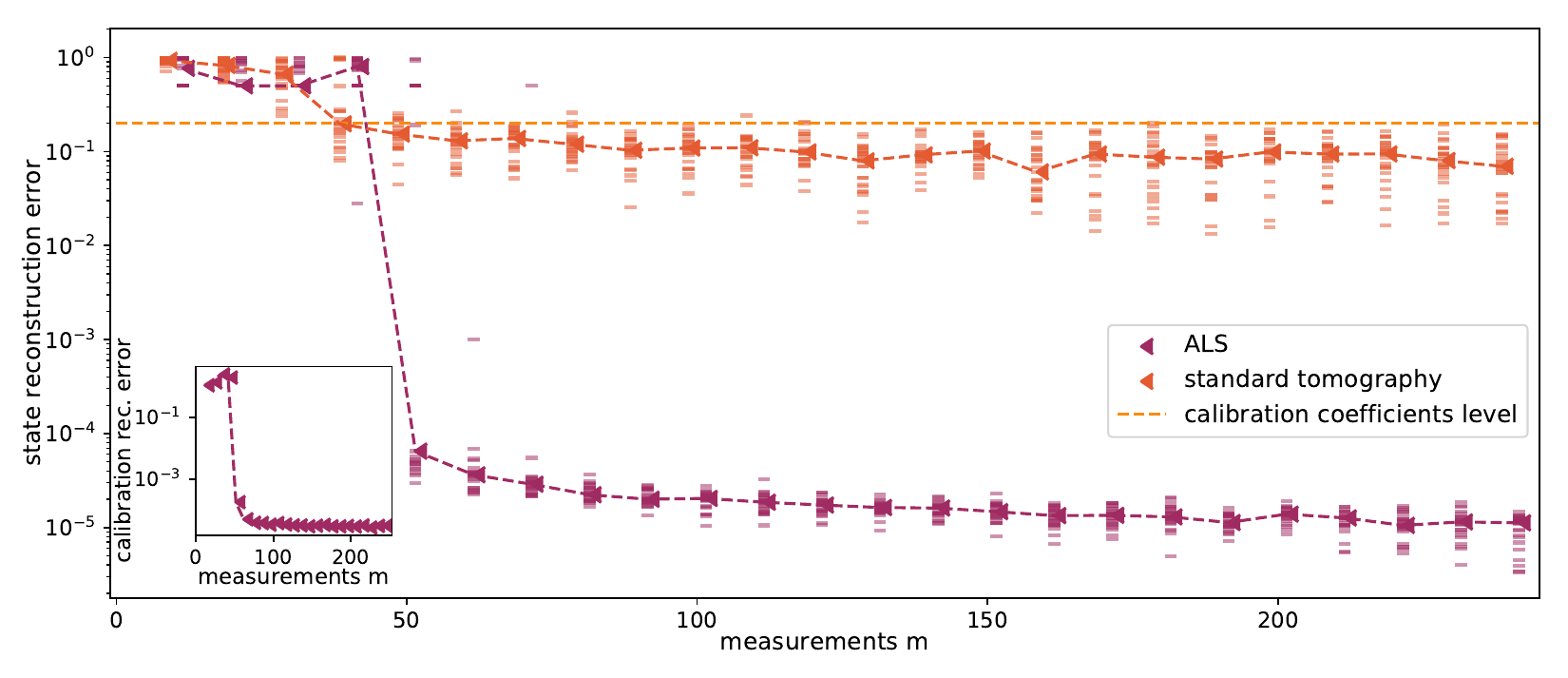}
	\vspace{-.7cm}
	\caption{\label{fig:ALSsparsity3}The trace norm reconstruction error for the ALS compared to the standard tomography algorithm for different number of observables $m$ for Pauli measurements with coherent single-qubit errors. Each point is averaged over $30$ random measurement and signal instances with $r=1$, $d=16$, $n=7$ and $s=3$. The inline figure shows the mean $\ell_2$-norm reconstruction error of the calibration coefficients for the {ALS} algorithm.}
\end{figure*}

\subsection{Pauli measurements with coherent single-qubit errors}
We now come back to the concrete realistic scenario described in Section~\ref{sec:application}. 
There we derived the calibration measurement model originating from coherent errors in the gates that implement the single-qubit measurements. 

For the numerical simulations, we draw a set of $m$ Pauli observables uniformly at random as the target measurement. 
Subsequently, we introduce six calibration blocks such that every observable in the set $\{X, Y, Z\}$ is swapped with another Pauli observable in $\{X, Y, Z\}$ in a specific block. 
We generate data $y$ for given states and calibration parameters using the linear calibration measurement model without noise as induced by finite statistics.

We find that in the parameter regimes that are easily amenable to numerical studies on desktop hardware the SDT~algorithm is not capable of successfully reconstructing the states when the calibration parameters for the corrections are considerably smaller than the leading order measurement. 
To thoroughly understand this limitation, in the following, we briefly report the performance of the SDT algorithm on different sub-tasks related to the recovery problem.

First, we choose $d= 16$ and $n=s=1$ such that only a single block, either the ideal measurement or one of the correction blocks, is used to generate the signal from a random pure state ($r=1$).
We observe that the SDT algorithm is able to recover the signals in this standard tomography problem. 
This indicates that also the calibration blocks individually allow for tomographic reconstruction of low-rank states.
Second, the SDT algorithm can discriminate between different mixtures of the six correction blocks. 
To demonstrate this, we ignore the ideal measurement and employ only the correction blocks to generate the signal. 
We set the active calibration coefficients to one. Thus, $n=6$, $s\leq n$ and $\xi_i = 1$ for $i$ active.
We observe that given a sufficient number of measurement settings the SDT algorithm correctly reconstructs pure states in this measurement setting.
The same findings hold true if the target measurement is again considered as long as the active calibration coefficients are set to $1$. 
We observe successful reconstructions of unit rank states for $n=7$ and $s\in{1,2,3}$.

A more natural setting however would typically have calibration coefficients that are considerably smaller than the ideal measurement. 
This justifies the linear expansion for the measurement model in the first place. 
If we choose, e.g., $\xi_i = 1/10$ for the indices $i$ of active blocks, we were unable to identify a parameter regime on desktop hardware where the SDT algorithm can successfully recover the majority of instances of pure states. 
We observe that if the SDT algorithm settles on an objective variable with an incorrect block support in the first few iterations, 
it is not able to subsequently run into objective variables with a different block support in most instances. 
Despite the negative result for the SDT algorithm in the most realistic setting, the general mindset to exploit structure (low-rankness) to allow quantum state tomography in a blind fashion is fruitful using a slightly different algorithmic strategy. 

To this end, we use the constrained \emph{alternating least square} (ALS) algorithm described in Section~\ref{sec:algorithm:als}. 
We set the first calibration coefficient corresponding to the ideal measurement to one. The support of the remaining active calibration coefficients is drawn uniformly at random and their value are i.i.d.\ drawn from a shifted normal distribution with standard deviation $0.05$ and mean value $0.2$. 
We use Haar random pure states, $r=1$ of a four-qubit system, $d=16$, as the target states. 

The algorithm is initialized with a Haar-randomly drawn pure state. We allow for a maximal number of $1000$ iterations of the algorithm or terminate if the criterion \eqref{eq:stoppingcriterion} with $\gamma_{\text{break}} = 10^{-5}$ is met. 
Furthermore, if the stopping criterion is not met after
$50$ iterations, we re-initialize the algorithm with a new random pure state. 
We allowed for a maximal number of $10$ or $20$ re-initializations for $s=2$ and $s=3$, respectively. 
We observe that in case of successful recovery typically at most $3$ re-initializations are required with most instances already correctly converging from the initial state.

As in the previous section, we compare the recovery performance of the ALS with the standard low-rank tomography algorithm.
The trace-norm error and calibration error for different numbers of measurement settings for $s=2$ and $s=3$ are displayed in Figure~\ref{fig:ALSsparsity2} and \ref{fig:ALSsparsity3}, respectively.
We observe that, as expected, the reconstruction error of standard low-rank tomography is again lower-bounded by a scale set by the magnitude of the calibration parameters. In contrast, with an only slightly larger number of measurement settings, the constrained ALS algorithm is capable of recovering the states and the calibration parameter with an accuracy that is improved by orders of magnitude and in the noiseless scenario only limited by the algorithms stopping criterion.  
Compared to recovery performance of the SDT algorithm we observe an even sharper phase transition to the regime of recovery.

{
Finally, in practice due to further imperfections, e.g.\ incoherent noise, the underlying state and calibration vector will actually be only approximately of low-rank and approximately sparse, respectively.
In order to probe the robustness of the ALS algorithm against such model mismatch, we generate measurement data as before for $s=2$, choosing $m=130$ and either add depolarizing noise to the state of different strength or replace the vanishing coefficient in the calibration vector by the absolute value of random Gaussian variables of different standard deviation.
Figure \ref{fig:model_mismatch} displays the reconstruction error of the state and calibration coefficients against the amount of model mismatch, as determined by the depolarizing strength and Gaussian widths. 
We observe a linear dependency of the reconstruction errors with the model mismatch and conclude that the method is robust enough to tolerate small deviations from the structure assumptions.
}

\begin{figure}[tb]
	\includegraphics[width=.5\textwidth]{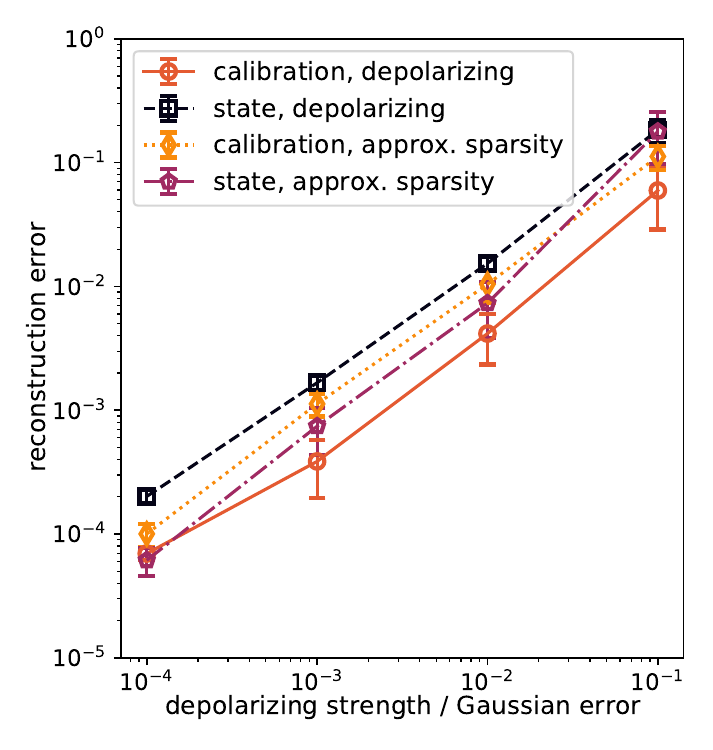}
	\caption{
	\label{fig:model_mismatch}
	{
	The reconstruction error of the state and calibration coefficients for the 
	ALS algorithm when adding model mismatch in terms of depolarizing noise on the state or Gaussian noise to the zero entries of the calibration coefficients. 
	Each point is averaged over $20$ random measurement and signal instances with $r=1$, $d=16$, $n=10$, $s=2$ before adding the model mismatch. 
	We use $m=130$ Pauli measurements with coherent single-qubit errors. The error bars indicate a half of the standard deviation. 
	The reconstruction error increases proportional to the model mismatch indicating robustness of the method.}
	}
\end{figure}

\section{Summary and outlook}\label{sec:outlook}

In this work, we have shown that the natural assumption of low-rankness allows one to perform self-calibrating quantum state tomography. 
Relaxing the blind tomography problem to a sparse de-mixing problem has allowed us to develop an efficient classical post-processing algorithm, the SDT algorithm, that is theoretically guaranteed to recover both  
the quantum state and the device calibration under a restricted isometry condition of the measurement model. 
We have demonstrated the necessity of relaxing the blind tomography problem within the framework of hard-thresholding algorithms by establishing the \NP{}-hardness of the projection onto the set consisting of the outer products of vectors and fixed-rank matrices. 
Introducing a sparsity assumption on the calibration coefficients ensures that the reconstruction scheme can already be applied for fairly small system dimension. 
We have explicitly proven that a Gaussian random measurement model meets the required restricted isometry condition with a close-to-optimal measurement complexity in $O(s \ln n + d rs)$. 
Furthermore, we have numerically demonstrated an improved performance of the SDT algorithm for random instances of measurement models compared to previously proposed non-sparse de-mixing algorithms and standard low-rank state tomography.

While these generic measurement and calibration models allows us to derive analytical guarantees, it is fair to argue that these models might at best capture some aspects of actual experimental implementations.  
A potential starting point for extending recovery guarantees to more realistic settings 
is the generalization of our results to random Pauli measurements as considered in Sec.~\ref{sec:numerics} \cite{Liu:2011} together with the coherence measures and structured measurement guarantees developed in the context of hierarchically spares signals \cite{SprechmannEtAl:2011,RothEtAl:2016,RothEtAl:2018:Kronecker,HierarchyIEEE}.

To complement our conceptually and rigorously minded work with a more pragmatic approach, we have  additionally developed and implemented a structure-exploiting blind tomography algorithm based on alternating optimization. 
We have numerically demonstrated that the alternating algorithm is able to perform self-calibrating low-rank tomography in a realistic measurement and calibration model that is well-motivated by gate implementations in ion traps.
These numerical simulations indicate that the approach to the blind tomography problem developed here might be well-suited 
to improve tomographic diagnostics in current experiments. 
Ultimately, the recovery performance of the proposed algorithms has to be evaluated on experimental data. 
It is the hope that this work contributes to establishing a new mindset in quantum system identification 
and specifically tomographic recovery in which no component used has to be precisely known, but still under
physically meaningful structural assumptions, a mindset
here referred to as being semi-device-dependent.

\appendix
\section{Hardness of projection}\label{app:hardnessproof}

As a starting point we state the SparsePCA problem. 
\begin{problem}[SparsePCA]
	\emph{Input:} Symmetric matrix $A \in \RR^{n\times n}$, sparsity $s$, positive real number $a > 0$. 
	\emph{Question:} Does there exist an $s$-sparse unit vector $v \in \RR^n$ with $v^TAv \geq a$?
\end{problem}

It has been folklore for quite some time that the sparse PCA problem is \NP{}-hard. A formal proof can be found in Ref.~\cite{Magdon-Ismail:2017}, where the CLIQUE problem is encoded into instances of SparsePCA.
From the hardness of SparsePCA it follows that there does not exist a polynomial time algorithm for the projection onto the set of symmetric, unit rank matrices with sparse eigenvectors{, unless $\Poly{}=\NP$.} 
Formally, we have:
\begin{proposition}[Hardness of projection onto the set of symmetric, unit rank matrices with sparse eigenvectors]
Given a matrix $A \in \RR^{d\times n}$ and $s, \sigma \in \NN$, there exist no polynomial time algorithm that calculates
\begin{equation*}
\begin{split}
	& \operatorname*{minimize}\quad \| A - vw^T\|_F ,\\& \text{\upshape subject to $v \in \Sigma^d_\sigma$, $w\in \Sigma^n_s$}\,,
\end{split}
\end{equation*}
{unless $\Poly{}=\NP$.} 
This still holds for $\sigma = d$.
\end{proposition}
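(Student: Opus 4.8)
The plan is to establish hardness by a polynomial-time reduction from the SparsePCA problem stated above, whose \NP{}-hardness is assumed. The idea is that an oracle for the projection $\min\|A-vw^T\|_F$ (in the relevant case $\sigma=d$, where $v$ is unconstrained) computes, up to a complementary shift, the optimal sparse-PCA value of a symmetric matrix, and hence decides SparsePCA. Since $\sigma=d$ is a special case of the general projection, proving the statement for $\sigma=d$ already suffices.

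First I would reduce an arbitrary SparsePCA instance $(M,s,a)$ with symmetric $M\in\RR^{n\times n}$ to a positive semidefinite one: using $v^T(M+cI)v = v^TMv + c$ for unit $v$, replace $M$ by $M' = M + cI$ with $c$ an efficiently computable bound $c \ge -\lambda_{\min}(M)$ (e.g.\ $c=\|M\|_F$), so that $M'$ is positive semidefinite and the threshold becomes $a' = a+c$. I then factor $M' = A^T A$ by setting $A = (M')^{1/2}\in\RR^{n\times n}$ and take $d=n$; this $A$ is the input fed to the projection oracle.

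The crux is a decomposition identity. For a fixed column support $S=\supp w$ with $|S|\le s$, the matrix $vw^T$ vanishes on all columns outside $S$, so $\|A-vw^T\|_F^2 = \|A_{S^c}\|_F^2 + \|A_S - v w_S^T\|_F^2$, where $A_S$ is the column submatrix of $A$ indexed by $S$ and $w_S$ the corresponding restriction of $w$. As $v$ is unconstrained ($\sigma=d$), the inner minimum is the best rank-one approximation of $A_S$, namely $\|A_S\|_F^2 - \sigma_1(A_S)^2$ by Eckart--Young, so the minimum over all $w$ supported in $S$ equals $\|A\|_F^2 - \sigma_1(A_S)^2$. Since $A$ is symmetric, $A_S^T A_S$ is the principal submatrix $(A^2)_{S,S} = (M')_{S,S}$, whence $\sigma_1(A_S)^2 = \lambda_{\max}\bigl((M')_{S,S}\bigr) = \max_{\supp w\subseteq S,\,\|w\|=1} w^T M' w$. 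Minimizing over all supports of size at most $s$ and using $\|A\|_F^2 = \Tr(A^2) = \Tr(M')$ then yields
\[
	\min_{v\in\RR^n,\ w\in\Sigma^n_s}\|A - vw^T\|_F^2 = \Tr(M') - \max_{w\in\Sigma^n_s,\,\|w\|=1} w^T M' w .
\]
Consequently, from any minimizer $(v^\ast,w^\ast)$ returned by the oracle I compute its objective value in polynomial time and recover the sparse-PCA value $\Tr(M') - \min\|A-vw^T\|_F^2$, which I compare with $a'$ to decide the instance. Every step is polynomial, so a polynomial-time projection algorithm would place SparsePCA in $\Poly$, forcing $\Poly = \NP$.

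The main obstacle I expect is not the combinatorial argument, which is elementary, but the exact computation of the symmetric square root $A=(M')^{1/2}$ in the bit-complexity model, since a rational $M'$ need not have a rational square root. I would handle this in the standard way for such reductions: either phrase the reduction in the real-arithmetic model, or observe that the finitely many achievable objective values admit a separation bound polynomial in the input size, so that computing $A$ and the residual to that precision is enough to decide the comparison with $a'$ correctly.
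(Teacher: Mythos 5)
Your proof is correct and is in essence the paper's own reduction: both shift the SparsePCA input to a positive semidefinite matrix $M'$, factorize $M' = A^{T}A$, and show that the oracle's optimal value equals $\Tr(M')$ minus the sparse-PCA value, so that a polynomial-time projection would decide SparsePCA. The only genuine difference is how the central identity is derived. The paper makes the scale of $vw^{T}$ explicit with unit-norm factors, eliminates the scalar analytically, and then applies Cauchy--Schwarz to obtain $\max_{v,w}|\langle w, Av\rangle| = \max_{w}\|Aw\|_{\ell_2}$; you instead fix the support $S$ of $w$, split the Frobenius norm column-wise, and invoke Eckart--Young on $A_S$ together with $A_S^{T}A_S = (M')_{S,S}$. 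The two derivations are interchangeable; yours makes the role of the sparsity support explicit, the paper's avoids case-splitting over supports. Two smaller deviations are worth noting. First, you dispose of general $\sigma$ by observing that a general algorithm specializes to $\sigma = d$, which indeed suffices for the proposition as stated; the paper instead pads $A'$ with zero rows, which proves the slightly stronger fact that the doubly sparse problem ($\sigma < d$) is at least as hard as the singly sparse one. Second, you are more careful than the paper about bit complexity: the paper simply asserts a factorization $B = A^{T}A$, whereas you note that $(M')^{1/2}$ need not be rational and propose the standard remedies (real-arithmetic model or separation bounds); for full rationality you would also want to replace $c = \|M\|_F$ by a rational dominating bound such as $c = 1 + \sum_{i,j} M_{ij}^{2}$, since otherwise the shifted threshold $a' = a + c$ is irrational. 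Neither point is a gap in your argument.
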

\begin{proof}
It turns out to be sufficient to only consider the case where $\sigma = d$, i.e.,\ only one of the factors is required to be sparse. 
It is straightforward to see that solving the problem with both vectors being sparse allows one to solve the projection with only one sparse vector:
Define
\begin{equation*}
	A = \begin{pmatrix}
		\vec 0_{d-\sigma, n} \\
		A'
		\end{pmatrix}
\end{equation*}
with $\vec 0_{a,b}$ being an $a\times b$ matrix filled with zeros. 
It then holds that 
\begin{equation*}
	\min_{v\in \Sigma^d_{\sigma}, w\in \Sigma^n_s}\| A - vw^T\|_F = \min_{v'\in \CC^\sigma, w\in \Sigma^n_s} \|A' - v'w^T\|_F. 
\end{equation*}
We now embed the SparsePCA problem. 
To do so we first make the normalization of the vectors $v, w$ in the optimization problem explicit to it to a maximization problem over normalized vectors: 
\begin{equation}\label{eq:onesparsefactor}
\begin{split}
&\min_{v \in \RR^\sigma ,w \in \Sigma^n_s} \|A - vw^T\|^2_F \\
= &\min_{\lambda \in \RR, v \in \RR^\sigma \cap B_{\ell_2}^{\sigma} ,w \in \Sigma^n_s \cap B_{\ell_2}^n} \|A - \lambda vw^T\|^2_F
\end{split}
\end{equation}
with $B_{\ell_2}^n = \{v \in \RR^n \mid \|v\|_{\ell_2} \leq 1 \}$ the $\ell_2$-norm ball. 
Solving the optimization problem over $\lambda$ yields
\begin{align*}
	& \min_{\lambda \in \RR}  \| A - \lambda v w^T\|^2_F \\
	&= \min_{\lambda \in \RR} \left\{ \|A\|_F^2 + \lambda^2 \|v\|_{\ell_2}^2 \|w\|_{\ell_2}^2 - 2 \lambda\langle w, Av \rangle \right\}\\
	&= \|A\|_F^2 - \min_{v \in \RR^\sigma \cap B_{\ell_2}^{\sigma} ,w \in \Sigma^n_s \cap B_{\ell_2}^n} \langle w, A v\rangle^2.
\end{align*}
Since $A$ is fixed we conclude that the optimization problem \eqref{eq:onesparsefactor} is equivalent to 
\begin{equation*}
\begin{split}
	&\qquad\maximize\ |\langle w, Av\rangle|\ \\
	&\text{subject to $v \in \RR^\sigma \cap B_{\ell_2}^{\sigma}$, $w \in \Sigma^n_s \cap B_{\ell_2}^n$}.
\end{split}
\end{equation*}
Furthermore, using the Cauchy-Schwarz inequality we find that
\begin{equation*}
	\max_{v \in \RR^\sigma \cap B_{\ell_2}^{\sigma}, w \in \Sigma^n_s \cap B_{\ell_2}^n} |\langle v, Aw\rangle| = \max_{w \in \Sigma^n_s \cap B_{\ell_2}^n} \|A w\|_{\ell_2}.
\end{equation*}
Now consider an instance of the SparsePCA 
problem with a symmetric input matrix 
$B \in \RR^{n\times n}$, sparsity $s$ and $a>0$. 
W.l.o.g.\ we can assume that $B$ is a positive matrix since solving the SparsePCA problem for the $B - \min{}\{0, \lambda_{\min{}}(B)\}\Id$ 
shifted by the smallest eigenvalue $\lambda_{\min{}}(B)$ of $B$ and $a$ shifted correspondingly, allows one to solve the SparsePCA problem for $B$. For a positive matrix $B$ we find a factorization $B = A^T A$. Hence, deciding whether the maximum over all $w\in \Sigma_s^n$ of $w^T B w$ is larger than $a$ is solved by calculating the maximum of $\|Aw\|_{\ell_2}^2 = w^T B w$.
This completes the reduction. 
\end{proof}

We are now prepared to tackle our related problem: the projection onto $\Omega^{n,d}_{s,r}$. We have the following statement: 
\begin{theorem}[Hardness of constrained minimization]
	There exist no polynomial time algorithm that calculates for all $A \in \CC^{n\times n}$:
	\begin{equation}\label{eq:OmegaProjection}
		\operatorname*{minimize}\quad \| A - X \|_F \quad\text{\upshape subject to $X \in \Omega^{n,d}_{s,r}$},
	\end{equation}
	unless $\Poly{} = \NP{}$.
	This still holds for $s=n$. 
\end{theorem}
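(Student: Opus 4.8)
The plan is to reduce the hardness of the projection onto symmetric unit-rank matrices with one sparse factor---established in the preceding Proposition, which itself encodes sparsePCA---to the projection onto $\Omega^{n,d}_{s,r}$. It suffices to treat the rank-one case $r=1$, since larger $r$ follows by padding the construction with a fixed orthogonal block of rank $r-1$; for $r=1$ every admissible signal is $X=\xi\otimes\ketbra{\psi}{\psi}$ with $\xi\in\Sigma^n_s$ and $\ket{\psi}\in\CC^d$ a unit vector. First I would rewrite the projection as an optimization over the \emph{shared} pure state. Splitting the target $A$ into $n$ Hermitian blocks $A_i\in\CC^{d\times d}$ and using that, for fixed $\ket{\psi}$, the optimal coefficients are $\xi_i=\langle\psi|A_i|\psi\rangle$, the objective decouples block by block as
\begin{equation*}
	\min_{X\in\Omega^{n,d}_{s,1}}\|A-X\|_F^2 = \|A\|_F^2 - \max_{\|\psi\|=1}\ \max_{|S|\le s}\ \sum_{i\in S}\langle\psi|A_i|\psi\rangle^2 .
\end{equation*}
Hence a polynomial-time projection routine would in particular evaluate the right-hand maximum, and the whole task becomes choosing the blocks $A_i$ so that this maximum solves an \NP-hard problem.

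The key step is a Hermitian \emph{dilation gadget}. Given a Proposition instance $A'\in\RR^{d\times n}$ with columns $a_i$ and sparsity $s$, I would work in $\CC^{d+1}$ with a distinguished ancilla mode $\ket{0}$ and set $A_i=\tfrac12(\ket{0}\bra{a_i}+\ket{a_i}\bra{0})$, which is Hermitian. Writing a general pure state as $\ket{\psi}=\alpha\ket{0}+\ket{\chi}$ with $\ket{\chi}$ in the complementary $d$-dimensional subspace gives $\langle\psi|A_i|\psi\rangle=\Re(\bar\alpha\langle a_i|\chi\rangle)$, so optimizing the ancilla weight contributes the scalar factor $\alpha^2\|\chi\|^2\le\tfrac14$ and reduces the objective to $\tfrac14\max_{|S|\le s}\max_{\|v\|=1}\|(A'_{:,S})^{T}v\|_{\ell_2}^2=\tfrac14\max_{|S|\le s}\sigma_{\max}(A'_{:,S})^2$. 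This is exactly a strictly monotone function of the Proposition's value $\max_{|S|\le s}\sigma_{\max}(A'_{:,S})$. A projection oracle for $\Omega^{n,d+1}_{s,1}$ would therefore decide the Proposition, and hence sparsePCA, which establishes the hardness for general $s$.

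For the sharper claim that hardness persists at $s=n$---where block-sparsity is entirely absent and only the constraint that all blocks share the \emph{same} low-rank state remains, the very feature distinguishing $\Omega^{n,d}_{s,r}$ from the easily-projectable $\hat\Omega^{n,d}_{s,r}$---the support selection $S$ is no longer available and the projection collapses to maximizing the quartic form $\sum_{i=1}^n\langle\psi|A_i|\psi\rangle^2$ over pure states. Here I would replace the dilation gadget by a block family that encodes the sparse quadratic-form maximization directly into this quartic, so that the shared eigenvector $\ket{\psi}$ itself carries the combinatorial hardness rather than outsourcing it to the choice of $S$.

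I expect the main obstacle to be precisely this last point. The positivity, Hermiticity and trace constraints defining $\mathcal{D}^d_r$ tie the two rank-one factors of $\rho$ together (unlike the independent factors of a general $vw^{T}$), and the Frobenius fit \emph{squares} each correlator $\langle\psi|A_i|\psi\rangle$; the gadget must be engineered so that the resulting quartic nonetheless reproduces the source objective faithfully. This is most delicate in the $s=n$ regime, where the combinatorial difficulty can no longer be placed in a block-selection step. Once a correct gadget is in hand, verifying that the reduction runs in polynomial time and that the decision threshold $a$ of sparsePCA maps to a corresponding threshold on $\|A-X\|_F$ is routine.
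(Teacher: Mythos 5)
Your reduction for the generic sparse case is essentially correct and genuinely different from the paper's. For $r=1$ you decouple the blocks to get
\begin{equation*}
	\min_{X\in\Omega^{n,d}_{s,1}}\|A-X\|_F^2 \;=\; \|A\|_F^2 \,-\, \max_{\|\psi\|=1}\,\max_{|S|\le s}\,\sum_{i\in S}\sandwich{\psi}{A_i}{\psi}^2 ,
\end{equation*}
and your dilation gadget $A_i=\tfrac12\bigl(\ketn{0}\bran{a_i}+\ketn{a_i}\bran{0}\bigr)$ correctly turns this into $\tfrac14\max_{|S|\le s}\sigma_{\max}^2(A'_{:,S})$, which is sparsePCA of $A'^TA'$ (the phase argument goes through because the $a_i$ are real, and $\alpha^2\|\chi\|^2$ peaks at $1/4$). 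The paper instead takes $A$ with \emph{diagonal} blocks $\diag(A'_i)$ and uses unitary invariance together with Birkhoff's theorem to show that the projection value equals $\min_{\xi\in\Sigma^n_s,\,\lambda\in\Sigma^d_r}\|A'-\xi\lambda^T\|_2^2$, i.e.\ the bi-sparse rank-one projection of its preceding Proposition. Both routes settle arbitrary $s<n$; note, though, that your padding claim for general $r$ is asserted rather than proven.

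The genuine gap is the clause ``this still holds for $s=n$'', which your proposal explicitly leaves open. This is not a fringe case but the substantive half of the theorem: it asserts that even with no block sparsity at all, the mere requirement that all blocks be proportional to a \emph{common} low-rank state makes the projection hard --- exactly the feature separating $\Omega^{n,d}_{s,r}$ from the efficiently projectable $\hat\Omega^{n,d}_{s,r}$. The paper's diagonal gadget covers this case for free: after the Birkhoff step, the shared rank-$r$ state enters only through its $r$-sparse eigenvalue vector $\lambda$, so when $s=n$ (and $\xi$ is unconstrained) the sparsePCA sparsity is carried by the rank parameter $r$; this is the Proposition's one-sparse-factor case with the roles of the two factors exchanged. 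Your decision to fix $r=1$ removes precisely this resource: for $r=1$ and $s=n$ the projection collapses to maximizing the quartic $\sum_{i=1}^n\sandwich{\psi}{A_i}{\psi}^2$ over unit vectors, and the \NP{}-hardness of that problem is a genuinely different statement (of the flavour of $2\to 4$-norm hardness) which neither your sketch nor the paper's Proposition establishes. To close the gap you must either prove hardness of that quartic maximization directly --- the ``new gadget'' you only gesture at --- or give up the $r=1$ restriction and let the rank carry the sparsity, which is exactly what the paper's diagonal-block construction accomplishes.
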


We note that our result for exactly computing  straightforwardly generalizes to the case of approximating the target function up to constant relative error using results on the approximatability of SparsePCA~\cite{chan_approximability_2016}. 

\begin{proof}
	Suppose there existed an efficient algorithm that determines the objective value of the projection \eqref{eq:OmegaProjection}. To encode the SparsePCA problem, we choose an instance of $A$ as follows: Let $A' \in \RR^{n\times d}$ be a matrix and let $A'_i$ denote the $i$th row of $A$. Let $e_i$ be the basis vectors $(e_i)_j = \delta_{i,j}$, with $\delta_{i,j}$ the Kronecker symbol. We choose $A = \sum_{i=1}^n e_{i} \otimes \diag(A'_i)$, where $\diag(A'_i)$ denotes the diagonal matrix with the $i$th row of $A'$ on its diagonal. Furthermore, we define $a' = ( A'_1, \ldots, A'_n) \in \RR^{nd}$ to be the vector arising by concatenating all rows of $A$. 
	By definition an $X\in \Omega^{n,d}_{s,r}$ can be decomposed as $X = \xi \otimes \rho$ with $\xi \in \Sigma^n_s$ and $\rho \in H^d_r$. Let $\rho = U \diag(\lambda) U^\dagger$ the eigenvalue decomposition of $\rho$ with a suitable unitary $U \in U(n)$ and $\lambda$ the vector of its eigenvalues. Then, we can rewrite

	\begin{align*}
		\| A &- \xi \otimes \rho \|_{2}^{2} 
			=  \sum_{i=1}^{n} \| \diag(A'_{i}) - \xi_{i} \rho ) \|_{2}^{2}  \\
			&= \| \diag(a') - (\Id_n \otimes U) \diag(\xi \otimes \lambda) (\Id_n \otimes U^\dagger) \|_2^2 \\
			&= \| \diag(a') (\Id_n \otimes U) - (\Id_n \otimes U) \diag(\xi \otimes \lambda)\|_2^2 \\
			&= \sum_{i,j = 1}^{nd} |A'_i - (\xi\otimes \lambda)_j|^2 |(\Id_n \otimes U)_{i,j}|^2, 
	\end{align*}
	where we have used the unitary invariance of the $\ell_2$-norm in the third step. 
	We can introduce the doubly stochastic matrix $W$ with entries $W_{k,l} = |U_{k,l}|^2$ and relax the optimization to
	\begin{align}\label{eq:doublyStochasticIneq}
		&\min_{\xi \in \Sigma^n_s, \rho \in H^d_r} \|A - \xi \otimes \rho \|^2_2 \\
		&\quad \leq  \min_{W \in \mathrm{DS}^n,\, \xi \in \Sigma_n^s,\, \lambda \in \Sigma_r^d} \notag\\
		&\qquad\qquad\qquad\quad\sum_{i,j = 1}^{nd} |A'_i - (\xi\otimes \lambda)_j|^2 (\Id_n \otimes W)_{i,j},\notag
	\end{align}
	where $W$ is optimized over all doubly stochastic matrices $\mathrm{DS}^d \subset \CC^{d\times d}$. For $\sigma \in S_d$, a permutation of the symbols in $[d]$, we denote the corresponding permutation matrix by $\Pi_\sigma: \CC^d \to \CC^d$, $\xi \mapsto \Pi_\sigma \xi$ with $(\Pi_\sigma \xi)_i = \xi_{\sigma(i)}$. By Birkhoff's theorem, see e.g., 
	Ref.~\cite[Theorem II.2.3]{Bhatia:1997}, the set of extremal points of the convex set of  doubly stochastic matrices $\mathrm{DS}^d$ are the permutation matrices $\Pi_{S_d} = \{ \Pi_\sigma \mid \sigma \in S_d\}$. 

	Since the optimum is, hence, attained for a permutation matrix $W = \Pi_\sigma$ and $U_{i,j} = (\Pi_\sigma)^{1/2}_{i,j} = (\Pi_\sigma)_{i,j}$ is a unitary matrix, the inequality \eqref{eq:doublyStochasticIneq} is saturated. 
	Therefore, we conclude that  
	\begin{align*}
		&\min_{\xi \in \Sigma^n_s, \rho \in H^y_r} \| A' - \xi\otimes \rho \|^2_2 \\
		&\quad= \min_{\xi \in \Sigma^n_s, \lambda \in \Sigma^d_r, \sigma \in S_d} \|a' - \xi \otimes \Pi_\sigma \lambda \|_2^2 \\
		&\quad= \min_{\xi \in \Sigma^n_s, \lambda \in \Sigma^d_r } \|a' - \xi \otimes \lambda\|_2^2 \\
		&\quad= \min_{\xi \in \Sigma^n_s, \lambda \in \Sigma^d_r } \|A' - \xi \lambda^T\|_2^2.
	\end{align*}
	Thus, an algorithm calculating the projection onto $\Omega^{n,d}_{s,r}$ for the matrix $A$ 
	chosen here solves the SparsePCA problem for $A'$. We conclude that there exists no polynomial time algorithm for the problem. 
\end{proof}

\section{Convergence proof}
\label{sec:convergence:proof}
In this section we provide the proof of Theorem~\ref{thm:recGarant}. We first introduce a bit more notation. Consider $X \in \Omega^{n,d}_{s,r}$. 
By definition, it can be written as $X = \sum_{i=1}^n \xi_i e_i \otimes x_i$ with $\xi \in \Sigma^n_s$ and $x_i \in \mathcal D_r^d$ for all $i$. 
Let $Q_i$ be the projector onto the range of $x_i$. 
Furthermore, we set $Q_i = 0$ for all $i$ not in the support of $\xi$. 
Slightly overloading our notation, we define the projection of every `block' onto the range of the corresponding `block' of $X$ as
$\mc P_{\hat{\Omega}(X)}(Y) \coloneqq P_{\hat\Omega(X)} Y P_{\hat\Omega(X)}$ with
\begin{equation*}
	P_{\hat\Omega(X)} \coloneqq \diag(Q_1, \ldots, Q_n).
\end{equation*}
Note that the projection simultaneously projects onto the ``block-wise support'' of $X$. 

It is common and useful to rewrite the RIP inequalities such as in Definition~\ref{def:RIP} as an equivalent 
spectral condition of restrictions of $\mc A\ad \mc A$. 
\begin{proposition}
	Let $X \in \hat\Omega^{n,d}_{s,r}$ and $\mathcal A: \CC^{nd\times d} \to \RR^m$ a linear map. 
	Then the following two statements are equivalent: 
	\begin{enumerate}[a.)]
		\item $\snorm{\mc P_{\hat\Omega(X)}\circ (\Id - \mc A\ad \circ \mc A)\circ \mc P_{\hat\Omega(X)}} \leq \delta$.
		\item For all $Y \in \operatorname{range}{\mc P_{\hat\Omega(X)}}$ it holds that 
		\begin{equation}\label{prop:eq:fnormrip}
			(1-\delta)\fnorm{Y}^2 \leq \fnorm{\mc A(Y)}^2 \leq (1+\delta)\fnorm{Y}^2.
		\end{equation}
	\end{enumerate}
	\label{prop:spectralRIP}
\end{proposition}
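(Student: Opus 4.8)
The plan is to view both conditions as two encodings of the same quadratic form and to translate between them using the variational characterisation of the operator norm of a self-adjoint map.

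First I would record the structural facts about $\mc P_{\hat\Omega(X)}$. Since each $Q_i$ is an orthogonal (Hermitian) projector, $P_{\hat\Omega(X)} = \diag(Q_1,\dots,Q_n)$ is a Hermitian projector, and therefore the map $\mc P_{\hat\Omega(X)}\colon Y \mapsto P_{\hat\Omega(X)} Y P_{\hat\Omega(X)}$ is self-adjoint and idempotent with respect to the Hilbert--Schmidt inner product, i.e.\ an orthogonal projection onto $V \coloneqq \operatorname{range}\mc P_{\hat\Omega(X)}$. Both self-adjointness (the expression $\Tr(P Y\ad P Z)$ is symmetric under swapping $Y$ and $Z$) and idempotence (which uses $P^2=P$) follow by a one-line computation invoking cyclicity of the trace. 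As $\mc A\ad \mc A$, and hence $\Id - \mc A\ad \mc A$, are self-adjoint, the composed map $M \coloneqq \mc P_{\hat\Omega(X)}\circ(\Id - \mc A\ad \mc A)\circ \mc P_{\hat\Omega(X)}$ is self-adjoint as well.

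Next I would invoke the standard fact that for a self-adjoint operator $M$ on a finite-dimensional inner-product space one has $\snorm{M} = \sup_{\fnorm{Z}=1}\lvert\langle Z, M Z\rangle\rvert$. Using that $\mc P_{\hat\Omega(X)}$ is a self-adjoint idempotent, for every $Z$ I can write $\langle Z, MZ\rangle = \langle \mc P_{\hat\Omega(X)} Z,\, (\Id - \mc A\ad \mc A)\,\mc P_{\hat\Omega(X)} Z\rangle$, so the value of the quadratic form depends on $Z$ only through $Y \coloneqq \mc P_{\hat\Omega(X)} Z \in V$. Because $\mc P_{\hat\Omega(X)}$ restricts to the identity on $V$ and has operator norm one, the set $\{\mc P_{\hat\Omega(X)} Z : \fnorm{Z} \leq 1\}$ is precisely the unit ball of $V$; hence taking the supremum over $\fnorm{Z}=1$ coincides with taking it over $Y \in V$ with $\fnorm{Y}=1$.

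Finally I would expand the quadratic form as $\langle Y, (\Id - \mc A\ad \mc A) Y\rangle = \fnorm{Y}^2 - \lTwoNorm{\mc A(Y)}^2$. Statement (a) then reads $\lvert\, \fnorm{Y}^2 - \lTwoNorm{\mc A(Y)}^2 \rvert \leq \delta$ for all unit-norm $Y \in V$, which by homogeneity of this quadratic expression in $Y$ is equivalent to $\lvert\, \fnorm{Y}^2 - \lTwoNorm{\mc A(Y)}^2 \rvert \leq \delta\,\fnorm{Y}^2$ for all $Y \in V = \operatorname{range}\mc P_{\hat\Omega(X)}$, i.e.\ to the two-sided bound \eqref{prop:eq:fnormrip}. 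This closes the equivalence. The only step requiring genuine care is the reduction of the supremum over the full space to a supremum over $V$; everything else is a direct computation, so I do not anticipate a real obstacle here.
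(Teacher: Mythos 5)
Your proposal is correct and follows essentially the same route as the paper: both reduce statement (a) via the variational characterization of the operator norm of the self-adjoint map $\mc P_{\hat\Omega(X)}\circ(\Id - \mc A\ad\circ\mc A)\circ\mc P_{\hat\Omega(X)}$ to the quadratic form $\left|\fnorm{Y}^2 - \lTwoNorm{\mc A(Y)}^2\right| \leq \delta \fnorm{Y}^2$ on $\operatorname{range}\mc P_{\hat\Omega(X)}$, which is \eqref{prop:eq:fnormrip}. You merely spell out in more detail what the paper states in one line, namely that $\mc P_{\hat\Omega(X)}$ is a self-adjoint idempotent so the supremum over the full space restricts to the range of the projection.
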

\begin{proof}
	The inequality 
	\begin{align*}
		\delta &\geq \snorm{\mc P_{\hat\Omega(X)}\circ (\Id - \mc A\ad \circ \mc A)\circ \mc P_{\hat\Omega(X)}} \\
		&= \max_{Y \in \operatorname{range}{\mc P_{\hat\Omega(X)}}} \frac{
				|\langle Y, (\Id - \mc A\ad \circ \mc A) Y \rangle |
			}
			{
				\fnorm{Y}^2
			}
	\end{align*}
	holds if and only if for all $Y \in \operatorname{range}{\mc P_{\hat\Omega(X)}}$
	\begin{equation*}
		\delta \fnorm{Y}^2 \geq |\fnorm{Y}^2  - \fnorm{\mc A(Y)}^2|.
	\end{equation*}
	The last bound is equivalent to \eqref{prop:eq:fnormrip}.
\end{proof}

We will now prove the recovery guarantee, Theorem~\ref{thm:recGarant}. 
The derivation of recovery guarantees for the IHT algorithm follows largely the same blueprint 
developed in the original IHT proposal for sparse vectors \cite{BlumensathDavies:2008}, see 
also Ref.~\cite{FoucartRauhut:2013} for a detailed description of the proof.
Here, we are in addition in the comfortable position that Ref.~\cite{StrohmerWei:2017} already 
fleshed out the details of the recovery proof for an IHT algorithm for de-mixing low-rank matrices. 
However, in order to accommodate a non-trivial choice of the step width the proof of 
Ref.~\cite{StrohmerWei:2017} yields a slightly weaker result than what can be shown by a simpler argument for a fixed step width. 
Thus, we give a slightly simpler proof that carefully adapts the one given in Ref.~\cite{StrohmerWei:2017} to account for the additional sparsity constraint and uses a slightly more concise notation. 

\begin{proof}[Proof of Theorem~\ref{thm:recGarant}]
Let $X\in\hat{\Omega}_{s,r}^{n,d}$ be the matrix to be recovered. Let $X^l$ denote the $l$th iterate of the vector of matrices in the SDT algorithm (Algorithm~\ref{alg:SDTalgCompact}). Since the algorithm always involves a projection step onto $\hat{\Omega}_{s,r}^{n,d}$ the $l$th iterate $X^l$ is in $\hat{\Omega}_{s,r}^{n,d}$. Furthermore, we observe that $X+X^l+X^{l+1} \in \hat\Omega_{3s,3r}^{n,d}$. For convenience, we denote the projection onto the (``block-wise'') joint range and support of $X$, $X^l$ and $X^{l+1}$ simply by $\mathcal P^l \coloneqq \mathcal P_{\hat\Omega(X+X^l+X^{l+1})}$ and its orthogonal complement by $\mc P^l_\perp$. 
It is crucial for the proof to bound norm deviations restricted to the range of $\mathcal P^l$ as this 
eventually allows us to apply a RIP bound.

We want to show the convergence of the iterates of the algorithm $X^l$ to the correct solution $X$. 
In other words, we want to derive a bound of the form 
\begin{equation*}
	\fnorm{X^{l+1} - X} \leq \gamma \fnorm{X^{l} - X} 
\end{equation*}
with constant $\gamma < 1$. Note that by the theorem's assumption we set the step width to $\mu^l = 1$ 
and  omit the tangent space projection $P_{\mathcal T_{X^l}}$. 

We first derive the following consequence of the thresholding operation: 
Let $G^l \coloneqq \mc A\ad(y - \mc A(X^l)) = \mc A\ad\circ \mc A(X - X^l)$. By the definition of $X^{l+1}$ as the best approximation to $X^l + G^l$ in $\hat\Omega_{s,r}^{n,d}$ it holds that 
\begin{equation*}
\begin{split}
	\fnorm{X^{l+1} - \left[X^l + G^l \right]} 
	\leq \fnorm{X - \left[X^l + G^l \right]}. 
\end{split}
\end{equation*}
Since the parts of both sides of the inequality that are not in the kernel of $\mc P^l_\perp$ coincides, we get the same inequality also for the with $\mc P^l$ inserted
\begin{equation*}
\begin{split}
	&\fnorm{X^{l+1} - \left[X^l + \mc P^l(G^l) \right]} \\
	&\quad\leq \fnorm{X - \left[X^l + \mc P^l(G^l) \right]}.
\end{split} 
\end{equation*}
With the help of this inequality, we can bound 
\begin{equation}\label{eq:convProof:XminusXwithM}
	\begin{split}
		\fnorm{X^{l+1} - X} &\leq \fnorm{X^{l+1} - \left[X^l + \mc P^l(G^l) \right]} \\
		&\quad+ \fnorm{X - \left[X^l + \mc P^l(G^l) \right]} \\
		&\leq 2\fnorm{X - \left[X^l + \mc P^l(G^l) \right]} \\
		&= 2\fnorm{\mc M_{1} (X^l - X)}\\
	&\leq 2\norm{\mc M_{1}}_\infty \fnorm{X^l - X},
	\end{split}
\end{equation} 
where in the last step we used the definition of $G^l$, the fact that $\mathcal P^l$ acts trivially on $X^l - X$ and 
defined $\mc M_{1} \coloneqq \mc P^l\circ ( \Id - \mc A\ad \circ \mc A) \circ \mc P^l$. 
To arrive at the theorem's assertion, we now bound the spectral norm of $\mathcal M_{1}$ using the RIP property of $\mathcal A$ and Proposition~\ref{prop:spectralRIP}:
\begin{equation}\label{eq:convProof:M1}	
	\snorm{\mc M_{1}} = \snorm{\mc P^l\circ ( \Id - \mc A\ad \circ \mc A) \circ \mc P^l} \leq \delta_{3s,3r}
\end{equation}
since the range of $\mathcal P^l$ is in $\Omega^{n,d}_{3s,3r}$.
Using \eqref{eq:convProof:M1} in \eqref{eq:convProof:XminusXwithM} completes the proof.
\end{proof}
{Theorem~\ref{thm:recGarant} assumes that the input data $y$ for the SDT algorithm originate from a signal 
$X \in \hat\Omega_{s,r}$. In particular, we are assuming a bound on the block-sparsity $s$ and rank-$r$ of the blocks. In practice, one will often encounter the situation that the signal producing the data is not exactly sparse and of low-rank but rather well-approximated by a structured signal. 
It is straight-forward to also derive a model-robust version of Theorem~\ref{thm:recGarant}. 
We here briefly sketch the required modifications to the proof: 
Let $\tilde X \in \CC^{nd \times d}$ (an arbitrary signal without the hierarchical structure) and suppose that $y = \mathcal{A}(\tilde X) = \mathcal{A}(X) + \mathcal{A}(\tilde X - X)$ with $X = P_{\hat\Omega^{n,d}_{s,r}}(\tilde X)$ is the input to the SDT algorithm. 
In the proof, we can adapt the definition of $G^l$ to include the additional term involving the model-mismatch $\tilde X - X$.  
In the following steps, we can finally account for the model-mismatch term by an additional summand to the right-hand side of \eqref{eq:convProof:XminusXwithM} proportional to $\fnorm{\tilde X - X}$. In this way, we establish that $\lim_{l \to \infty} \fnorm{X^l - X} \leq C \fnorm{\tilde X - X}$ for some constant $C$, independent of $n$,$d$,$s$, and $r$.
We, thus, conclude that if the signal is violating the structure assumptions, the SDT algorithm converges to the projection of the signal onto the $\hat\Omega^{n,d}_{s,r}$ up to an accuracy proportional to the magnitude of the model-mismatch, measured as $\fnorm{\tilde X - P_{\hat\Omega^{n,d}_{s,r}}(\tilde X)}$.
}

\section{RIP guarantee for Hermitian random matrices}\label{app:RIPguarantees}
In this section we provide the proof of Theorem~\ref{thm:OmegaRIPGauss} that establishes 
the RIP condition for measurement matrices consisting of Hermitian matrices i.i.d.~drawn from the $\operatorname{GUE}$. 
Establishing RIP conditions for Gaussian matrices for a set of structured signals typically proceeds in two steps: 
One first derives a strong concentration result for a single signal in the set using standard concentration of measure. 
Second, one takes the union bound over the signal set with the help of an $\epsilon$-covering net construction to arrive at the uniform statement of RIP. We can readily adapt this strategy also to $\operatorname{GUE}$. 

For the first step, we derive a Gaussian-type concentration result, modifying a standard line of arguments for our example, 
see, e.g., Ref.~\cite{StrohmerWei:2017}. The result is summarized as the following lemma:
\begin{lemma}[Gaussian-type concentration]\label{lem:concentration}
	Let $X \in \hat\Omega^{n,d}_{s,r}$. Let $\{ {A}^{(k)}_{i}\}_{i=1, k=1}^{n,m}$ be a set of Hermitian matrices 
	drawn i.i.d.\ from the $\operatorname{GUE}$ and 
	$\mathcal{A}$ be the measurement operator defined by $\{ {A}^{(k)}_{i}\}_{i=1, k=1}^{n,m}$ via Eqs.~\eqref{eq:observed_data} and \eqref{eq:measurement_observable}. 
	Then, for $0 < \delta < 1$
	\begin{equation*}
		(1- \delta) \fnorm{X}^2 \leq \frac1m \lTwoNorm{\mathcal{A}(X)}^2 \leq (1 + \delta) \fnorm{X}^2 
	\end{equation*}
	with probability of at least $1 - 2 \e^{-m \delta^2 / C_\delta}$ and constant $C_\delta \geq 40$.
\end{lemma}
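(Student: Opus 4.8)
The plan is to reduce the claimed two-sided bound to a standard chi-squared concentration inequality, after first establishing that $\mathcal A(X)$ is, up to the overall scale $\fnorm{X}$, a vector of $m$ independent standard real Gaussians.

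First I would fix $X = \sum_{k=1}^n \xi_k\, e_k\otimes x_k \in \hat\Omega^{n,d}_{s,r}$ and inspect a single component $\mathcal A(X)^{(i)} = \sum_{k=1}^n \xi_k \langle A_k^{(i)}, x_k\rangle$, as given by Eqs.~\eqref{eq:observed_data} and \eqref{eq:measurement_observable}. Since each $A_k^{(i)}$ and each $x_k$ is Hermitian, every summand $\langle A_k^{(i)}, x_k\rangle = \Tr(A_k^{(i)} x_k)$ is real, and as a linear functional of the Gaussian entries of $A_k^{(i)}$ it is a centered real Gaussian. The one genuine computation is its variance: splitting into diagonal and off-diagonal contributions and using that, for the stated $\operatorname{GUE}$ normalization $A = \tfrac12(X+X\ad)$, the diagonal entries have unit variance while $\mathbb E|A_{ab}|^2 = 1$ off the diagonal, one finds $\mathbb E\big[\Tr(A_k^{(i)} x_k)^2\big] = \fnorm{x_k}^2$.

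Next I would assemble the blocks. Because the matrices $\{A_k^{(i)}\}$ are drawn i.i.d.\ across both $k$ and $i$, the summands $\xi_k\Tr(A_k^{(i)} x_k)$ are independent over $k$, so $\mathcal A(X)^{(i)}$ is centered Gaussian with variance $\sum_{k} \xi_k^2\fnorm{x_k}^2 = \fnorm{X}^2$; the components are likewise independent across $i$. Consequently $\|\mathcal A(X)\|_{\ell_2}^2/\fnorm{X}^2$ is a sum of $m$ i.i.d.\ squared standard Gaussians, i.e.\ it is chi-squared distributed with $m$ degrees of freedom, and the asserted inequality is exactly the event $\big|\tfrac1m\chi^2_m - 1\big|\le\delta$.

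Finally I would invoke standard chi-squared tail bounds (Laurent--Massart; see e.g.\ \cite{FoucartRauhut:2013}), namely $\Pr[\chi^2_m \ge m + 2\sqrt{mx}+2x]\le e^{-x}$ and $\Pr[\chi^2_m\le m - 2\sqrt{mx}]\le e^{-x}$. Setting $x = m\delta^2/C_\delta$ and checking that $2/\sqrt{C_\delta}+2/C_\delta\le 1$ for $C_\delta\ge 40$ and all $\delta\in(0,1)$ makes each deviation at most $m\delta$; a union bound over the two tails then produces the prefactor $2$ and the exponent $-m\delta^2/C_\delta$. I expect the only nontrivial step to be the variance identity $\mathbb E[\Tr(A_k^{(i)} x_k)^2]=\fnorm{x_k}^2$, which hinges entirely on tracking the $\operatorname{GUE}$ normalization correctly; the block independence is bookkeeping and the concentration step is off-the-shelf.
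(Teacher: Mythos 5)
Your proposal is correct, and it establishes the same key intermediate fact as the paper---each component $\mathcal A(X)^{(i)}$ is a centered real Gaussian with variance $\fnormn{X}^2$, independent across $i$---but it finishes by a genuinely different route. For the variance identity, the paper writes $A^{(i)}_k = P_\scrsym B^{(i)}_k$ for a full complex Gaussian matrix and uses that $\langle P_\scrsym B, x\rangle = \Re\{\langle B, x\rangle\}$ for Hermitian $x$, which makes the variance $\fnormn{x_k}^2$ immediate; your entrywise computation from the $\operatorname{GUE}$ statistics ($\operatorname{Var}(A_{aa})=1$, $\EE|A_{ab}|^2=1$) reaches the same conclusion, though it requires a little more care with the Hermitian symmetry (the constraint $A_{ba}=\bar A_{ab}$ correlates the two off-diagonal cross terms, which combine as $2\Re(A_{ab}x_{ba})$ with variance $2|x_{ab}|^2$, and the bookkeeping then closes). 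The substantive divergence is the concentration step: the paper centers the squares, $z_k = y_k^2 - \sigma^2$, bounds their moments via Stirling to verify a sub-exponential moment condition, and invokes the Bernstein inequality \cite[Theorem~7.30]{FoucartRauhut:2013}, tracking constants through the denominator $32\fnormn{X}^4 + 8\fnormn{X}^2\Delta \leq 40\,\fnormn{X}^4$; you instead observe that $\lTwonormn{\mathcal A(X)}^2/\fnormn{X}^2$ is \emph{exactly} $\chi^2_m$-distributed and apply the Laurent--Massart tails. Your route is shorter and in fact gives a stronger result: the condition $2/\sqrt{C_\delta} + 2\delta/C_\delta \leq 1$ already holds for $C_\delta \approx 7.5$, so verifying it at $C_\delta = 40$ is done with large margin, whereas the paper's constant is essentially tight for its Bernstein argument. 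What the paper's moment-plus-Bernstein template buys in exchange is robustness---it survives if exact Gaussianity is relaxed (e.g., to subgaussian measurement ensembles), which is why it is the standard pattern there and in Ref.~\cite{StrohmerWei:2017}. One small correction: the Laurent--Massart inequalities are not stated in \cite{FoucartRauhut:2013} in the form you quote (that book proves $\chi^2$-type concentration via moment bounds and Bernstein), so you should cite Laurent and Massart directly; the inequalities themselves, and your use of them including the union bound over the two tails, are sound.
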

Our proof essentially follows the argument of Ref.~\cite{StrohmerWei:2017} for Gaussian measurements and then exploits that the Hermitian blocks of the signal $X \in \hat\Omega^{n,d}_{s,r}$ only overlap with the Hermitian part of the Gaussian measurement matrix. 

\begin{proof}
Let $\mat X \in \hat\Omega^{n,d}_{s,r}$ and denote its $n$ $d\times d$ blocks by $x_i$. 
Consider a set $ \{ B_i^{(k)} \in \CC^{d\times d} \}_{k,i=1}^{m, n}$ of $m\cdot n$ $d \times d$ matrices with entries independently drawn from the complex-valued normal distribution. 
Let $A_i^{(k)} \coloneqq P_\scrsym{} B_i^{(k)}$ be corresponding matrices drawn from the 
$\operatorname{GUE}$ and $\mathcal A$ the corresponding measurement map. 
Since all blocks $x_i$ are Hermitian, we have
\begin{equation*}
\begin{split}
	\mathcal{A}(X)^{(k)} &= 
		\sum_{i=1}^n \langle A_i^{(k)}, x_i \rangle 
		= \sum_{i=1}^n \langle P_\scrsym{} B_i^{(k)}, x_i \rangle \\
		&= \sum_{i=1}^n \Re\{\langle  B_i^{(k)}, x_i \rangle\}.
\end{split}
\end{equation*}
Since all entries of $B_i^{(k)}$ are i.i.d.\ complex normal random variables and $x_i$ is Hermitian, 
$\Re\{\langle  B_i^{(k)}, x_i \rangle\}$ are i.i.d.~real random variables 
from the distribution $\mathcal{N}(0,\|x_i\|^2_F)$ for all $i$ and $k$.
We conclude that all entries $y_k = \mathcal A(X)^{(k)}$ of $\mathcal A(X)$ are Gaussian distributed with variance $\sigma^2 = \sum_i\|x_i\|^2_F = \|X\|^2_F$ and have even moments $\EE[{y_k}^{2t}] = 2^{-t} t! \binom{2t}{t} \sigma^{2t}$ \cite[Corollary~7.7]{FoucartRauhut:2013}. 
Correspondingly, the squared entries are sub-exponential random variables with mean $\EE[y_k^2] = \sigma^2$. We denote the associated centred sub-exponential variable as 
\begin{equation*}
	z_k \coloneqq y_k^2 - \sigma^2.
\end{equation*}
The moments of $z_k$ are bounded by
\begin{equation*}
	\EE[ |z_k|^t] \leq 2^t \EE[ | y_k|^{2t} ] = t! \binom{2t}{t} \sigma^{2t},
\end{equation*}
where the first inequality follows from the triangle and Jensen's inequality. 
The binomial can be upper bounded using Stirling's formula \cite[(C.13)]{FoucartRauhut:2013} by $\binom{2t}{t} = 4^t r_t /\sqrt{\pi t}$ with $r_t \leq \e^{1/(24t)}$. Thus, for $t\geq 2$ we have $\EE[|z_k|^t] \leq t! R^{t-2} \Sigma^2 / 2$ with $R = 4\sigma^2$ and $\Sigma^2 = \sqrt{2/\pi} \e^{1/48} 16 \sigma^4 \leq 0.815 \cdot 16 \sigma^4$. 
Controlling the moments of $z_k$ for $t\geq 2$, we can apply the Bernstein inequality \cite[Theorem 7.30]{FoucartRauhut:2013} and bound the probability that $\|\mathcal{A}(\mat X)\|_{\ell_2}^2$ varies by more than $\Delta > 0$ from its expectation value
\begin{align}
\begin{split}
\label{eq:bernsteinbound}
  &\Pr \left[\left|\frac{1}{m}\left\|\mathcal{A}(\mat{X})\right\|_{\ell_2}^2- \|\mat{X}\|_F^2 \right|\geq \Delta\right] \\
  &\quad\quad=\Pr\left[ 
  	\left|\sum_{k=1}^m z_k\right| \geq m\Delta 
  \right]
\\
&\quad\quad\leq 2 
	\exp\left[
		-\frac{m\Delta^2 /2}
			{\Sigma^2 + R \Delta}
	\right]
\\
&\quad\quad\leq 2
	\exp\left[
		\frac{
			- m \Delta^2 
		}{
			32 \|X\|_F^4 + 8 \|X\|_F^2 \Delta
		}
	\right].
\end{split}
\end{align}
Let $ \Delta = \delta \| \mat{X} \|_F^2 $ for some $ 0< \delta <1 $. Then we can rewrite the tail bound~\eqref{eq:bernsteinbound} as
\begin{equation}\begin{split}
\label{equ:ProbRIP}
&\PP\left[\left| \frac1m \left\|\mathcal{A}(\mat{X})\right\|_{\ell_2}^2-\|\mat{X}\|_F^2 \right|\ge \delta \|\mat{X}\|_F^2\right] \\
&\qquad\leq 2 \exp \left[- \frac{m\delta^2}{C_\delta}\right] 
\end{split}\end{equation}
with a constant $C_\delta \geq 40$.
Hence, the condition
\begin{align*}
(1-\delta) \|\mat{X}\|_F^2 \leq \frac1m \left\| \mathcal{A}(\mat{X})\right\|_{\ell_2}^2 \leq (1+\delta) \|\mat{X}\|_F^2
\end{align*}
holds with probability at least $1-2\e^{-m\delta^2/C_\delta}$.
\end{proof}

Note that by the homogeneity of the RIP condition it suffices to restrict ourselves to normalized elements of $\hat\Omega^{n,d}_{s,r}$ in the proof of Theorem~\ref{thm:OmegaRIPGauss}. 
In the following, we will therefore focus on the set
\begin{equation*}
  \bar{\Omega}^{n,d}_{s,r} \coloneqq \{ \mat X \in \hat\Omega^{n,d}_{s,r} \mid \|\mat X\|_F^2 = 1 \}.
\end{equation*}
To take a union bound over the set $\bar\Omega^{n,d}_{s,r}$ we need to bound the size of an 
$\epsilon$-net that covers the set $\bar\Omega^{n,d}_{s,r}$. 
An $\epsilon$-net $\mathcal S$ covering a set of matrices $\mathcal M \subset \CC^{nd \times d}$ is a 
finite subset of $\mathcal M$ such that for all $\mat X \in \mathcal M$ there exists 
$\bar{\mat X} \in \mathcal S$ such that $\| \mat X - \bar{\mat X} \|_F \leq \epsilon$. Our construction generalizes the construction of Ref.~\cite{StrohmerWei:2017}. 
Therein, a covering net for the set of normalized block-wise low-rank matrices $\bar\Omega^{n,d}_{n,r}$ was derived. 
We summarize the statement given in Ref.~\cite{StrohmerWei:2017} in the following lemma without giving a proof. 
\begin{lemma}[Covering $\bar\Omega^{n,d}_{n,r}$ \cite{StrohmerWei:2017}]
\label{lem:strohmercovering}
  For $\bar\Omega^{n,d}_{n,r}$ there exists an $\epsilon$-covering net $\mathcal S^{n,d}_r$ with cardinality bounded by $(9/\epsilon)^{(2d+1)nr}$.
\end{lemma}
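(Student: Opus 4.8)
The plan is to reconstruct the volumetric net construction of Cand\`es--Plan type, adapted to the block-Hermitian structure of $\bar\Omega^{n,d}_{n,r}$. Writing an element as a vertical stack of $n$ blocks $B_i \in \CC^{d\times d}$, each $B_i$ is Hermitian of rank at most $r$ and the stack obeys $\sum_{i=1}^n \fnorm{B_i}^2 = 1$. I would parametrize every block by its spectral decomposition $B_i = U_i \Lambda_i U_i\ad$, with $U_i \in \CC^{d\times r}$ having orthonormal columns (so $\snorm{U_i}\le 1$) and $\Lambda_i$ a real diagonal $r\times r$ matrix satisfying $\fnorm{\Lambda_i} = \fnorm{B_i}$. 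The key structural observation is that, because the blocks are Hermitian, a single factor $U_i$ suffices in place of an independent pair of singular-vector factors; this is precisely what produces the exponent $(2d+1)$ rather than $(4d+1)$ per block. The net $\mathcal S^{n,d}_r$ is then built as a product of one net for the joint eigenvalue data and one net per block for the factor $U_i$.

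For the cardinality count I would argue as follows. The concatenated eigenvalue vector $(\Lambda_1,\dots,\Lambda_n)$ lies in the unit ball of $\RR^{nr}$, since $\sum_i \fnorm{\Lambda_i}^2 \le 1$; this ball admits an $(\epsilon/3)$-net in $\ell_2$ of cardinality at most $(1+6/\epsilon)^{nr} \le (9/\epsilon)^{nr}$ by the standard volumetric bound, and its net points $\bar\Lambda$ again satisfy $\sum_i \fnorm{\bar\Lambda_i}^2 \le 1$. Each factor $U_i$ ranges over the spectral-norm ball of $\CC^{d\times r}$, a convex set of real dimension $2dr$, which admits an $(\epsilon/3)$-net in spectral norm of cardinality at most $(9/\epsilon)^{2dr}$; the product over the $n$ blocks contributes $(9/\epsilon)^{2dnr}$. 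Multiplying the two factors gives the total
\begin{equation*}
  (9/\epsilon)^{nr}\,(9/\epsilon)^{2dnr} = (9/\epsilon)^{(2d+1)nr},
\end{equation*}
as claimed; the constant $9$ is the familiar artefact of resolving each sub-net at scale $\epsilon/3$ and is not sharp.

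It then remains to check that this product net resolves $\bar\Omega^{n,d}_{n,r}$ at scale $\epsilon$. For a single block I would insert the mixed terms $\bar U_i \Lambda_i U_i\ad$ and $\bar U_i \bar\Lambda_i U_i\ad$ and use submultiplicativity of the Frobenius/operator norm together with $\snorm{U_i}, \snorm{\bar U_i}\le 1$ to obtain
\begin{equation*}
  \fnorm{B_i - \bar B_i} \le \snorm{U_i - \bar U_i}\,\fnorm{\Lambda_i} + \fnorm{\Lambda_i - \bar\Lambda_i} + \fnorm{\bar\Lambda_i}\,\snorm{U_i - \bar U_i}.
\end{equation*}
Squaring, applying $(a+b+c)^2 \le 3(a^2+b^2+c^2)$, summing over $i$, and using $\snorm{U_i - \bar U_i}\le\epsilon/3$ together with $\sum_i \fnorm{\Lambda_i}^2 \le 1$ and $\sum_i \fnorm{\bar\Lambda_i}^2 \le 1$ collapses the two eigenvector contributions to at most $(\epsilon/3)^2$ each, while the eigenvalue contribution is bounded by $\fnorm{\Lambda - \bar\Lambda}^2 \le (\epsilon/3)^2$ from the joint ball net. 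This yields $\fnorm{X - \bar X}^2 \le 3\cdot 3\,(\epsilon/3)^2 = \epsilon^2$, completing the resolution estimate.

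I expect the main obstacle to be exactly this final aggregation over blocks. A naive construction that simply takes the Cartesian product of $n$ independent per-block $\epsilon$-nets would accumulate an error of order $\sqrt n\,\epsilon$ in Frobenius norm, equivalently an extra $\sqrt n$ inside the base of the exponent, and would miss the stated bound. The way out is to exploit the joint normalization: covering all eigenvalues on a single ball in $\RR^{nr}$, and letting the eigenvector perturbations be reweighted by the summable block norms $\fnorm{\Lambda_i}^2$ and $\fnorm{\bar\Lambda_i}^2$, removes any dependence on $n$ and keeps the exponent at exactly $(2d+1)nr$. The only remaining care is the bookkeeping of the numerical constants so that the three sub-net resolutions combine into the single factor $9/\epsilon$, following Ref.~\cite{StrohmerWei:2017}.
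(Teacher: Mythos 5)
Your construction is correct and is essentially the argument the paper relies on: the paper states this lemma without proof, citing Ref.~\cite{StrohmerWei:2017} and remarking that the proof ``lifts'' the Cand\`es--Plan low-rank net of Ref.~\cite{CandesPlan:2011} via the triangle inequality, which is precisely your scheme of a joint $(\epsilon/3)$-net on the eigenvalue ball in $\RR^{nr}$, per-block $(\epsilon/3)$-nets of the spectral-norm ball for the single Hermitian factor $U_i$ (whence the exponent $(2d+1)r$ per block), and the $\|\Lambda_i\|_F$-weighted aggregation that avoids the $\sqrt{n}$ loss you correctly identify as the pitfall of naive per-block covering. The only cosmetic caveat is that your net points $\bar U_i\bar\Lambda_i\bar U_i\ad$ lie in a slight enlargement of $\bar\Omega^{n,d}_{n,r}$ (Hermitian blocks of rank at most $r$, total Frobenius norm at most one, not necessarily sign-definite or normalized) rather than in the set itself as the paper's definition of a covering net formally requires; this matches the laxity of Refs.~\cite{CandesPlan:2011,StrohmerWei:2017}, is harmless downstream because the concentration bound of Lemma~\ref{lem:concentration} only uses Hermitian blocks, and could anyway be repaired at the cost of a constant by re-centering each net point onto a nearest element of the set.
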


The proof of Lemma~\ref{lem:strohmercovering} basically lifts the result of an $\epsilon$-net for low-rank matrices of Ref.~\cite{CandesPlan:2011} to the set $\bar\Omega^{n,d}_{n,r}$ using the triangle inequality. \\

We can combine multiple $\epsilon$-nets for $\bar\Omega^{s,d}_{s,r}$ to construct an $\epsilon$-covering net for the set $\bar\Omega^{n,d}_{s,r}$ of block-sparse matrix vectors with low-rank blocks. 
The bound on the cardinality of the resulting $\epsilon$-covering net is given in the following lemma: 

\begin{lemma}[Bound on the cardinality of a covering net]\label{lem:covering}
  For $\bar\Omega^{n,d}_{s,r}$ there exists an $\epsilon$-covering net $\mathcal S^{n,d}_{s,r}$ of cardinality bounded by $\binom{n}{s}(9/\epsilon)^{(2d+1)sr}$. 
  Furthermore, for each $\mat X = [ \mat X_1, \ldots , \mat X_n] \in \bar\Omega^{n,d}_{s,r}$ there exists $\bar{\mat X} = [ \bar{\mat X}_1, \ldots, \bar{\mat X}_n] \in \mathcal S^{n,d}_{s,r}$ such that $\| \mat X - \bar{\mat  X}\|_F \leq \epsilon$ and $\|\bar{\mat X}_k\|_F = 0$ for all $k$ for which $\|\mat X_k\|_F = 0$.
\end{lemma}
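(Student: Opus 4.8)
The plan is to bootstrap the block-dense covering net of Lemma~\ref{lem:strohmercovering} to the block-sparse setting by fixing the block support and then taking a union over all admissible supports. First I would fix an index set $S \subseteq [n]$ with $|S| = s$ and consider the subset $\bar\Omega_S \subseteq \bar\Omega^{n,d}_{s,r}$ of those normalized signals whose non-vanishing blocks all lie in $S$. Deleting the $n-s$ forced-zero blocks identifies $\bar\Omega_S$ isometrically (in Frobenius norm) with $\bar\Omega^{s,d}_{s,r}$; note that the latter genuinely contains elements with some vanishing blocks, corresponding to signals whose support is a proper subset of $S$. Lemma~\ref{lem:strohmercovering}, applied with $n$ replaced by $s$, then yields an $\epsilon$-covering net $\mathcal S_S$ of $\bar\Omega_S$ with $|\mathcal S_S| \le (9/\epsilon)^{(2d+1)sr}$ whose elements are supported on $S$.

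Next I would set $\mathcal S^{n,d}_{s,r} \coloneqq \bigcup_{|S| = s} \mathcal S_S$, the union running over all $\binom{n}{s}$ size-$s$ supports. The cardinality bound $|\mathcal S^{n,d}_{s,r}| \le \binom{n}{s}(9/\epsilon)^{(2d+1)sr}$ is then immediate. For the covering property, given an arbitrary $X \in \bar\Omega^{n,d}_{s,r}$ I would let $T \coloneqq \{k : \|X_k\|_F \neq 0\}$ be its block support, which satisfies $|T| \le s$, and choose any $S \supseteq T$ with $|S| = s$; then $X \in \bar\Omega_S$, so some $\bar X \in \mathcal S_S \subseteq \mathcal S^{n,d}_{s,r}$ obeys $\fnorm{X - \bar X} \le \epsilon$.

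The delicate part, and the step I expect to be the main obstacle, is the \emph{furthermore} assertion that $\bar X$ can be taken to vanish on every block on which $X$ vanishes. The key observation I would use is that zeroing out a block of any candidate cover at an index $k$ with $X_k = 0$ can only \emph{decrease} the Frobenius error: the block-$k$ contribution to $\fnorm{X - \bar X}^2$ equals $\|\bar X_k\|_F^2$, which is replaced by $0$. It thus suffices to arrange that these zeroed-out variants already belong to $\mathcal S_S$, which I would secure by appealing to the block-product structure underlying Lemma~\ref{lem:strohmercovering}: that net is assembled from per-block Cand\`es--Plan nets for the rank-$r$ unit ball, each of which contains the zero matrix, so selecting the zero factor in every block $k \in S \setminus T$ produces an element of $\mathcal S_S$ that is supported on $T$ and still $\epsilon$-close to $X$. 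Including the zero matrix in each per-block net inflates the per-block count by one, which is absorbed by the slack in the base constant $9$ and leaves the stated exponent unchanged, completing the argument.
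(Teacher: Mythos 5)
Your construction is correct and shares the paper's skeleton---decompose by block support, apply Lemma~\ref{lem:strohmercovering} per support, take a union bound---but you secure the \emph{furthermore} clause by a genuinely different mechanism. The paper unions over \emph{all} supports $\Gamma$ with $|\Gamma|\leq s$ (not only $|\Gamma|=s$), so that a signal $\mat X$ with support $T$ is covered by the net $\mathcal S^{T}_r$ adapted to its \emph{exact} support; the vanishing of $\bar{\mat X}$ outside $T$ is then automatic ``by construction,'' and Lemma~\ref{lem:strohmercovering} can be used as a black box. You instead restrict to $|S|=s$ and repair the overshoot by zeroing the blocks of $\bar{\mat X}$ on $S\setminus T$---your observation that this only decreases the block-diagonal Frobenius error $\sum_k \fnormn{\mat X_k-\bar{\mat X}_k}^2$ is correct---but membership of the zeroed variant in the net then forces you to open up Lemma~\ref{lem:strohmercovering} and posit that it is a product of per-block nets each containing (or augmented by) the zero matrix. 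That assumption is consistent with how the paper describes the lemma's proof (lifting the Cand\`es--Plan nets block-wise via the triangle inequality), but it is not guaranteed by the lemma's statement, and the augmentation costs up to a factor $2$ per block, i.e.\ $2^s\leq 2^{(2d+1)sr}$ overall, so the base constant degrades from $9$ to at most $18$ unless you verify slack in the per-block count; since the bound only enters the RIP proof through its logarithm, this is harmless, but the literal bound of the lemma would need that check. In exchange, your exact-size-$s$ union delivers the stated $\binom{n}{s}$ count precisely, whereas the paper's union over all $|\Gamma|\leq s$ actually comprises $\sum_{t\leq s}\binom{n}{t}$ sets and its claimed $\binom{n}{s}$ count is itself loose by a factor of order $s+1$---a sloppiness that, like your constant, washes out in the application.
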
 

\begin{proof} 
Let $\Gamma \subset [n]$ with $|\Gamma| \leq s$, i.e.,\ the indices of the support of an $s$-sparse vector. The set
  \begin{equation*}
    \bar\Omega^\Gamma_r \coloneqq \left\{
      \sum_{i \in \Gamma} \xi_i e_i \otimes x_i 
      \ \middle| \ 
      \xi_i \in \RR,\ 
      x_i \in \mathcal D^d_r
    \right\} \subset \bar\Omega^{n,d}_{s,r}
  \end{equation*}
  shall consist of all elements of $\bar\Omega^{n,d}_{s,r}$ which have non-vanishing blocks only supported on $\Gamma$. 
  To each element of $\bar\Omega^\Gamma_r$, 
  we can associate an element of $\bar\Omega^{s,d}_{s,r}$ 
  by omitting the vanishing blocks in the matrix vector and vice versa. 
  By virtue of Lemma~\ref{lem:strohmercovering} we thus know that  $\bar\Omega^\Gamma_r$ has a covering net $\mathcal S^\Gamma_r$ of cardinality bounded by $(9/\epsilon)^{(2d+1)sr}$. 

  We can decompose the entire set $\bar\Omega^{n,d}_{s,r}$ as 
  \begin{equation*}
    \bar\Omega^{n,d}_{s,r} = \bigcup_{\Gamma \subset [n], |\Gamma|\leq s} \bar\Omega^\Gamma_r,
  \end{equation*}
  and thus, the set 
  \begin{equation*}
     \mathcal S^{n,d}_{s,r} = \bigcup_{\Gamma \subset [n], |\Gamma|\leq s}  \mathcal S^\Gamma_r
   \end{equation*} 
   is an $\epsilon$-covering net for $\bar\Omega^{n,d}_{s,r}$. 
   The union is taken over $\binom ns$ different sets. Thus, the cardinality of $\mathcal S^{n,d}_{s,r}$ is upper bounded by $\binom{n}{s}(9/\epsilon)^{(2d+1)sr}$. 
   The second statement follows by construction. 
\end{proof}

We are now in the position to prove Theorem~\ref{thm:OmegaRIPGauss}. 

\begin{proof}[Proof of Theorem ~\ref{thm:OmegaRIPGauss}]
The proof proceeds in two steps. 
First, we prove the RIP for elements of the $\epsilon$-covering net $\mathcal S^{n,d}_{s,r}$ of $\bar \Omega^{n,d}_{s,r}$. 
To do so, we combine the concentration result of Lemma~\ref{lem:concentration} and the 
union bound of Lemma~\ref{lem:covering} to establish uniform concentration. 
In a second step, following Ref.~\cite{StrohmerWei:2017}, 
we then use the definition of an $\epsilon$-covering net to show that for elements $\mat X \in \bar\Omega^{n,d}_{s,r}$ that are close enough to an element of the net, the RIP condition still holds.

\emph{Step 1:} Taking the union bound over the $\epsilon$-net $\mathcal{S}^{n,d}_{s,r}$ constructed in Lemma~\ref{lem:covering} and using the result of Lemma~\ref{lem:concentration} in the form of \eqref{equ:ProbRIP} with constant $C_\delta \geq 40$ we get
\begin{equation}
\label{equ:unionBound}
\begin{split}
	&\Pr\left( \max_{\substack{\mat{X}\in \mathcal{S}^{n,d}_{s,r}}}\left|\frac{1}{m}\|\mathcal{A}(\mat{X})\|_{\ell_2}^2-\|\mat{X}\|_F^2 \right| 
	\quad\geq \delta/2 \right) \\
	&\quad\leq 2| \mathcal{S}^{n,d}_{s,r}| \e^{-m\delta^2/(4C_\delta)} \\
	&\quad\leq 2\binom{n}{s}\left(\frac{9}{\epsilon}\right)^{(2d+1)sr} e^{-m\delta^2/(4C_\delta)}.
 \end{split}
\end{equation}
The aim is to find a lower bound for the number of measurements $m$ for which the probability \eqref{equ:unionBound} small. 
To this end, we rewrite
\begin{align}
\label{equ:calcm}
\begin{split}
	&2\binom{n}{s}\left(\frac{9}{\epsilon}\right)^{(2d+1)sr} e^{-m\delta^2/(4C_\delta)} \\
	&\quad\leq 2\exp\left[
		s\ln\frac{\e n} s
		+ (2d + 1) s r \ln\frac{9} \epsilon
		- \frac {m\delta^2}{4C_\delta}
	\right] \\
	&\quad\leq \tau ,
\end{split}
\end{align}
using 
$
\binom{n}{s}\leq \left( \frac{en}{s} \right)^s 
$
 \cite[Lemma~C.5]{FoucartRauhut:2013}.
The latter inequality becomes true under the condition that 
\begin{equation}\label{eq:mconditionNet}
	m \geq \frac{4 C_\delta}{\delta^2}
		\left[
			s \ln 
				\frac{\e n}
				{s}
			+ (2d + 1) sr \ln 
				\frac 9\epsilon 
			+ \ln
				\frac2\tau
		\right].
\end{equation}
Assuming that \eqref{eq:mconditionNet} holds, 
we have established the RIP condition for the $\epsilon$-net $\mathcal S^{n,d}_{s,r}$, i.e., for all vectors $\overline{\mat{X}} \in \mathcal S^{n,d}_{s,r}$ it holds that 
\begin{equation}
\label{equ:RIPforNet}
(1-\delta/2)\|\overline{\mat{X}}\|_F^2\leq\|\mathcal{A}(\overline{\mat{X}})\|_{\ell_2}^2 \leq(1+\delta/2)\|\overline{\mat{X}}\|_F^2
\end{equation}
with probability at least $1-\tau$. 

\emph{Step 2:} Let us now transfer the RIP of $\mathcal S^{n,d}_{s,r}$ to the entire set $\bar\Omega^{n,d}_{s,r}$ while keeping the error under control.
To this end, we choose the net parameter $\epsilon$ as $
	\frac{\delta}{4\sqrt{2}}
$. %
By definition of an $\epsilon$-net, for elements $\mat{X}\in\bar\Omega^{n,d}_{s,r}$, there exists an element $\overline{\mat{X}}\in\mathcal S^{n,d}_{s,r}$ such that 
\begin{equation}
\label{equ:epsilonCond}
\|\mat{X}-\overline{\mat{X}}\|_F\leq\frac{\delta}{4\sqrt{2}}.
\end{equation}
To prove the RIP for the set $\bar\Omega^{n,d}_{s,r}$ we need to bound $\|\mathcal{A}(\mat{X})\|_F$ from above and below. 

We start with the upper bound, making use of Eq.~\eqref{equ:RIPforNet}: 
\begin{equation}\label{equ:telescoping}
\begin{split}
  \|\mathcal{A}(\mat{X})\|_{\ell_2} 
&\leq\|\mathcal{A}(\overline{\mat{X}})\|_{\ell_2}+\|\mathcal{A}(\mat{X}-\overline{\mat{X}})\|_{\ell_2} \\
&\leq 1+\frac\delta 2+\|\mathcal{A}(\mat{X}-\overline{\mat{X}})\|_{\ell_2}.
\end{split}
\end{equation}
Now $\|\mathcal{A}(\mat{X}-\overline{\mat{X}})\|_{\ell_2}$ has to be bounded from above. 
We use that by the second statement of Lemma~\ref{lem:covering} the block supports of $\mat X$ and $\overline {\mat X}$ coincide. 
Therefore, $\mat{X}-\overline{\mat{X}}$ has also $s$ non-vanishing blocks that have rank of at most $2r$. 
We can, thus, decompose $\mat{X}-\overline{\mat{X}} = B + C$ in terms of orthogonal matrices $\mat B, \mat C \in \hat\Omega^{n,d}_{s,r}$ that obey $\langle\mat{B}, \mat{C}\rangle=0$.
In particular, $\mat B$ and $\mat C$ have the same block support as $\mat X$.
Let us define
\begin{equation*}
    \kappa_{s,r} \coloneqq \sup_{\mat X \in \bar\Omega^{n,d}_{s,r}} \| \mathcal A(\mat X) \|_{\ell_2}. 
\end{equation*}
Then we get 
using homogeneity 
\begin{equation*}
	\begin{split}
	\|\mathcal{A}&(\mat{X}-\overline{\mat{X}})\|_{\ell_2} 
	 \leq 
	\|\mathcal{A}(\mat{B})\|_{\ell_2} 
	+ \|\mathcal{A}(\mat{C})\|_{\ell_2} \\
	&\leq \kappa_{s,r} (\|\mat{B}\|_F + \| \mat C \|_F)
	\leq \sqrt{2}\, \kappa_{s,r} \sqrt{\|\mat B\|_F^2 + \| \mat C \|_F^2} \\
	&= \sqrt{2}\, \kappa_{s,r} \|\mat X - \bar{\mat X}\|_F,
  \end{split}
\end{equation*}
where the last step makes use of the orthogonality of $B$ and $C$.
Together with \eqref{equ:epsilonCond} it follows that
\begin{equation}\label{equ:differenceupperbound}
\|\mathcal{A}(\mat{X}-\overline{\mat{X}})\|_{\ell_2} \leq \frac{\delta\cdot \kappa_{s,r}}{4} . 
\end{equation}
It remains to derive an upper bound for $\kappa_{s,r}$. 
To this end, we use that, by definition, $\kappa_{s,r}$ is the best upper bound of the left-hand side of \eqref{equ:telescoping}. 
Inserting \eqref{equ:differenceupperbound} into the right-hand side of \eqref{equ:telescoping}, we find the condition
\begin{equation}\label{eq:kappaCondition}
   \kappa_{s,r} \leq 1 + \frac{\delta}{2} + \frac{\delta\cdot \kappa_{s,r}}{4}. 
\end{equation}
Solving for $\kappa_{s,r}$, Eq.~\eqref{eq:kappaCondition} implies for $0 < \delta < 1$ 
\begin{equation}
\label{eq:kappaBound}
\begin{split}
	\kappa_{s,r} 
		\leq 
			\frac{1 + \delta/2}{ 1 - \delta/4}
		\leq 1 + \delta. 
\end{split}\end{equation}
Altogether, this yields the desired upper bound
\begin{equation*}
    \|\mathcal{A}(\mat{X})\|_{\ell_2} 
    \leq 
    	 1 + \frac3 4 \delta + \frac{\delta^2} 4
    \leq 1 + \delta, 
\end{equation*}
for $\delta< 1$. 
The lower bound is analogously obtained by combining the inequality 
\begin{align*}
\|\mathcal{A} (\mat{X})\|_{\ell_2}  & \geq \|\mathcal{A}(\overline{\mat{X}})\|_{\ell_2}-\|\mathcal{A}(\mat{X}-\overline{\mat{X}})\|_{\ell_2}\\
& \geq 1-\delta/2-\|\mathcal{A}(\mat{X}-\overline{\mat{X}})\|_{\ell_2}
\end{align*}
with \eqref{equ:differenceupperbound}  
\eqref{eq:kappaBound} to arrive at
\begin{equation*}
\|\mathcal{A}(\mat{X})\|_{\ell_2} \geq  1-\delta/2-\delta(1+\delta)/4 \geq 1-\delta.
\end{equation*}
With the choice of $\epsilon$, we can rewrite the condition \eqref{eq:mconditionNet} on $m$ as 
\begin{equation*}
	m \geq \frac{C}{\delta^2}
		\left[ 
			s  \ln \frac{\e n} s
			+ (2d +1) sr \ln \frac c {\delta} 
			+ \ln\frac 2 \tau
		\right]
\end{equation*}
with constants $C \geq 4 C_\delta \geq 160$ and $c \geq 36 \sqrt 2 \geq 51 $.
This completes the proof. \\
\end{proof}

\tocless\section{Acknowledgements}
We thank David Gross, Steven T.~Flammia, Christian Krumnow, Robin Harper, Yi-Kai Liu, and Carlos A.~Riofrio 
for inspiring discussions and helpful comments.
Furthermore, we are grateful to Alireza Seif and Nobert Linke for valuable comments on realistic error models 
and Peter Jung for making us aware of Ref.~\cite{StrohmerWei:2017}. 
We are grateful to Susane Calegari for kindly providing drawings used in Figure~\ref{fig:viciouscycle}.
 This work has been supported by the DFG (specifically SPP1798 CoSIP, but also  EI 519/9-1, EI 519/7-1,
 CRC 183, as well as under Germany's Excellence Strategy - The Berlin Mathematics
Research Center MATH+, EXC-2046/1, project ID: 390685689),  the BMBF (DAQC, MUNIQC-ATOMS), 
and the Munich Quantum Valley (MQV-K8). 
It has also received funding from the Templeton Foundation and from the European Union's Horizon 2020 research and innovation programme 
 (PASQuanS2, Millenion).
D.~H.~acknowledges funding from the U.S.~Department of Defense through a QuICS Hartree fellowship. 
\vfill


\bibliographystyle{quantum}

\begin{thebibliography}{10}

\bibitem{Preskill:2018}
J.~Preskill.
\newblock ``Quantum {c}omputing in the {NISQ} era and beyond''.
\newblock \href{https://dx.doi.org/10.22331/q-2018-08-06-79}{{Quantum} {\bf 2},
  79}~(2018).

\bibitem{Roadmap}
A.~Acin, I.~Bloch, H.~Buhrman, T.~Calarco, C.~Eichler, J.~Eisert, D.~Esteve,
  N.~Gisin, S.~J. Glaser, F.~Jelezko, S.~Kuhr, M.~Lewenstein, M.~F. Riedel,
  P.~O. Schmidt, R.~Thew, A.~Wallraff, I.~Walmsley, and F.~K. Wilhelm.
\newblock ``{The European quantum technologies roadmap}''.
\newblock \href{https://dx.doi.org/10.1088/1367-2630/aad1ea}{New J. Phys. {\bf
  20}, 080201}~(2017).
\newblock  \href{http://arxiv.org/abs/1712.03773}{arXiv:1712.03773}.

\bibitem{EisertEtAl:2019}
J.~{Eisert}, D.~{Hangleiter}, N.~{Walk}, I.~{Roth}, R.~{Markham},
  D.and~{Parekh}, U.~{Chabaud}, and E.~{Kashefi}.
\newblock ``Quantum certification and benchmarking''.
\newblock \href{https://dx.doi.org/10.1038/s42254-020-0186-4}{Nature Rev. Phys.
  {\bf 2}, 382--390}~(2020).
\newblock  \href{http://arxiv.org/abs/1910.06343}{arXiv:1910.06343}.

\bibitem{BoiIsaSme16}
S.~{Boixo}, S.~V. {Isakov}, V.~N. {Smelyanskiy}, R.~{Babbush}, N.~{Ding},
  Z.~{Jiang}, M.~J. {Bremner}, J.~M. {Martinis}, and H.~{Neven}.
\newblock ``Characterizing quantum supremacy in near-term devices''.
\newblock \href{https://dx.doi.org/10.1038/s41567-018-0124-x}{Nature Phys. {\bf
  14}, 595--600}~(2018).
\newblock  \href{http://arxiv.org/abs/1608.00263}{arXiv:1608.00263}.

\bibitem{SupremacyReview}
D.~Hangleiter and J.~Eisert.
\newblock ``Computational advantage of quantum random sampling''~(2023).
\newblock  \href{http://arxiv.org/abs/2206.04079}{arXiv:2206.04079}.

\bibitem{EmersonAlickiZyczkowski:2005}
J.~Emerson, R.~Alicki, and K.~Życzkowski.
\newblock ``Scalable noise estimation with random unitary operators''.
\newblock \href{https://dx.doi.org/10.1088/1464-4266/7/10/021}{J. Opt. B {\bf
  7}, S347}~(2005).

\bibitem{KniLeiRei08}
E.~{Knill}, D.~{Leibfried}, R.~{Reichle}, J.~{Britton}, R.~B. {Blakestad},
  J.~D. {Jost}, C.~{Langer}, R.~{Ozeri}, S.~{Seidelin}, and D.~J. {Wineland}.
\newblock ``Randomized benchmarking of quantum gates''.
\newblock \href{https://dx.doi.org/10.1103/PhysRevA.77.012307}{Phys. Rev. A
  {\bf 77}, 012307}~(2008).
\newblock  \href{http://arxiv.org/abs/0707.0963}{arXiv:0707.0963}.

\bibitem{MagesanGambettaEmerson:2011}
E.~{Magesan}, J.~M. {Gambetta}, and J.~{Emerson}.
\newblock ``Scalable and robust randomized benchmarking of quantum processes''.
\newblock \href{https://dx.doi.org/10.1103/PhysRevLett.106.180504}{\prl {\bf
  106}, 180504}~(2011).
\newblock  \href{http://arxiv.org/abs/1009.3639}{arXiv:1009.3639}.

\bibitem{HelsenEtAl:2020:General}
J.~Helsen, I.~Roth, E.~Onorati, A.H. Werner, and J.~Eisert.
\newblock ``General framework for randomized benchmarking''.
\newblock \href{https://dx.doi.org/10.1103/PRXQuantum.3.020357}{PRX Quantum
  {\bf 3}, 020357}~(2022).

\bibitem{MerGamSmo13}
S.~T. {Merkel}, J.~M. {Gambetta}, J.~A. {Smolin}, S.~{Poletto}, A.~D.
  {C{\'o}rcoles}, B.~R. {Johnson}, C.~A. {Ryan}, and M.~{Steffen}.
\newblock ``Self-consistent quantum process tomography''.
\newblock \href{https://dx.doi.org/10.1103/PhysRevA.87.062119}{Phys. Rev. A
  {\bf 87}, 062119}~(2013).
\newblock  \href{http://arxiv.org/abs/1211.0322}{arXiv:1211.0322}.

\bibitem{BluKinNie13}
R.~{Blume-Kohout}, J.~{King Gamble}, E.~{Nielsen}, J.~{Mizrahi}, J.~D. {Sterk},
  and P.~{Maunz}.
\newblock ``{Robust, self-consistent, closed-form tomography of quantum logic
  gates on a trapped ion qubit}''~(2013).
\newblock  \href{http://arxiv.org/abs/1310.4492}{arXiv:1310.4492}.

\bibitem{BranczykEtAl:2012}
A.~M. Bra{\'{n}}czyk, D.~H. Mahler, L.~A. Rozema, A.~Darabi, A.~M. Steinberg,
  and D.~F.~V. James.
\newblock ``Self-calibrating quantum state tomography''.
\newblock \href{https://dx.doi.org/10.1088/1367-2630/14/8/085003}{New J. Phys.
  {\bf 14}, 085003}~(2012).

\bibitem{GrossLiuFlammia:2010}
D.~Gross, Y.-K. Liu, S.~T. Flammia, S.~Becker, and J.~Eisert.
\newblock ``Quantum state tomography via compressed sensing''.
\newblock \href{https://dx.doi.org/10.1103/PhysRevLett.105.150401}{Phys. Rev.
  Lett. {\bf 105}, 150401}~(2010).

\bibitem{FlammiaGrossLiu:2012}
S.~T. Flammia, D.~Gross, Y.-K. Liu, and J.~Eisert.
\newblock ``Quantum tomography via compressed sensing: error bounds, sample
  complexity and efficient estimators''.
\newblock \href{https://dx.doi.org/10.1088/1367-2630/14/9/095022}{New J. Phys.
  {\bf 14}, 095022}~(2012).
\newblock  \href{http://arxiv.org/abs/1205.2300}{arXiv:1205.2300}.

\bibitem{KalevKosutDeutsch:2015}
A.~Kalev, R.~L. Kosut, and I.~H. Deutsch.
\newblock ``Quantum tomography protocols with positivity are compressed sensing
  protocols''.
\newblock \href{https://dx.doi.org/10.1038/npjqi.2015.18}{npj Quant. Inf. {\bf
  1}, 15018}~(2015).
\newblock  \href{http://arxiv.org/abs/1502.00536}{arXiv:1502.00536}.

\bibitem{RiofrioEtAl:2017}
C.~A. Riofrio, D.~Gross, S.~T. Flammia, T.~Monz, D.~Nigg, R.~Blatt, and
  J.~Eisert.
\newblock ``Experimental quantum compressed sensing for a seven-qubit system''.
\newblock \href{https://dx.doi.org/10.1038/ncomms15305}{Nature Comm. {\bf 8},
  15305}~(2017).

\bibitem{SteffensEtAl:2017}
A.~Steffens, C.~A. Riofrio, W.~McCutcheon, I.~Roth, B.~A. Bell, A.~McMillan,
  M.~S. Tame, J.~G. Rarity, and J.~Eisert.
\newblock ``Experimentally exploring compressed sensing quantum tomography''.
\newblock \href{https://dx.doi.org/10.1088/2058-9565/aa6ae2}{Quant. Sc. Tech.
  {\bf 2}, 025005}~(2017).

\bibitem{BaraniukCevherDuarteHegde:2010}
R.~G. Baraniuk, V.~Cevher, M.~F. Duarte, and C.~Hegde.
\newblock ``Model-based compressive sensing''.
\newblock \href{https://dx.doi.org/10.1109/TIT.2010.2040894}{IEEE Trans. Inf.
  Th. {\bf 56}, 1982--2001}~(2010).

\bibitem{FoucartRauhut:2013}
S.~Foucart and H.~Rauhut.
\newblock ``A mathematical introduction to compressive sensing''.
\newblock \href{https://dx.doi.org/10.1007/978-0-8176-4948-7}{Springer}.
  Berlin~(2013).

\bibitem{BlumensathDavies:2008}
T.~Blumensath and M.~E. Davies.
\newblock ``Iterative thresholding for sparse approximations''.
\newblock \href{https://dx.doi.org/10.1007/s00041-008-9035-z}{J. Four. An. App.
  {\bf 14}, 629--654}~(2008).

\bibitem{StrohmerWei:2017}
T.~Strohmer and K.~Wei.
\newblock ``Painless breakups-efficient demixing of low rank matrices''.
\newblock \href{https://dx.doi.org/10.1007/s00041-017-9564-4}{J. Four. Ana.
  App. {\bf 25}, 1--31}~(2019).

\bibitem{RothEtAl:2016}
I.~Roth, M.~Kliesch, A.~Flinth, G.~Wunder, and J.~Eisert.
\newblock ``{Reliable recovery of hierarchically sparse signals for Gaussian
  and Kronecker product measurements}''.
\newblock \href{https://dx.doi.org/10.1109/TSP.2020.3003453}{IEEE Trans. Sig.
  Proc. {\bf 68}, 4002--4016}~(2020).
\newblock  \href{http://arxiv.org/abs/1612.07806}{arXiv:1612.07806}.

\bibitem{pawlowski_semi-device-independent_2011}
M.~Pawłowski and N.~Brunner.
\newblock ``Semi-device-independent security of one-way quantum key
  distribution''.
\newblock \href{https://dx.doi.org/10.1103/PhysRevA.84.010302}{Phys. Rev. A
  {\bf 84}, 010302}~(2011).

\bibitem{liang_semi-device-independent_2011}
Y.-C. Liang, T.~Vértesi, and N.~Brunner.
\newblock ``Semi-device-independent bounds on entanglement''.
\newblock \href{https://dx.doi.org/10.1103/PhysRevA.83.022108}{Phys. Rev. A
  {\bf 83}, 022108}~(2011).

\bibitem{li_semi-device-independent_2011}
H.-W. Li, Z.-Q. Yin, Y.-C. Wu, X.-B. Zou, S.~Wang, W.~Chen, G.-C. Guo, and
  Z.-F. Han.
\newblock ``Semi-device-independent random-number expansion without
  entanglement''.
\newblock \href{https://dx.doi.org/10.1103/PhysRevA.84.034301}{Phys. Rev. A
  {\bf 84}, 034301}~(2011).

\bibitem{li_semi-device-independent_2012}
H.-W. Li, M.~Pawłowski, Z.-Q. Yin, G.-C. Guo, and Z.-F. Han.
\newblock ``Semi-device-independent randomness certification using
  $n\rightarrow1$ quantum random access codes''.
\newblock \href{https://dx.doi.org/10.1103/PhysRevA.85.052308}{Phys. Rev. A
  {\bf 85}, 052308}~(2012).

\bibitem{gallego_device-independent_2010}
R.~Gallego, N.~Brunner, C.~Hadley, and A.~Acin.
\newblock ``Device-{independent} {tests} of {classical} and {quantum}
  {dimensions}''.
\newblock \href{https://dx.doi.org/10.1103/PhysRevLett.105.230501}{Phys. Rev.
  Lett. {\bf 105}, 230501}~(2010).

\bibitem{Mogilevtsev:2010}
D.~Mogilevtsev.
\newblock ``Calibration of single-photon detectors using quantum statistics''.
\newblock \href{https://dx.doi.org/10.1103/PhysRevA.82.021807}{Phys. Rev. A
  {\bf 82}, 021807}~(2010).

\bibitem{MogilevtsevEtAl:2009}
D.~Mogilevtsev, J.~\ifmmode \check{R}\else
  \v{R}\fi{}eh\'a\ifmmode~\check{c}\else \v{c}\fi{}ek, and Z.~Hradil.
\newblock ``Relative tomography of an unknown quantum state''.
\newblock \href{https://dx.doi.org/10.1103/PhysRevA.79.020101}{Phys. Rev. A
  {\bf 79}, 020101}~(2009).

\bibitem{MogilevtsevEtAl:2012}
D.~Mogilevtsev, J.~{\v{R}}eh{\'{a}}{\v{c}}ek, and Z.~Hradil.
\newblock ``Self-calibration for self-consistent tomography''.
\newblock \href{https://dx.doi.org/10.1088/1367-2630/14/9/095001}{New J. Phys.
  {\bf 14}, 095001}~(2012).

\bibitem{sim_proper_2019}
J.~Y. Sim, J.~Shang, H.~K. Ng, and B.-G. Englert.
\newblock ``Proper error bars for self-calibrating quantum tomography''.
\newblock \href{https://dx.doi.org/10.1103/PhysRevA.100.022333}{\pra {\bf 100},
  022333}~(2019).

\bibitem{Stark:2012}
C.~Stark.
\newblock ``Simultaneous {estimation} of {dimension}, {states} and
  {measurements}: {Computation} of representative density matrices and
  {POVMs}''~(2012).
\newblock  \href{http://arxiv.org/abs/1210.1105}{arXiv:1210.1105}.

\bibitem{Stark:2014}
C.~Stark.
\newblock ``Self-consistent tomography of the state-measurement {Gram}
  matrix''.
\newblock \href{https://dx.doi.org/10.1103/PhysRevA.89.052109}{Phys. Rev. A
  {\bf 89}, 052109}~(2014).
\newblock  \href{http://arxiv.org/abs/1209.5737}{arXiv:1209.5737}.

\bibitem{rehacek_operational_2010}
J.~Řeháček, D.~Mogilevtsev, and Z.~Hradil.
\newblock ``Operational {tomography}: {Fitting} of {data} {patterns}''.
\newblock \href{https://dx.doi.org/10.1103/PhysRevLett.105.010402}{Phys. Rev.
  Lett. {\bf 105}, 010402}~(2010).

\bibitem{motka_efficient_2014}
L.~Motka, B.~Stoklasa, J.~Rehacek, Z.~Hradil, V.~Karasek, D.~Mogilevtsev,
  G.~Harder, C.~Silberhorn, and L.~L. Sánchez-Soto.
\newblock ``Efficient algorithm for optimizing data-pattern tomography''.
\newblock \href{https://dx.doi.org/10.1103/PhysRevA.89.054102}{Phys. Rev. A
  {\bf 89}, 054102}~(2014).

\bibitem{ferrie_quantum_2014}
C.~Ferrie.
\newblock ``Quantum model averaging''.
\newblock \href{https://dx.doi.org/10.1088/1367-2630/16/9/093035}{New J. Phys.
  {\bf 16}, 093035}~(2014).

\bibitem{Gre15}
D.~{Greenbaum}.
\newblock ``{Introduction to quantum gate set tomography}''~(2015).
\newblock  \href{http://arxiv.org/abs/1509.02921}{arXiv:1509.02921}.

\bibitem{BluGamNie17}
R.~{Blume-Kohout}, J.~K. {Gamble}, E.~{Nielsen}, K.~{Rudinger}, J.~{Mizrahi},
  K.~{Fortier}, and P.~{Maunz}.
\newblock ``{Demonstration of qubit operations below a rigorous fault tolerance
  threshold with gate set tomography}''.
\newblock \href{https://dx.doi.org/10.1038/ncomms14485}{Nature Comm. {\bf 8},
  14485}~(2017).
\newblock  \href{http://arxiv.org/abs/1605.07674}{arXiv:1605.07674}.

\bibitem{cerfontaine_self-consistent_2019}
P.~Cerfontaine, R.~Otten, and H.~Bluhm.
\newblock ``Self-consistent calibration of quantum-gate sets''.
\newblock \href{https://dx.doi.org/10.1103/PhysRevApplied.13.044071}{Phys. Rev.
  Appl. {\bf 13}, 044071}~(2020).
\newblock  \href{http://arxiv.org/abs/1906.00950}{arXiv:1906.00950}.

\bibitem{BriegerRothKliesch:2023:cGST}
R.~Brieger, I.~Roth, and M.~Kliesch.
\newblock ``Compressive gate set tomography''.
\newblock \href{https://dx.doi.org/10.1103/PRXQuantum.4.010325}{PRX Quantum
  {\bf 4}, 010325}~(2023).

\bibitem{Gross:2011}
D.~Gross.
\newblock ``Recovering low-rank matrices from few coefficients in any basis''.
\newblock \href{https://dx.doi.org/10.1109/TIT.2011.2104999}{IEEE Trans. Inf.
  Th. {\bf 57}, 1548--1566}~(2011).
\newblock  \href{http://arxiv.org/abs/0910.1879}{arXiv:0910.1879}.

\bibitem{Liu:2011}
Y.-K. Liu.
\newblock ``Universal low-rank matrix recovery from {P}auli measurements''.
\newblock \href{https://dx.doi.org/10.48550/arXiv.1103.2816}{{A}dv. {N}eural
  {I}nf. {P}rocess. {S}yst. {\bf 24}, 1638--1646}~(2011).
\newblock  \href{http://arxiv.org/abs/1103.2816}{arXiv:1103.2816}.

\bibitem{Kueng:2015}
R.~Kueng.
\newblock ``Low rank matrix recovery from few orthonormal basis measurements''.
\newblock In Sampling Theory and Applications (SampTA), 2015 International
  Conference on.
\newblock \href{https://dx.doi.org/10.1109/SAMPTA.2015.7148921}{Pages
  402--406}.
\newblock ~(2015).

\bibitem{KabanavaEtAl:2016}
M.~Kabanava, R.~Kueng, H.~Rauhut, and U.~Terstiege.
\newblock ``Stable low-rank matrix recovery via null space properties''.
\newblock \href{https://dx.doi.org/10.1093/imaiai/iaw014}{Inf. Inf. {\bf 5},
  405--441}~(2016).

\bibitem{ShabaniEtAl:2011}
A.~Shabani, R.~L. Kosut, M.~Mohseni, H.~Rabitz, M.~A. Broome, M.~P. Almeida,
  A.~Fedrizzi, and A.~G. White.
\newblock ``Efficient measurement of quantum dynamics via compressive
  sensing''.
\newblock \href{https://dx.doi.org/10.1103/PhysRevLett.106.100401}{Phys. Rev.
  Lett. {\bf 106}, 100401}~(2011).

\bibitem{KimLiu15}
S.~Kimmel and Y.~K. Liu.
\newblock ``Phase retrieval using unitary 2-designs''.
\newblock In 2017 International Conference on Sampling Theory and Applications
  (SampTA).
\newblock \href{https://dx.doi.org/10.1109/SAMPTA.2017.8024414}{Pages
  345--349}.
\newblock ~(2017).
\newblock  \href{http://arxiv.org/abs/1510.08887}{arXiv:1510.08887}.

\bibitem{RothEtAl:2018}
I.~Roth, R.~Kueng, S.~Kimmel, Y.-K. Liu, D.~Gross, J.~Eisert, and M.~Kliesch.
\newblock ``Recovering quantum gates from few average gate fidelities''.
\newblock \href{https://dx.doi.org/10.1103/PhysRevLett.121.170502}{Phys. Rev.
  Lett. {\bf 121}, 170502}~(2018).
\newblock  \href{http://arxiv.org/abs/1803.00572}{arXiv:1803.00572}.

\bibitem{WunderBocheStrohmerJung:2015}
G.~Wunder, H.~Boche, T.~Strohmer, and P.~Jung.
\newblock ``{Sparse signal processing concepts for efficient 5G system
  design}''.
\newblock \href{https://dx.doi.org/10.1109/ACCESS.2015.2407194}{IEEE Acc. {\bf
  3}, 195--208}~(2015).

\bibitem{RothEtAl:iTwist:2016}
I.~Roth, M.~Kliesch, G.~Wunder, and J.~Eisert.
\newblock ``Reliable recovery of hierarchically sparse signals''.
\newblock In Proceedings of the third ``international traveling workshop on
  interactions between sparse models and technology'' (iTWIST'16).
\newblock ~(2016).
\newblock  \href{http://arxiv.org/abs/1609.04167}{arXiv:1609.04167}.

\bibitem{AhmedRechtRomberg:2014}
A.~Ahmed, B.~Recht, and J.~Romberg.
\newblock ``Blind {deconvolution} {using} {convex} {programming}''.
\newblock \href{https://dx.doi.org/10.1109/TIT.2013.2294644}{IEEE Trans. Inf.
  Th. {\bf 60}, 1711--1732}~(2014).

\bibitem{OymakFazelEldarHassibi:2015}
S.~Oymak, A.~Jalali, M.~Fazel, Y.~C. Eldar, and B.~Hassibi.
\newblock ``Simultaneously {structured} {models} {with} {application} to
  {sparse} and {low}-{rank} {matrices}''.
\newblock \href{https://dx.doi.org/10.1109/TIT.2015.2401574}{IEEE Trans. Inf.
  Th. {\bf 61}, 2886--2908}~(2015).

\bibitem{NeedellTropp:2008}
D.~Needell and J.~A. Tropp.
\newblock ``{CoSaMP: Iterative signal recovery from incomplete and inaccurate
  samples}''.
\newblock \href{https://dx.doi.org/10.1145/1859204.1859229}{Appl. Comp. Harm.
  An. {\bf 26}, 301}~(2008).

\bibitem{Foucart:2011}
S.~Foucart.
\newblock ``{Hard thresholding pursuit: An algorithm for compressive
  sensing}''.
\newblock \href{https://dx.doi.org/10.1137/100806278}{SIAM J. Num. An. {\bf
  49}, 2543--2563}~(2011).

\bibitem{Magdon-Ismail:2017}
M.~Magdon-Ismail.
\newblock ``{NP}-hardness and inapproximability of sparse {PCA}''.
\newblock \href{https://dx.doi.org/10.1016/j.ipl.2017.05.008}{Inf. Proc. Lett.
  {\bf 126}, 35--38}~(2017).

\bibitem{berthet_complexity_2013}
Q.~Berthet and P.~Rigollet.
\newblock ``Complexity {theoretic} {lower} {bounds} for {sparse} {principal}
  {component} {detection}''.
\newblock In Conference on {Learning} {Theory}.
\newblock Pages 1046--1066.
\newblock ~(2013).
\newblock  url:~\url{http://proceedings.mlr.press/v30/Berthet13.html}.

\bibitem{berthet_optimal_2013}
Q.~Berthet and P.~Rigollet.
\newblock ``Optimal detection of sparse principal components in high
  dimension''.
\newblock \href{https://dx.doi.org/10.1214/13-AOS1127}{Ann. Statist. {\bf 41},
  1780--1815}~(2013).

\bibitem{brennan_optimal_2019}
M.~Brennan and G.~Bresler.
\newblock ``Optimal {average}-{case} {reductions} to {sparse} {PCA}: {From}
  {weak} {assumptions} to {strong} {hardness}''.
\newblock In 32nd Annual Conference on Learning Theory.
\newblock Volume~99 of Proceedings of Machine Learning Research.
\newblock ~(2019).
\newblock  \href{http://arxiv.org/abs/1902.07380}{arXiv:1902.07380}.

\bibitem{chan_approximability_2016}
S.~O. Chan, D.~Papailliopoulos, and A.~Rubinstein.
\newblock ``On the {approximability} of {sparse} {PCA}''.
\newblock In {PMLR}.
\newblock Volume~49, pages 623--646.
\newblock ~(2016).
\newblock  \href{http://arxiv.org/abs/1507.05950}{arXiv:1507.05950}.

\bibitem{WunderEtAl:2018}
G.~Wunder, I.~Roth, R.~Fritschek, B.~Gro{\ss}, and J.~Eisert.
\newblock ``Secure massive {IoT} using hierarchical fast blind deconvolution''.
\newblock In 2018 {IEEE} Wireless Communications and Networking Conference
  Workshops, {WCNC} 2018 {Workshops}, {Barcelona}, {Spain}, April 15-18, 2018.
\newblock \href{https://dx.doi.org/10.1109/WCNCW.2018.8369038}{Pages 119--124}.
\newblock ~(2018).
\newblock  \href{http://arxiv.org/abs/1801.09628}{arXiv:1801.09628}.

\bibitem{FourcartEtAl:2019}
S.~{Foucart}, R.~{Gribonval}, L.~{Jacques}, and H.~{Rauhut}.
\newblock ``{Jointly low-rank and bisparse recovery: Questions and partial
  answers}''~(2019).
\newblock  \href{http://arxiv.org/abs/1902.04731}{arXiv:1902.04731}.

\bibitem{SprechmannEtAl:2010}
P.~Sprechmann, I.~Ramirez, G.~Sapiro, and Y.~Eldar.
\newblock ``Collaborative hierarchical sparse modeling''.
\newblock In 2010 44th {Annual} {Conference} on {Information} {Sciences} and
  {Systems} ({CISS}).
\newblock \href{https://dx.doi.org/10.1109/CISS.2010.5464845}{Pages 1--6}.
\newblock ~(2010).

\bibitem{FriedmanEtAl:2010}
J.~Friedman, T.~Hastie, and R.~Tibshirani.
\newblock ``{A note on the group Lasso and a sparse group Lasso}''~(2010).
\newblock  \href{http://arxiv.org/abs/1001.0736}{arXiv:1001.0736}.

\bibitem{SprechmannEtAl:2011}
P.~Sprechmann, I.~Ramirez, G.~Sapiro, and Y.~C. Eldar.
\newblock ``C-{HiLasso}: {A} {collaborative} {hierarchical} {sparse} {modeling}
  {framework}''.
\newblock \href{https://dx.doi.org/10.1109/TSP.2011.2157912}{IEEE Trans. Sig.
  Proc. {\bf 59}, 4183--4198}~(2011).

\bibitem{SimonEtAl:2013}
N.~Simon, J.~Friedman, T.~Hastie, and R.~Tibshirani.
\newblock ``A {sparse}-{group} {Lasso}''.
\newblock \href{https://dx.doi.org/10.1080/10618600.2012.681250}{J. Comp.
  Graph. Stat. {\bf 22}, 231--245}~(2013).

\bibitem{RothEtAl:2018:Kronecker}
I.~{Roth}, A.~{Flinth}, R.~{Kueng}, J.~{Eisert}, and G.~{Wunder}.
\newblock ``{Hierarchical restricted isometry property for Kronecker product
  measurements}''.
\newblock In 2018 56th Annual Allerton Conference on Communication, Control,
  and Computing (Allerton).
\newblock \href{https://dx.doi.org/10.1109/ALLERTON.2018.8635829}{Pages
  632--638}.
\newblock ~(2018).

\bibitem{HierarchyIEEE}
A.~Flinth, B.~Gro{\ss}, I.~Roth, J.~Eisert, and G.~Wunder.
\newblock ``Hierarchical isometry properties of hierarchical measurements''.
\newblock \href{https://dx.doi.org/10.1016/j.acha.2021.12.006}{Appl. Comp.
  Harm. An. {\bf 58}, 27--49}~(2022).
\newblock  \href{http://arxiv.org/abs/2005.10379}{arXiv:2005.10379}.

\bibitem{WunderRothFritschekEisert:2017}
G.~Wunder, I.~Roth, R.~Fritschek, and J.~Eisert.
\newblock ``{HiHTP: A custom-tailored hierarchical sparse detector for massive
  MTC}''.
\newblock In 2017 51st Asilomar Conference on Signals, Systems, and Computers.
\newblock \href{https://dx.doi.org/10.1109/ACSSC.2017.8335701}{Pages
  1929--1934}.
\newblock ~(2017).

\bibitem{WunderRothFritschekEisert:2018:MTC}
G.~Wunder, I.~Roth, R.~Fritschek, and J.~Eisert.
\newblock ``{Performance of hierarchical sparse detectors for massive
  MTC}''~(2018).
\newblock  \href{http://arxiv.org/abs/1806.02754}{arXiv:1806.02754}.

\bibitem{WunderEtAl:2018:MIMO:WSA}
G.~Wunder, I.~Roth, M.~Barzegar, A.~Flinth, S.~Haghighatshoar, G.~Caire, and
  G.~Kutyniok.
\newblock ``Hierarchical sparse channel estimation for massive mimo''.
\newblock In WSA 2018; 22nd International ITG Workshop on Smart Antennas.
\newblock \href{https://dx.doi.org/}{Pages 1--8}.
\newblock VDE~(2018).

\bibitem{WunderEtAl:2019:TWC}
G.~Wunder, S.~Stefanatos, A.~Flinth, I.~Roth, and G.~Caire.
\newblock ``{Low-overhead hierarchically-sparse channel estimation for
  multiuser wideband massive MIMO}''.
\newblock \href{https://dx.doi.org/10.1109/TWC.2019.2900637}{IEEE Trans. Wire.
  Comm. {\bf 18}, 2186--2199}~(2019).

\bibitem{Golub}
G.~H. Golub and C.~F. van Loan.
\newblock ``Matrix computations''.
\newblock \href{https://dx.doi.org/10.2307/3619868}{The Johns Hopkins
  University Press}. Baltimore~(1989).

\bibitem{HalkoEtAl:2011}
N.~Halko, P.-G. Martinsson, and J.~A. Tropp.
\newblock ``Finding structure with randomness: Probabilistic algorithms for
  constructing approximate matrix decompositions''.
\newblock \href{https://dx.doi.org/10.1137/090771806}{SIAM Rev. {\bf 53},
  217--288}~(2011).

\bibitem{Hoare:1961}
C.~A.~R. Hoare.
\newblock ``Algorithm 65: {Find}''.
\newblock \href{https://dx.doi.org/10.1145/366622.366647}{Commun. ACM {\bf 4},
  321--322}~(1961).

\bibitem{WeiEtAl:2016}
K.~Wei, J.-F. Cai, T.~F. Chan, and S.~Leung.
\newblock ``{Guarantees of Riemannian optimization for low rank matrix
  recovery}''.
\newblock \href{https://dx.doi.org/10.1137/15M1050525}{SIAM J. Mat. An. App.
  {\bf 37}, 1198--1222}~(2016).

\bibitem{AbsilSepulchre:2009}
P.-A. Absil, R.~Mahony, and R.~Sepulchre.
\newblock ``Optimization algorithms on matrix manifolds''.
\newblock \href{https://dx.doi.org/10.1515/9781400830244}{Princeton University
  Press}. ~(2009).

\bibitem{Vandereycken:2013}
B.~Vandereycken.
\newblock ``{Low-rank matrix completion by Riemannian optimization}''.
\newblock \href{https://dx.doi.org/10.1137/110845768}{SIAM J. Opt. {\bf 23},
  1214--1236}~(2013).

\bibitem{BluemsathDavies:2009}
T.~Blumensath and M.~E. Davies.
\newblock ``Sampling theorems for signals from the union of finite-dimensional
  linear subspaces''.
\newblock \href{https://dx.doi.org/10.1109/TIT.2009.2013003}{IEEE Trans. Inf.
  Theory {\bf 55}, 1872--1882}~(2009).

\bibitem{bandeira_certifying_2013}
A.~S. Bandeira, E.~Dobriban, D.~G. Mixon, and W.~F. Sawin.
\newblock ``Certifying the {restricted} {isometry} {property} is {hard}''.
\newblock \href{https://dx.doi.org/10.1109/TIT.2013.2248414}{IEEE Trans. Inf.
  Th. {\bf 59}, 3448--3450}~(2013).

\bibitem{numerics_repo}
J.~Wilkens, D.~Hangleiter, and I.~Roth~(2020).
\newblock {Gitlab} repository at
  \url{https://gitlab.com/wilkensJ/blind-quantum-tomography}.

\bibitem{Bhatia:1997}
R.~Bhatia.
\newblock ``Matrix analysis''.
\newblock \href{https://dx.doi.org/10.1007/978-1-4612-0653-8}{Graduate texts in
  mathematics}. Springer. New York~(1997).

\bibitem{CandesPlan:2011}
E.~J. Candes and Y.~Plan.
\newblock ``Tight oracle inequalities for low-rank matrix recovery from a
  minimal optnumber of noisy random measurements''.
\newblock \href{https://dx.doi.org/10.1109/TIT.2011.2111771}{IEEE Trans. Inf.
  Th. {\bf 57}, 2342--2359}~(2011).

\end{thebibliography}


\onecolumngrid
\end{document}